\documentclass[acmsmall,submission]{acmart}

%%
%% \BibTeX command to typeset BibTeX logo in the docs
\AtBeginDocument{%
  }

%% Rights management information.  This information is sent to you
%% when you complete the rights form.  These commands have SAMPLE
%% values in them; it is your responsibility as an author to replace
%% the commands and values with those provided to you when you
%% complete the rights form.
\setcopyright{acmlicensed}
\copyrightyear{2018}
\acmYear{2018}
\acmDOI{XXXXXXX.XXXXXXX}

%%
%% These commands are for a JOURNAL article.
\acmJournal{JACM}
\acmVolume{37}
\acmNumber{4}
\acmArticle{111}
\acmMonth{8}

%%
%% Submission ID.
%% Use this when submitting an article to a sponsored event. You'll
%% receive a unique submission ID from the organizers
%% of the event, and this ID should be used as the parameter to this command.
%%\acmSubmissionID{123-A56-BU3}

%%
%% For managing citations, it is recommended to use bibliography
%% files in BibTeX format.
%%
%% You can then either use BibTeX with the ACM-Reference-Format style,
%% or BibLaTeX with the acmnumeric or acmauthoryear sytles, that include
%% support for advanced citation of software artefact from the
%% biblatex-software package, also separately available on CTAN.
%%
%% Look at the sample-*-biblatex.tex files for templates showcasing
%% the biblatex styles.
%%

%%
%% The majority of ACM publications use numbered citations and
%% references.  The command \citestyle{authoryear} switches to the
%% "author year" style.
%%
%% If you are preparing content for an event
%% sponsored by ACM SIGGRAPH, you must use the "author year" style of
%% citations and references.
%% Uncommenting
%% the next command will enable that style.
%%\citestyle{acmauthoryear}

%%
%% end of the preamble, start of the body of the document source.

\usepackage{graphicx}
\usepackage{amsmath}
\usepackage{amsfonts}
\usepackage{subcaption}
\usepackage{xspace}
\usepackage{booktabs}
\usepackage{multirow}
\usepackage{thmtools}		
\usepackage{mleftright}
\usepackage{hyperref}
\usepackage{cleveref}
\usepackage{xcolor}
\usepackage{enumitem}
\usepackage{thm-restate}
\usepackage{tabularx}
\usepackage{nicefrac}
\usepackage{mathtools}

\setlist[enumerate, 1]{labelindent=\parindent, labelwidth=0.5em, leftmargin=!, align=left, topsep=1pt}
\setlist[itemize, 1]{labelindent=\parindent, labelwidth=0.5em, leftmargin=!, align=left, topsep=1pt}

% define mathematical functions

% mathematical settings
\def\m+#1{\ensuremath{\mathbf{#1}}\xspace}
\def\v+#1{\ensuremath{\mathbf{#1}}\xspace}
\def\t+#1{\ensuremath{\tilde{\mathbf{#1}}}\xspace}
\def\e+#1{\ensuremath{#1}\xspace}
\def\arrow+#1{\ensuremath{\overrightarrow{#1}}\xspace}
\def\set+#1{\ensuremath{\mathbb{#1}}\xspace}
\def\arrow+#1{\ensuremath{\overrightarrow{#1}}\xspace}

\DeclareMathOperator*{\argmax}{arg\,max}
\newcommand{\NP}{\ensuremath{\mathbf{NP}}\xspace}

\newcommand{\NPhard}{\NP-hard\xspace}

\newcommand{\bigO}{\ensuremath{\mathcal{O}}\xspace}
 % usual graph

\newcommand{\abs}[1]{\ensuremath{\lvert #1 \rvert }\xspace}

%% Labels the equations within the section
\numberwithin{equation}{section}
% the theorems

\newtheorem{corollary}{Corollary}
\newtheorem{lemma}{Lemma}

\newtheorem{definition}{Definition}
\newtheorem{problem}{Problem}

\newtheorem{example}{Example}
\newtheorem{observation}{Observation}

% Define solution environment

%% Colors

\newcommand{\sj}[1]{{\color{blue}Sijing: #1}}
\newcommand{\honglian}[1]{{\color{red}Honglian: #1}}

\newcommand{\edit}[1]{{\color{black} #1}}

\newcommand{\jj}{\ensuremath{\pm}\xspace}

%% editing

\newcommand{\para}[1]{\noindent{\textbf{#1}}}

%% abbreviation
%\newcommand{\od}{\textsc{OrdDiv}\xspace} % ordered diversification
\newcommand{\odo}{\ensuremath{\mathcal{S}}\xspace} % ordered diversification objective
\newcommand{\osdo}{\ensuremath{\mathcal{S}_{+}}\xspace} % ordered sum diversification objective
\newcommand{\ocdo}{\ensuremath{\mathcal{S}_{c}}\xspace} % ordered sum diversification objective
\newcommand{\omsd}{\textsc{Max\-SSD}\xspace} % ordered max sum diversification
\newcommand{\omcd}{\textsc{Max\-SCD}\xspace} % ordered max sum diversification
 % ordered Hamitonlian Path 
\newcommand{\ohpo}{\ensuremath{\mathcal{H}}\xspace} % ordered Hamitonlian Path objective
\newcommand{\omshp}{\textsc{Max\-OHP}\xspace} % Maximum ordered Hamitonlian Path
\newcommand{\diversity}{\ensuremath{\operatorname{Div}}\xspace}

\newcommand{\bp}{\ensuremath{\textsc{BI}}\xspace} % best path
 % hamiltonian path

\newcommand{\bke}{\ensuremath{\small \textsc{B}\kpar\textsc{I}}\xspace} % Best kpar items
\newcommand{\bkeh}{\ensuremath{\small \textsc{B}\kpar\textsc{I-H}}\xspace} % Best kpar items, heuristic
\newcommand{\bkm}{\ensuremath{\small \textsc{GM}}\xspace} % Best kpar matching

\newcommand{\Sbke}{\ensuremath{\small \textsc{SB}\kpar\textsc{I}}\xspace} % Best kpar items
\newcommand{\Sbkeh}{\ensuremath{\small \textsc{SB}\kpar\textsc{I-H}}\xspace}

\newcommand{\Sbketwo}{\textsf{\small SB2I}\xspace}
\newcommand{\Sbkethree}{\textsf{\small SB3I}\xspace}
\newcommand{\Sbkefour}{\textsf{\small SB4I}\xspace}
\newcommand{\SbkethreeH}{\textsf{\small SB3I-H}\xspace}
\newcommand{\SbkefourH}{\textsf{\small SB4I-H}\xspace}

\newcommand{\alg}{\ensuremath{\textsc{ALG}}\xspace} 
% notation for algorithm obj value

\newcommand{\opt}{\ensuremath{\textsc{OPT}}\xspace} 
% notation for optimal obj value

\makeatletter
% Permutation function
\newcommand{\order}{\@ifnextchar\bgroup\order@i{\ensuremath{\pi}\xspace}}
\newcommand{\order@i}[1]{\ensuremath{\pi(#1)}\xspace}

\newcommand{\optorder}{\@ifnextchar\bgroup\optorder@i{\ensuremath{\pi^{o}}\xspace}}
\newcommand{\optorder@i}[1]{\ensuremath{\pi^{o}(#1)}\xspace}

\newcommand{\orderhat}{\@ifnextchar\bgroup\orderhat@i{\ensuremath{\hat{\pi}}\xspace}}
\newcommand{\orderhat@i}[1]{\ensuremath{\hat{\pi}(#1)}\xspace}

\newcommand{\orderstar}{\@ifnextchar\bgroup\orderstar@i{\ensuremath{\pi^{*}}\xspace}}
\newcommand{\orderstar@i}[1]{\ensuremath{\pi^{*}(#1)}\xspace}

\newcommand{\optordSet}{\@ifnextchar\bgroup\optordSet@i{\ensuremath{\ordSet^{o}}\xspace}}
\newcommand{\optordSet@i}[1]{\ensuremath{\ordSet^{o}_{#1}}\xspace}

\newcommand{\ordSetstar}{\@ifnextchar\bgroup\ordSetstar@i{\ensuremath{\ordSet^{*}}\xspace}}
\newcommand{\ordSetstar@i}[1]{\ensuremath{\ordSet^{*}_{#1}}\xspace}

\newcommand{\matchorder}{\@ifnextchar\bgroup\matchorder@i{\ensuremath{\pi^{M}}\xspace}}
\newcommand{\matchorder@i}[1]{\ensuremath{\pi^{M}(#1)}\xspace}

\newcommand{\matchordSet}{\@ifnextchar\bgroup\matchordSet@i{\ensuremath{\ordSet^{M}}\xspace}}
\newcommand{\matchordSet@i}[1]{\ensuremath{\ordSet^{M}_{#1}}\xspace}

\newcommand{\divS}{\@ifnextchar\bgroup\divS@i{\ensuremath{S}\xspace}}
\newcommand{\divS@i}[1]{\ensuremath{S_{#1}}\xspace}
\newcommand{\divSopt}{\@ifnextchar\bgroup\divSopt@i{\ensuremath{S^{*}}\xspace}}
\newcommand{\divSopt@i}[1]{\ensuremath{S^{*}_{#1}}\xspace}

\newcommand{\chunk}{\@ifnextchar\bgroup\chunk@i{\ensuremath{L}\xspace}}
\newcommand{\chunk@i}[1]{\ensuremath{L_{#1}}\xspace}
\newcommand{\chunkmax}{\ensuremath{L_{\max}}\xspace}
\newcommand{\numchunk}{\ensuremath{T}\xspace}

% problem instance
\newcommand{\instance}{\ensuremath{I}\xspace}

% user
\newcommand{\user}{\ensuremath{u}\xspace}

% ordered set
\newcommand{\ordSet}{\@ifnextchar\bgroup\ordSet@i{\ensuremath{O}\xspace}}
\newcommand{\ordSet@i}[1]{\ensuremath{O_{#1}}\xspace}

% probability of the item
\newcommand{\itempr}{\@ifnextchar\bgroup\itempr@i{\ensuremath{p}\xspace}}
\newcommand{\itempr@i}[1]{\ensuremath{p_{#1}}\xspace}

% the distance function
\newcommand{\dist}{\@ifnextchar\bgroup\dist@i{\ensuremath{d}\xspace}}
\newcommand{\dist@i}[1]{\ensuremath{d \left(#1\right)}\xspace}

% the ordered sum diversification function
% ordered sum diversification objectiv
\newcommand{\ordsum}{\@ifnextchar\bgroup\ordsum@i{\ensuremath{\mathcal{S}}\xspace}} % change from d' to \mathcal{S}
\newcommand{\ordsum@i}[1]{\ensuremath{\mathcal{S}(#1)}\xspace}

% the ordered hamiltonian path 
\newcommand{\ordpath}{\@ifnextchar\bgroup\ordpath@i{\ensuremath{\mathcal{H}}\xspace}} % change form T to \mathcal{H}
\newcommand{\ordpath@i}[1]{\ensuremath{\mathcal{H}(#1)}\xspace}

\newcommand{\ordpathhat}{\@ifnextchar\bgroup\ordpathhat@i{\ensuremath{\hat{\mathcal{H}}}\xspace}} % change form T to \mathcal{H}
\newcommand{\ordpathhat@i}[1]{\ensuremath{\hat{\mathcal{H}}(#1)}\xspace}

\newcommand{\ordpathtilde}{\@ifnextchar\bgroup\ordpathtilde@i{\ensuremath{\tilde{\mathcal{H}}}\xspace}} % change form T to \mathcal{H}
\newcommand{\ordpathtilde@i}[1]{\ensuremath{\tilde{\mathcal{H}}(#1)}\xspace}

% the probability function of an ordered set

\makeatother

\newcommand{\kpar}{\ensuremath{\tau}\xspace} % the parameter that controls the approximation ratio
\newcommand{\kprefix}{\ensuremath{k}\xspace} % every time we talk about the prefix 
\newcommand{\randomordSet}{\ensuremath{A}\xspace} % a random variable that indicates the set of accepted items 
% genres
\newcommand{\genres}{\ensuremath{\mathcal{X}}\xspace}
\newcommand{\attr}{\ensuremath{X}\xspace}

% unordered set
\newcommand{\unordSet}{\ensuremath{U}\xspace}
\newcommand{\ordSetswap}{\ensuremath{\ordSet'}\xspace} % The ordered sequence with i-th and i+1-th items swapped. 
\newcommand{\pr}{\ensuremath{\mathrm{Pr}}\xspace}
\newcommand{\Exp}{\ensuremath{\mathbb{E}}\xspace} 

\newcommand{\divf}{\ensuremath{\mathcal{D}}\xspace} % the general diversity function of a set
\newcommand{\divfsum}{\ensuremath{\mathcal{D}_{+}}\xspace} % sum-diversity
\newcommand{\divfcov}{\ensuremath{\mathcal{D}_{c}}\xspace} % coverage-diversity 
\newcommand{\distpath}[1]{\ensuremath{d_{L}(#1)}\xspace} % the sum of distances of a path

\newcommand{\xxi}{\ensuremath{x_i}\xspace}
\newcommand{\xoi}{\ensuremath{x_i^o}\xspace}
\newcommand{\yi}{\ensuremath{y_i}\xspace}
\newcommand{\yoi}{\ensuremath{y_i^o}\xspace}

\newcommand{\wi}{\ensuremath{W_{\ordSet_i}}\xspace}
\newcommand{\wiodd}{\ensuremath{W_{\ordSet_{2i-1}}}\xspace}
\newcommand{\wieven}{\ensuremath{W_{\ordSet_{2i}}}\xspace}
\newcommand{\woi}{\ensuremath{W_{\ordSet_i^o}}\xspace}
\newcommand{\wmi}{\ensuremath{W_{\ordSet_i^M}}\xspace}

\newcommand{\halfn}{\ensuremath{\mathcal{T}_n}\xspace}

\newcommand{\numseg}{\ensuremath{T}\xspace}
\newcommand{\bestk}{\ensuremath{\ell_{\kpar}}\xspace}
\newcommand{\doij}{\ensuremath{d(\pi^o(i),\pi^o(i+1))}\xspace}

\newcommand{\dij}{\ensuremath{d(\pi(i),\pi(i+1))}\xspace}

\newcommand{\clique}{\ensuremath{\text{CLIQUE}}\xspace}

% for pseducode
\usepackage[english,linesnumbered,vlined,ruled,nokwfunc]{algorithm2e}
\DontPrintSemicolon
\SetKwComment{note}{$\triangleright$ }{}
\SetFuncSty{textsc}
\SetCommentSty{textit}
\SetDataSty{texttt}
\SetKwInput{Initial}{Initial}
\SetKwInput{Parameter}{Param.}
\SetKwInput{Input}{Input}
\SetKwInput{Data}{Data}
\SetKwInput{Output}{Output}
\ResetInOut{Result} 
\let\oldnl\nl%
\newcommand{\nonl}{\renewcommand{\nl}{\let\nl\oldnl}}%
\SetKw{KwAnd}{and} %
\SetKw{KwOr}{or}
\SetKw{KwXor}{xor}
\SetKw{KwNot}{not}
\SetKw{Parallel}{parallel}
\SetKw{Return}{return}
\SetKw{Break}{break}

% datasets
\newcommand{\Coat}{\textsf{\small Coat}\xspace}
\newcommand{\KuaiRec}{\textsf{\small KuaiRec}\xspace}
\newcommand{\Netflix}{\textsf{\small Netflix}\xspace}
\newcommand{\Movielens}{\textsf{\small Movielens}\xspace}
\newcommand{\Yahoo}{\textsf{\small Yahoo}\xspace}
\newcommand{\letor}{\textsf{\small LETOR}\xspace}
\newcommand{\ltrc}{\textsf{\small LTRC}\xspace}

% algorithms
\newcommand{\bketwo}{\textsf{\small B2I}\xspace}

\newcommand{\bkethree}{\textsf{\small B3I}\xspace}
\newcommand{\bkethreeH}{\textsf{\small B3I-H}\xspace}
\newcommand{\bkefour}{\textsf{\small B4I}\xspace}
\newcommand{\bkefourH}{\textsf{\small B4I-H}\xspace}
\newcommand{\Random}{\textsf{\small Random}\xspace}
\newcommand{\DUM}{\textsf{\small DUM}\xspace}
\newcommand{\MSD}{\textsf{\small MSD}\xspace}
\newcommand{\MMR}{\textsf{\small MMR}\xspace}
\newcommand{\DPP}{\textsf{\small DPP}\xspace}
\newcommand{\explore}{\textsf{\small EXPLORE}\xspace}

\newcommand{\expnum}{\ensuremath{\mathit{Expnum}}\xspace}
\newcommand{\expdcg}{\ensuremath{\mathit{Exp\-DCG}}\xspace}

\newcommand{\expseren}{\ensuremath{\mathit{Exp\-Serendip\-i\-ty}}\xspace}

\begin{document}

%%
%% The "title" command has an optional parameter,
%% allowing the author to define a "short title" to be used in page headers.
\title{Sequential Diversification with Provable Guarantees}

%%
%% The "author" command and its associated commands are used to define
%% the authors and their affiliations.
%% Of note is the shared affiliation of the first two authors, and the
%% "authornote" and "authornotemark" commands
%% used to denote shared contribution to the research.

\author{Honglian Wang}
\affiliation{%
  \institution{KTH Royal Institute of Technology}
  \city{Stockholm}
  \country{Sweden}
  }
\email{honglian@kth.se}

\author{Sijing Tu}
\affiliation{%
  \institution{KTH Royal Institute of Technology}
  \city{Stockholm}
  \country{Sweden}
  }
\email{sijing@kth.se}

\author{Aristides Gionis}
\affiliation{%
  \institution{KTH Royal Institute of Technology}
  \city{Stockholm}
  \country{Sweden}
  }
\email{argioni@kth.se}

%%
%% By default, the full list of authors will be used in the page
%% headers. Often, this list is too long, and will overlap
%% other information printed in the page headers. This command allows
%% the author to define a more concise list
%% of authors' names for this purpose.
% \renewcommand{\shortauthors}{Trovato et al.}

\renewcommand{\shortauthors}{Honglian Wang, Sijing Tu, and Aristides Gionis}

%%
%% The abstract is a short summary of the work to be presented in the
%% article.
\begin{abstract}

Diversification is a useful tool for exploring large collections of information items. It has been used to reduce redundancy and cover multiple perspectives in information-search settings. Diversification finds applications in many different domains, including presenting search results of information-retrieval systems and selecting suggestions for recommender systems.

Interestingly, existing measures of diversity are defined over \emph{sets} of items, 
rather than evaluating \emph{sequences} of items.
This design choice comes in contrast with commonly-used relevance measures, 
which are distinctly defined over sequences of items, 
taking into account the ranking of items. 
The importance of employing sequential measures is that
information items are almost always presented in a sequential manner,
and during their information-exploration activity users tend to prioritize items with higher~ranking.

In this paper, we study the problem of \emph{maximizing sequential diversity}. 
This is a new measure of \emph{diversity}, 
which accounts for the \emph{ranking} of the items, and incorporates \emph{item relevance} and \emph{user behavior}.
The overarching framework can be instantiated with different diversity measures, 
and here we consider the measures of \emph{sum~diversity} and \emph{coverage~diversity}. 
The problem was recently proposed by Coppolillo et al.~\citep{coppolillo2024relevance}, 
where they introduce empirical methods that work well in practice.
Our paper is a theoretical treatment of the problem:
we establish the problem hardness and
present algorithms with constant approximation guarantees for both diversity measures we consider.
Experimentally, we demonstrate that our methods are competitive against strong baselines. 
\end{abstract}

%%
%% The code below is generated by the tool at http://dl.acm.org/ccs.cfm.
%% Please copy and paste the code instead of the example below.
%%

\begin{CCSXML}
<ccs2012>
   <concept>
       <concept_id>10002951.10003317</concept_id>
       <concept_desc>Information systems~Information retrieval</concept_desc>
       <concept_significance>500</concept_significance>
       </concept>
   <concept>
       <concept_id>10003752.10003809.10003636</concept_id>
       <concept_desc>Theory of computation~Approximation algorithms analysis</concept_desc>
       <concept_significance>500</concept_significance>
       </concept>
 </ccs2012>
\end{CCSXML}

\ccsdesc[500]{Information systems~Information retrieval}
\ccsdesc[500]{Theory of computation~Approximation algorithms analysis}

%%
%% Keywords. The author(s) should pick words that accurately describe
%% the work being presented. Separate the keywords with commas.
\keywords{Diversification, Ranking algorithms, Approximation algorithms}

\received{20 February 2007}
\received[revised]{12 March 2009}
\received[accepted]{5 June 2009}

%%
%% This command processes the author and affiliation and title
%% information and builds the first part of the formatted document.
\maketitle

\section{Introduction}

Diversification has been widely adopted in information retrieval and recommendation systems to facilitate information exploration, that is, to present users with content that is not only relevant, but also novel and diverse. 
Diversification benefits both users and service providers~\cite{wu2024result}: it enhances user experience by mitigating the filter-bubble effect and promoting a fair representation of perspectives, while helping service providers avoid over-exposure of popular items and cater to broader information needs.
% On the one hand, it enhances user experience by mitigating the filter-bubble effect, and fostering a more inclusive and fair representation of different perspectives.
% On the other hand, it helps service providers to avoid over-exposure of popular items and cater for broader information needs.

%
\iffalse
The concept of diversification naturally implies a trade\-off between the \emph{diversity} and \emph{relevance}~\cite{gollapudi2009axiomatic}, and it is crucial to design approaches that balance these two aspects.  
When selecting varied items from a large collection, it is crucial to balance \emph{diversity} and \emph{relevance}. 
In fact, emphasizing relevance ensures that users receive content closely aligned with their preferences; however, an exclusive focus on relevance may result in lack of diversity.
To this end, numerous approaches have been developed; for a comprehensive overview, we refer readers to~\cite{wu2024result}. 
In our paper, we focus on a line of works that employs well-defined notions of relevance and diversity and applies transparent methods with approximation guarantees. 
In this line of research, common approaches to resolving the diversity~vs.~relevance trade-off involve combining the two measures into one objective and devising algorithmic methods that optimize this combined objective, as shown in the seminal works of \citet{gollapudi2009axiomatic} and \citet{borodin2012max}.
\fi
%
Numerous diversification approaches have been proposed~\citep{wu2024result}, 
including methods that aim to balance the trade\-off 
between \emph{diversity} and \emph{relevance}~\cite{borodin2012max,gollapudi2009axiomatic}.
However, most existing approaches have two shortcomings. 
% First, they assume a fixed number of items to be selected and 
% overlook the pragmatic case where users may lose interest and 
% may terminate early their information-seeking task. 
% Second, most existing approaches formulate the diversification task as a \emph{set-selection problem}, 
% rather than a \emph{ranking problem}, thereby ignoring the importance of the ordering among the selected~items. 
First, they assume a fixed number of items to be selected, overlooking the fact that users may lose interest and terminate their information-seeking task early. Second, many approaches formulate the diversification task as a \emph{set-selection problem}, rather than a \emph{ranking problem}, thereby ignoring the importance of the ordering of selected items.

These shortcomings arise from failing to incorporate the users' engagement behavior.
Users may exit due to tiredness or disinterest, making it challenging for the system to determine beforehand how many content items the users will examine. 
Additionally, the ranking of selected items is crucial, as users are more likely to examine items that are higher in the ranking. 
Note that many diversification methods employ greedy strategies \citep{carbonell1998use,chen2018fast,agrawal2009diversifying,puthiya2016coverage,borodin2012max}, which implicitly provide a ranking of the items. However, such a
ranking is not an integral part of the problem definition. It is not designed to meet any desirable properties and is merely a by-product
of the algorithmic~solution.

% \sj{Remember to mention~\cite{ieong2014advertising}: Similar to our setting, \citet{ieong2014advertising} assumes that the users view content sequentially and quit the system with fixed probabilities. However, we solve different sets of problems.}

\vspace{1mm}
\para{Our contributions.}
In this paper, we build on the setting of \emph{sequential diversity}, 
introduced recently by Coppolillo et al.~\citep{coppolillo2024relevance}, 
aiming to address the above-mentioned shortcomings. 
The sequential-diversity setting defines a new diversity measure, 
which accounts for the \emph{ranking} of items, 
and it incorporates \emph{item relevance} together with \emph{user behavior}.
While the work of Coppolillo et al.~\citep{coppolillo2024relevance}
presents practical methods for this problem, 
our paper offers a rigorous analysis and algorithms with provable guarantees. 

The sequential-diversity setting is formalized as follows.
We~consider that items are presented to users in a specific order and 
users examine them sequentially. 
We define the \emph{continuation probability} of an item for a user as the probability 
that the user will \emph{accept} the item and continue \emph{examining} the next item in the sequence.
% given the relevance of the current~item. 
We assume that continuation probabilities are related to the \emph{item relevance}
for a given user,
and thus can be learned from observed~data.
% In this work, for simplicity of exposition and since it is not our main focus, 
% we assume that the continuation probability of each item is given as input.

It follows that the model is \emph{stochastic}, with the number of items examined being a random variable depending on the ranking of items and their continuation probabilities. 
% The number of items examined (and accepted) by a user 
% is a random variable depending on the ranking of items and their continuation probabilities. 
% The \emph{sequential diversity} of an ordered sequence is then defined as the \emph{expected diversity} 
% of the items that the user accepts, 
The \emph{sequential diversity} of an ordered sequence is defined as the \emph{expected diversity} 
of the items accepted by the user, with the expectation taken over the continuation probabilities.
% Our objective is to compute a \emph{ranking of items} that maximizes 
% the measure of sequential diversity.
% We refer to this problem as \emph{sequential-diversity~maximization}.
Our objective is to maximize sequential diversity by computing an optimal ranking of items, which we refer to as \emph{sequential-diversity maximization}. 
This model effectively captures the interplay between relevance, diversity, and user engagement, as achieving high sequential diversity requires rankings that incorporate \emph{both diverse and relevant items} to maintain user engagement with the platform.

The model needs to be instantiated with a diversity measure for item sets, and any standard measure can be used.
In this paper, we consider two commonly-used diversity measures: \emph{pair-wise sum diversity}~\cite{borodin2012max} and 
\emph{coverage diversity}~\cite{ashkan2015optimal, puthiya2016coverage,zhai2015beyond}.
For coverage diversity, 
we prove the problem is \emph{ordered-submodular} \cite{kleinberg2022ordered}, and thus,
the standard greedy algorithm achieves a 1/2-approximation.

For pair-wise sum diversity measure and its corresponding maximization problem (\omsd, \Cref{problem:omsd}),
we need to develop novel methods.
We establish the computational hardness of the problem and 
present algorithms with constant-factor approximation guarantees.
Our techniques leverage connections to an \emph{ordered} variant of the \emph{Hamiltonian path} 
problem (\Cref{section:reduction}), which seeks to maximize the sum of edge weights in the Hamiltonian path, 
where the edge weights depend on their order in the path.
Our algorithms are grounded on the observation that the sequential diversity score is largely determined by the top items in the ranking; thus, we prioritize optimizing the selection and positioning of these top~items.

Depending on the values of the continuation probabilities, we introduce two distinct algorithms. 
When continuation probabilities are relatively small, e.g., constant values that are smaller than $1$, 
we design an algorithm, \bke, that focuses on selecting the top-$\kpar$ items, 
where the value of $\kpar$ controls the trade-off between runtime and approximation quality.
Notably, when $\kpar = 2$, our approach aligns with a
simple greedy approach, and hence, we provide the approximation
ratio for this greedy approach.
In contrast, when continuation probabilities are arbitrarily close to $1$, 
it's not just the first few items that affect the sequential diversity score; rather, more items ranked toward the front impact the score.
To address this case, we propose a second algorithm using a greedy matching approach on the set of all items, and offers a constant approximation guarantee. 

\edit{In this journal extension, we propose two new greedy algorithms that directly tackle the \omsd problem. We show that when the continuation probabilities are uniform, i.e., all items share the same probability, ignoring the probabilities and greedily constructing rankings that maximize pairwise distance sums yields a $\frac{1}{2}$-approximation. For non-uniform probabilities,
% bounded within the range $[a, b]$, 
we propose an algorithm, \Sbke, that provides a constant-factor approximation guarantee by selecting the best $\tau$ items to form a length-$\tau$ ordering that maximizes the sequential sum over all size-$\tau$ rankings.

Our experiments demonstrate the superiority of our proposed algorithms over the baselines. In particular, the \Sbke algorithms generally achieve better results than the \bke algorithms, consistent with our theoretical analysis. In addition to the \osdo values, we evaluate user satisfaction and engagement metrics for the rankings generated by each approach. Due to space limitations, complete proofs and extensive experimental results are provided in the appendices.
}

\section{Related Work}

Content diversification is commonly formulated as an optimization problem
\cite{agrawal2009diversifying,puthiya2016coverage,carbonell1998use,borodin2012max,gollapudi2009axiomatic,chen2018fast,ashkan2015optimal,kleinberg2022ordered,cevallos2015max,cevallos2017local,cevallos2019improved}.
Two common diversity functions are coverage diversity and pairwise-sum diversity.

\citet{agrawal2009diversifying} introduce the concept of user coverage, 
aiming to recommend a relevant and diverse set while maximizing the number of users who encounter at least one relevant document. 
\citet{ashkan2015optimal} and \citet{puthiya2016coverage} explore taxonomy coverage, defining diversity as the number of topics or users covered by the selected set of items. 
These studies formulate their objective as monotone submodular functions and 
apply greedy strategies with provable guarantees.

For pairwise sum diversity, 
\citet{carbonell1998use} introduce the \emph{marginal maximal relevance} (MMR), 
which is a function that combines the pairwise distance and the relevance of the selected items to generate a ranking via greedily selection.
\citet{borodin2012max} and \citet{gollapudi2009axiomatic} investigate the notion of 
\emph{max-sum diversity} (MSD) by combining sum of pairwise distance with a submodular relevance function. Subsequent works on the MSD problem apply convex-programming \cite{cevallos2015max} and 
local-search \cite{cevallos2017local,cevallos2019improved} techniques to achieve better approximation ratios.
\citet{chen2018fast} and \citet{gan2020enhancing} use a \emph{determinantal point process} (DPP)  
model by maintaining a kernel matrix whose off-diagonal entries measure the similarity between items. 
\citet{chen2018fast} use the maximum a posterior (MAP) inference for DPP and propose a greedy MAP inference algorithm for efficient computation. 

The main limitation of the aforementioned studies is the assumption that users consider all results equally, without accounting for the order of items
Technically, although a greedy heuristic can produce an ordered sequence
\cite{carbonell1998use,chen2018fast,agrawal2009diversifying,puthiya2016coverage,borodin2012max}, 
the objective value does not depend on item ranking.

Some studies incorporate the ranking of items into their models.
\citet{ashkan2014diversified,ashkan2015optimal} propose the 
\emph{diversity-weighted utility maximization} (DUM) approach which ranks items solely based on their utilities (relevance scores, in our setting).
% and the items that fail to contribute to marginal diversity gain are filtered out. 
% If all the items contribute to marginal diversity gain, 
% DUM simply ranks the items by their utilities. 
Unlike DUM, our method integrates relevance and diversity directly into the ranking.
% \honglian{Mentioned the "Advertising in a Stream" paper in the related work. We study different problems, and the scenarios also differs, so I think it is better to mention them in related works.}
\citet{ieong2014advertising} and \citet{tang2020optimizing} address advertisement allocation in a sequential setting similar to ours, where users sequentially view content and exit with fixed probabilities. However, they solve a distinct set of problems. 
% However, their objective is to maximize advertising revenue, leading to a distinct set of challenges compared to the focus of this paper.

The most closely related works to our problem are by \citet{coppolillo2024relevance} and \citet{kleinberg2022ordered}. \citet{coppolillo2024relevance} introduce a user-behavior model that simulates user interactions with recommendation systems and propose the \explore\ algorithm to maximize their proposed diversity measures. 
Their framework allows users to interact with multiple items at each time step and accept one of them.
Moreover, they do not provide any provable guarantees of their approaches. 
% , accepting only one item per list before potential quitting. In contrast, our model focuses on a single-ordered sequence of items.
\citet{kleinberg2022ordered} assumes that users’ patience decays as they progress through a result list. 
They propose an \emph{ordered-submodular} coverage function and prove that a greedy algorithm achieves a $\frac{1}{2}$-approximation;
this result extends to our problem when the coverage function is selected as the diversity measure.
% We show one of our objectives is also \emph{ordered-submodular} when the coverage function is used as the diversity measure.
% , enabling the direct application of their greedy approach. 
When pairwise distance is used as the diversity measure, we develop novel methods to approach our problem.

Recent works have applied deep-learning techniques to content diversification \cite{abdool2020managing, xu2023multi, huang2021sliding, zhou2018deep}, 
typically incorporating diversification as part of a broader optimization task.
In contrast, our focus is on defining sequential diversity as a standalone optimization problem and developing methods with provable guarantees. Notably, our framework can leverage deep-learning approaches to learn continuation probabilities.
Consequently, we view deep-learning approaches as complementary to our framework.

\section{Preliminaries}
\label{sec:setting}

We consider a system where users interact with distinct content items,
such as videos or articles, presented in an ordered sequence.
Let $\unordSet = \{1, \ldots, n\}$ represent the set of items.
For a given user \user, 
each item $i \in \unordSet$ has an associated \emph{continuation probability} $\itempr{i}$, representing the probability that,
upon examining item $i$,
the user \user \emph{accepts} this item and continues examining subsequent items.
Conversely, $1 - \itempr{i}$ is the probability of terminating the session
and quit the system after examining item $i$. 

We assume that $\itempr{i}$ 
is determined by the \emph{relevance} of item $i$ for user~\user and is provided as input to the problem.
While estimating $\itempr{i}$ is an interesting task, it lies outside the scope of this work. In our empirical evaluation, we explore simple methods that map relevance scores to continuation~probabilities.

The user examines items in \(\unordSet\) sequentially, following an order determined by the permutation \(\order: \unordSet \rightarrow \unordSet\), where \(\order{i}\) denotes the \(i\)-th item in the sequence. 
Let $\ordSet_{(\order)} = (\order{i})_{i=1}^n$ represent the ordered sequence defined by~$\order$;
for simplicity, we omit the subscript
$\order$ and write $\ordSet = \ordSet{(\order)}$.
We denote by \(\ordSet_{\kprefix} = (\order{i})_{i=1}^k\) the \(\kprefix\)-prefix of \ordSet, i.e., the subsequence of $\ordSet$ consisting of the first $k$ items. Equivalently, we write $\ordSet{\kprefix} \sqsubseteq \ordSet$.

After examining item 
\(\order{i}\), the user either \emph{accepts} it and proceeds to examine the next item with probability 
$\itempr{\order{i}}$, or rejects it and quits the system immediately.

Given an ordered sequence of items \ordSet, let us denote by  \(\randomordSet \sqsubseteq \ordSet\)
the ordered sequence of items the user examines and \emph{accepts} before
quitting. 
Notice that \(\randomordSet\) is a random variable. 
The probability of the user accepting exactly the items of \(\ordSet_\kprefix\) is \(\pr(\randomordSet = \ordSet_\kprefix)\). Let \(\divf(\cdot)\) denote a \emph{diversity function} measuring the diversity of the accepted items. As we will discuss shortly, our goal is to maximize the expected diversity score \(\divf(\randomordSet)\).

In our paper, an algorithm \(\alg\) achieves an \(\alpha\)-approximation (\(\alpha \leq 1\)) if, for any instance \(\instance\), it guarantees \(\alg(\instance) \geq \alpha \, \opt(\instance)\).

\subsection{Problem Definition}

We first introduce the \emph{sequential-diversity} objective
and then formalize the \emph{maximization} problem we consider. 

\begin{definition}[Sequential diversity (\odo)]
    \label{def:od}
    Let $\unordSet = \{1, \ldots, n\}$ be a finite set of $n$ distinct items, 
    and $\itempr{1}, \ldots, \itempr{n}$ be continuation probabilities assigned to each item.
    Let $\ordSet = (\order{i})_{i=1}^n$ be an ordered sequence of~\unordSet~according to a permutation $\order$. 
    Let $\divf(\randomordSet)$ be a diversity function of the items in $\randomordSet$. 
    The \emph{sequential diversity} of $\ordSet$, denoted by~$\odo(\ordSet)$,
    is defined as the expectation of the diversity of the items that the user accepts before quitting, 
    i.e., 
    $\odo(\ordSet) = \Exp_{\randomordSet \sqsubseteq \ordSet}[\divf(\randomordSet)]$.
\end{definition}

The expectation in Definition~\ref{def:od} is taken over the probability that the user \emph{accepts} exactly $\ordSet{\kprefix} \sqsubseteq \ordSet$, where $\kprefix = 0, 1, \ldots, n$. 

We will work with two commonly-used diversity functions. 
%  and will use different subscripts on $\divf$ to denote them. 
The first is the \emph{pair-wise sum diversity}:
for a distance function
$d \colon \unordSet \times \unordSet \rightarrow \mathbb{R}_{\geq 0}$, 
the diversity function 
$\divfsum(\randomordSet) = \sum_{i,j \in \randomordSet} \dist{i,j}$
sums all pair-wise distances of items in $\unordSet$.

Accordingly, we define the \emph{sequential sum diversity} objective.

\begin{definition}[Sequential sum diversity (\osdo)]
    \label{def:osd} 
    The \emph{sequential sum diversity} of a sequence $\ordSet$, denoted by~$\osdo(\ordSet)$,
    is defined as in Definition~\ref{def:od}, 
    with diversity function $\divf=\divfsum$, i.e., 
    %  taken as the pair-wise sum diversity, i.e., 
    \begin{equation}
    \label{eq:osdo}
    \osdo(\ordSet) = \Exp_{\randomordSet \sqsubseteq \ordSet}[\divfsum(\randomordSet)] = 
    \Exp_{\randomordSet \sqsubseteq \ordSet}\left[\sum\nolimits_{i,j \in \randomordSet} \dist{i,j}\right].
    \end{equation}
\end{definition}

Next, we define the problem \omsd of finding an ordering of items in~\unordSet 
that maximizes the sequential sum diversity objective. 

\begin{problem}[Maximizing sequential sum diversity (\omsd)]
    \label{problem:omsd}
    Given a finite set $\unordSet = \{1, \ldots, n\}$ of $n$ distinct items
    and associated continuation probabilities $\itempr{1}, \ldots, \itempr{n}$, 
    find an ordering $\ordSet^{*}$ of the items in $\unordSet$
    that maximizes the sequential sum diversity objective, i.e., 
    \begin{equation}
        \label{eq:omsd:form}
        \ordSet^{*} = \arg\max\nolimits_{\ordSet = \order{\unordSet}} \osdo(\ordSet).
    \end{equation}
\end{problem}

In Definition~\ref{def:osd}, the items in \unordSet can be used to define a \emph{complete weighted graph}, 
with edge weights being the distances between the items.
% This is because each pair of items can be seen as an edge, 
% and we can use the distance function $d(\cdot,\cdot)$ 
% to assign weights on all edges. 
We make heavy use of this observation 
as we employ \emph{graph-theoretic ideas}, 
such as \emph{Hamiltonian paths}, \emph{graph matchings},~etc.

The second (alternative) notion of diversity is based on \emph{coverage}~\cite{ashkan2015optimal,puthiya2016coverage}.
% Problem \ref{problem:omsd} can also be defined using this notion of diversity.
% One commonly-used diversity notion is based on 
% \emph{coverage}~\cite{ashkan2015optimal,puthiya2016coverage}.
Here we assume that each item $i \in \unordSet$ is associated with a set of attributes $\attr(i)\subseteq \genres$,
where \genres is the set of all attributes. 
For example, in a move dataset, 
\genres is the set of all movie genres, 
and~$\attr(i)$ the genres of movie $i$.
\iffalse
For example, if \unordSet represents a set of news articles, 
the attributes \genres could be the set of all possible topics, 
and $\attr(i)\subseteq\genres$ is the subset of topics associated with news article~$i$.
\fi
The \emph{coverage diversity} of a sequence~$\randomordSet$ 
is then defined as the set of attributes of all items in~\randomordSet, i.e., 
$\divfcov(\randomordSet) = \bigcup_{i \in \randomordSet} \attr(i)$. 

Analogously to Problem \ref{problem:omsd}, we define the problem 
of \emph{maximizing the sequential coverage diversity} (\omcd, Problem~\ref{problem:omcd}).

\begin{definition}[Sequential coverage diversity (\ocdo)]
    \label{def:ocd} 
    The \emph{sequential coverage diversity} of a sequence $\ordSet$, denoted by~$\ocdo(\ordSet)$,
    is defined as in Definition~\ref{def:od}, 
    with diversity function $\divf=\divfcov$, i.e., 
    % taken as the coverage diversity, i.e., 
    \begin{equation}
    \label{eq:ocdo}
    \ocdo(\ordSet) = \Exp_{\randomordSet \sqsubseteq \ordSet}[\divfcov(\randomordSet)] = 
    \Exp_{\randomordSet \sqsubseteq \ordSet}\left[\bigcup\nolimits_{i \in \randomordSet} \attr(i)\right].
    \end{equation}
\end{definition}

\begin{problem}[Maximizing sequential coverage diversity (\omcd)]
    \label{problem:omcd}
    Given a finite set $\unordSet = \{1, \ldots, n\}$ of $n$ distinct items
    and associated probabilities $\itempr{1}, \ldots, \itempr{n}$, 
    find an ordering $\ordSet^{*}$ of the items in $\unordSet$
    that maximizes the sequential coverage diversity objective, i.e., 
    \begin{equation}
        \label{eq:omcd:form}
        \ordSet^{*} = \arg\max\nolimits_{\ordSet = \order{\unordSet}} \ocdo(\ordSet).
        %    \Exp_{\ordSet{\kprefix} \sqsubseteq \ordSet} \left[\sum\nolimits_{i,j \in \ordSet{\kprefix}} \dist{i,j}\right].
    \end{equation}
\end{problem}

\subsection{Reformulation and Complexity}

% \para{Equivalent formulation.}
We start our analysis with an observation that simplifies our objective.
Observation~\ref{obs:prob-of-A} stems from the fact that 
the probabilities assigned to all items are independent.

To simplify our calculations, we use the notation 
$\itempr{\ordSet{i}} = \prod_{t=1}^{i}\itempr{\order{t}}$ and
$~~\dist{\order{i+1}, \ordSet{i}} = \sum\nolimits_{t =1}^{i} \dist{\order{i+1},\order{t}}$.

\begin{observation}
    \label{obs:prob-of-A}
    The probability that the user does not accept any item is 
    $\pr (\randomordSet = \ordSet{0}) = 1 - \itempr{\order{1}}$;
    the probability that the user quits after accepting the first $\kprefix$ items is
    $\pr (\randomordSet = \ordSet{\kprefix}) = \prod_{i=1}^\kprefix \itempr{\order{i}} (1 - \itempr{\order{\kprefix+1}})$, if $\kprefix \leq n-1$; 
    and the probability that the user accepts all items is 
    $\pr (\randomordSet = \ordSet{n}) = \prod_{i=1}^n \itempr{\order{i}}$.
\end{observation}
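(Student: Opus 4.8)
The plan is to unfold the sequential acceptance process as a run of independent Bernoulli trials and read off each of the three probabilities directly from the definition of $\randomordSet$. The structural fact I would isolate first is that the user examines item $\order{i+1}$ \emph{if and only if} items $\order{1},\dots,\order{i}$ were all accepted; consequently the examined-and-accepted prefix $\randomordSet$ is always a run of ``accept'' outcomes, terminated either by a single ``reject'' outcome on the next examined item or (when the list is exhausted) by nothing at all.

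For the interior case $1 \le \kprefix \le n-1$, the event $\{\randomordSet = \ordSet{\kprefix}\}$ holds exactly when the user accepts each of $\order{1},\dots,\order{\kprefix}$ upon examining it and then examines $\order{\kprefix+1}$ and rejects it. By the structural fact, this event is the intersection of the $\kprefix$ ``accept'' events, of probabilities $\itempr{\order{1}},\dots,\itempr{\order{\kprefix}}$, with the ``reject'' event for $\order{\kprefix+1}$, of probability $1-\itempr{\order{\kprefix+1}}$; since the per-item decisions are independent, the probability is the product $\prod_{i=1}^{\kprefix}\itempr{\order{i}}\,(1-\itempr{\order{\kprefix+1}})$. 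The two boundary cases are the same argument with one trial missing. For $\kprefix=0$, the user always examines $\order{1}$, so $\randomordSet=\ordSet{0}$ iff $\order{1}$ is rejected, which has probability $1-\itempr{\order{1}}$. For $\kprefix=n$, there is no $(n+1)$-st item, so the session ends with all items accepted precisely when every one of the $n$ items is accepted, giving $\prod_{i=1}^{n}\itempr{\order{i}}$ with no trailing reject factor.

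There is essentially no obstacle here; the only points worth spelling out are (i) why the $\kprefix=n$ case carries no trailing rejection factor, namely that its run of accepts is terminated by exhausting the list rather than by a rejection, and (ii) a consistency check that the listed probabilities sum to one. The latter follows by telescoping: writing $P_{\kprefix} = \prod_{i=1}^{\kprefix}\itempr{\order{i}}$ with $P_0 = 1$, the total is $(P_0 - P_1) + \sum_{\kprefix=1}^{n-1}(P_{\kprefix} - P_{\kprefix+1}) + P_n = 1$.
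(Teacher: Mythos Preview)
Your proposal is correct and matches the paper's reasoning: the paper does not give a formal proof of this observation at all, merely remarking that it ``stems from the fact that the probabilities assigned to all items are independent,'' which is exactly the content of your argument. Your write-up is more detailed than anything in the paper (including the telescoping sanity check), but there is no substantive difference in approach.
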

Based on Observation~\ref{obs:prob-of-A}, we reformulate Equation~(\ref{eq:osdo}).
%
% \begin{restatable}{lemma}{eqform}
%     \label{lem:eqform}
%     The sequential sum diversity objective in Equation~(\ref{eq:osdo}) can be equivalently formulated as 
%     \begin{align*}
%         \osdo(\ordSet) = \sum\nolimits_{i=1}^{n-1} \itempr{\ordSet{i+1}} \dist{\order{i+1}, \ordSet{i}},
%         ~~~ \text{where}
%     \end{align*}
%     $\itempr{\ordSet{i+1}} = \prod_{t=1}^{i+1}\itempr{\order{t}}~$ and
%     $~~\dist{\order{i+1}, \ordSet{i}} = \sum\nolimits_{t =1}^{i} \dist{\order{i+1},\order{t}}$.
% \end{restatable}
%
\begin{restatable}{lemma}{eqform}
    \label{lem:eqform}
    The sequential sum diversity objective in Equation~(\ref{eq:osdo}) can be  reformulated as 
    \begin{align}
        \label{eq:eqform_sequential_diversity}
        \osdo(\ordSet) = \sum\nolimits_{i=1}^{n-1} \itempr{\ordSet{i+1}} \dist{\order{i+1}, \ordSet{i}}.
    \end{align}
\end{restatable}

\begin{proof}
    We start with demonstrating that the sequential sum-diversity objective can be precisely reformulated via the expression $2 \sum_{i=2}^{n} \sum_{j=1}^{i-1} \prod_{t=1}^{i} \pr_{\order{t}} \dist{\order{i}, \order{j}}$. 
    Next, the formulation can be simplified using the definition of $\itempr{\ordSet{i}}$ and 
    $\dist{\order{i}, \ordSet{i-1}}$. 
    
    We write the definition of sequential sum-diversity objective, 
    and proceed by merging the terms according to our observations. 
    \begin{equation}
\begin{aligned}
    &\sum_{\kprefix=1}^n \pr (\randomordSet = \ordSet{\kprefix}) \sum_{\order{i}, \order{j} \in \ordSet{\kprefix}} \dist{\order{i},\order{j}} 
    \stackrel{(a)}= \sum_{\kprefix=1}^n \pr(\randomordSet= \ordSet{\kprefix}) \sum_{i, j \leq \kprefix} \dist{\order{i},\order{j}} \\
    & \stackrel{(b)}= \sum_{\kprefix=1}^{n-1} (\prod_{t=1}^\kprefix \itempr{\order{t}}- \prod_{t=1}^{\kprefix+1} \itempr{\order{t}})  \sum_{i, j \leq \kprefix} \dist{\order{i},\order{j}} 
    + \prod_{t=1}^n \itempr{\order{t}} \sum_{i, j \leq n} \dist{\order{i},\order{j}}\\
    &= \sum_{\kprefix=1}^{n-1} \prod_{t=1}^{\kprefix+1} \itempr{\order{t}} (- \sum_{i,j\leq \kprefix} \dist{\order{i}, \order{j}} + \sum_{i,j\leq \kprefix+1} \dist{\order{i}, \order{j}}) \\
    &= \sum_{\kprefix=1}^{n-1} \prod_{t=1}^{\kprefix+1} \itempr{\order{t}} (\sum_{i = \kprefix+1,j\leq \kprefix} \dist{\order{i}, \order{j}} + \sum_{j = \kprefix+1,i\leq \kprefix} \dist{\order{i}, \order{j}}) \\
    &= 2 \sum_{\kprefix=1}^{n-1} \prod_{t=1}^{\kprefix+1} \itempr{\order{t}} (\sum_{i = \kprefix+1,j\leq \kprefix} \dist{\order{i}, \order{j}} ) 
    = 2 \sum_{\kprefix=2}^{n} \prod_{t=1}^{\kprefix} \itempr{\order{t}} (\sum_{i = \kprefix,j\leq \kprefix-1} \dist{\order{i}, \order{j}} ) \\
    &= 2 \sum_{i=2}^{n} \sum_{j=1}^{i-1} \prod_{t=1}^{i} \itempr{\order{t}} \dist{\order{i}, \order{j}}.
\end{aligned}
\end{equation}
Notice that 
equality $(a)$ holds by the definition of $\order{i}$: 
indeed, $\order{i}$ indicates the item that is placed at the $i$-th position. 
Equality $(b)$ holds by substituting the expression of $\pr(\randomordSet = \ordSet{\kprefix})$ from the \cref{obs:prob-of-A}.
For the following equations, we re-arrange the simplify the formulas. 
\end{proof}

We give a concrete example of how the ordering of the items influences the sequential sum diversity \osdo. 
\begin{example}
    Let $\unordSet = \{u_1, u_2, u_3\}$, where 
    $\dist{u_1, u_2} = 0.3$, $\dist{u_1, u_3} = 1$ and $\dist{u_2, u_3} = 1$. 
    Let $\itempr{u_1} = \itempr{u_2} = 1$ and $\itempr{u_3} = 0$. 
    The sequential sum diversity scores differ according to the order of the items, specifically:
\begin{itemize}
    \item $\osdo((u_1, u_2, u_3)) = \osdo((u_2, u_1, u_3)) = 0.3$, as the user accepts both $u_1$ and $u_2$.
    \item $\osdo((u_1, u_3, u_2)) = \osdo((u_2, u_3, u_1)) = 0$, as the user only accepts $u_1$ or $u_2$.
    \item $\osdo((u_3, u_1, u_2)) = \osdo((u_3, u_2, u_1)) = 0$, as the user directly quits the system after examining $u_3$.
\end{itemize}
\end{example}

\para{Complexity.}
We can show that the problem of maximizing sequential sum diversity is \NP-hard.

\begin{restatable}{theorem}{omsdNPHard}\label{thm:omsdNPHard}
  \omsd is \NP-hard, even when all $\itempr{i}$ are equal.
\end{restatable}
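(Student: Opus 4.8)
The plan is to invoke \Cref{lem:eqform} with all continuation probabilities equal to a single value $p$, which collapses the objective to a purely combinatorial quantity in the ordering. Write $S_k$ for the set of the first $k$ items of \ordSet and recall $\divfsum(S)=\sum_{\{i,j\}\subseteq S}\dist{i,j}$. Substituting $p$ for each $\itempr{\order{t}}$ in \Cref{lem:eqform} and regrouping by summation-by-parts yields the closed form
\[
\osdo(\ordSet)\;=\;p^{\,n}\,\divfsum(\unordSet)\;+\;(1-p)\sum_{k=2}^{n-1}p^{\,k}\,\divfsum(S_k).
\]
Since $p^{\,n}\divfsum(\unordSet)$ does not depend on the ordering and $1-p>0$, maximizing $\osdo$ is equivalent to maximizing $\sum_{k=2}^{n-1}p^{\,k}\divfsum(S_k)$, i.e.\ to choosing the nested chain $S_2\subset S_3\subset\dots\subset S_{n-1}$ whose prefix diversities are jointly as large as possible. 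The only free parameter of the construction is $p$, and the structural obstacle is that \emph{every} prefix length is rewarded at once, so a cardinality-constrained hard problem such as densest-$k$-subgraph cannot be encoded directly.

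For the reduction I would start from an arbitrary graph $G=(V,E)$ with $|V|=n$ and take the edge-indicator distances $\dist{i,j}=1$ if $ij\in E$ and $\dist{i,j}=0$ otherwise, so that $\divfsum(S)=|E(G[S])|$. A short double-counting identity rewrites $\sum_{k=2}^{n-1}\divfsum(S_k)=n\,|E|-\sum_{ij\in E}\max\{\pi^{-1}(i),\pi^{-1}(j)\}$, where $\pi^{-1}(v)$ denotes the position of item $v$. I would then fix $p=1-n^{-5}$, a number with polynomially many bits: all the weights $p^{\,k}$ for $k\le n$ then agree up to $O(n^{-2})$, so the total rounding error across the at most $n$ prefix terms is below $\tfrac12$; since $\sum_k\divfsum(S_k)$ is integer-valued, it follows that every ordering maximizing $\osdo$ also maximizes $\sum_k\divfsum(S_k)$, hence \emph{minimizes} $\sum_{ij\in E}\max\{\pi^{-1}(i),\pi^{-1}(j)\}$. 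Thus \omsd with all probabilities equal contains the problem of computing a vertex ordering of $G$ minimizing $\sum_{ij\in E}\max\{\pi^{-1}(i),\pi^{-1}(j)\}$.

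To finish, I would prove this ordering problem \NPhard. The elementary identity $\max(a,b)=\tfrac12\big(a+b+|a-b|\big)$ gives
\[
\sum_{ij\in E}\max\{\pi^{-1}(i),\pi^{-1}(j)\}\;=\;\tfrac12\Big(\sum_{ij\in E}\bigl|\pi^{-1}(i)-\pi^{-1}(j)\bigr|\;+\;\sum_{v\in V}\deg(v)\,\pi^{-1}(v)\Big),
\]
i.e.\ one half of the minimum-linear-arrangement cost of $G$ plus a degree-weighted positional term. Restricting $G$ to be $r$-regular makes the second summand equal to the constant $r\binom{n+1}{2}$; hence on regular graphs minimizing $\sum_{ij\in E}\max$ is equivalent to \emph{minimum linear arrangement}, which is \NPhard (and remains \NPhard on regular graphs---or one first reduces a general minimum-linear-arrangement instance to a regular one by a degree-equalizing gadget that shifts every arrangement's cost by the same additive constant). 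Composing the three steps---closed form, collapse of the prefix weights, reduction from regular minimum linear arrangement---establishes that \omsd is \NPhard even when all $\itempr{i}$ are equal.

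The step I expect to be the main obstacle is the second one: forcing the ``all-prefixes'' objective to degenerate into a single clean ordering cost. Driving $p$ toward $1$ flattens the prefix weights, but one must then verify carefully that a polynomial-bit choice of $p$ still separates the optimum from every suboptimal ordering---the integer gaps in $\sum_k\divfsum(S_k)$ must dominate the rounding error accumulated over all $n$ prefixes---and one must then dispose of the residual degree-weighted term in the resulting layout cost, which is exactly why the reduction is set up over regular graphs rather than arbitrary ones.
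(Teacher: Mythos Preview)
Your route is genuinely different from the paper's. The paper takes $p$ very \emph{small} (specifically $p=\frac{1-\epsilon}{2n^2}$) so that the objective is dominated by the first $k$ positions, and reduces directly from \textsc{Clique}: a $k$-clique placed first realises the maximum possible pairwise distance on every early prefix, and the geometric decay of $p^i$ makes the remaining $n-k$ positions contribute at most $O(np)\cdot p^{k}$, which is arranged to be strictly smaller than the gap created by losing even one clique edge. You go to the opposite extreme, $p\to 1$, flattening the prefix weights so that the ordering problem collapses to maximizing $\sum_k |E(G[S_k])|$, and then try to land on Minimum Linear Arrangement.

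Two issues, one small and one real. First, the $\{0,1\}$ edge-indicator distance is not a metric (non-adjacent distinct items would sit at distance $0$); the paper insists on metrics and uses them later. This is harmless: replace $\{0,1\}$ by $\{1,2\}$, which adds the ordering-independent term $\binom{k}{2}$ to each $\divfsum(S_k)$ and leaves your integer-gap argument intact.

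The real gap is the last step. Your identity correctly shows that on $r$-regular graphs minimizing $\sum_{ij\in E}\max\{\pi^{-1}(i),\pi^{-1}(j)\}$ is equivalent to Minimum Linear Arrangement, but \NP-hardness of MLA on \emph{regular} graphs is not a standard result, and the ``degree-equalizing gadget that shifts every arrangement's cost by the same additive constant'' you invoke is far from obvious: pendant vertices, complement tricks, or auxiliary cliques all perturb the arrangement cost in an ordering-dependent way, precisely because the stretch of a new edge depends on where its endpoints land. You have correctly identified the degree-weighted term $\sum_v\deg(v)\,\pi^{-1}(v)$ as the obstacle, but you have not disposed of it; without either a citation for MLA on regular graphs or a concrete gadget, the reduction is incomplete. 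The paper's small-$p$ approach sidesteps this entirely by making only the first $k$ positions matter, so that the target problem is \textsc{Clique} rather than a layout problem.
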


This result indicates our problem is non-trivial, and approximation algorithms are required
to guarantee high-quality orderings.

% \aris{Show \NP-hardness for the \omcd problem.}

\section{Sequential Coverage Diversity}

In this section, we show that the problem of maximizing sequential coverage diversity, \omcd, 
is {ordered submodular}~\cite{kleinberg2022ordered}.
% and thus, it can be solved using a standard greedy algorithm.

% \aris{A very brief section stating (and proving in the appendix) that the \omcd problem 
% is ordered submodular.
% As an implication we can apply greedy. }
\begin{definition}[Ordered submodularity \cite{kleinberg2022ordered}]
    \label{def:ordered_submodular}
A sequence function~$f$ is ordered-submodular if 
for all sequence $X$ and $Y$, 
the following property holds for all elements $s$ and $\bar{s}$:
\begin{equation}
    f(X || s) - f(X) \geq f(X ||s||Y) - f(X||\bar{s}||Y)
\end{equation}
where $||$ denotes the concatenation of sequences and elements.
\end{definition}

Ordered submodularity is introduced by \citet{kleinberg2022ordered}, 
who extend the concepts of monotonicity and submodularity from set functions to sequence functions. 
% The authors establish that a simple greedy algorithm applied to maximize ordered-submodular functions over sets, subject to a cardinality constraint, yields a solution that approximates the optimal value with a ratio of $1/2$.

\begin{restatable}{observation}{ocdoOrderedSubmodular}\label{obs:ocdo_ordered_submodular}
   The sequential coverage diversity function $\ocdo( \cdot )$ is ordered-submodular.
\end{restatable}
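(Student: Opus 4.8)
The plan is to first eliminate the expectation in \Cref{def:ocd} by reformulating $\ocdo$ exactly as \Cref{lem:eqform} does for the sum-diversity objective. Write $g(S) = \abs{\bigcup_{i \in S} \attr(i)}$ for the coverage of a set $S$ of items; $g$ is a monotone, submodular set function. Combining \Cref{obs:prob-of-A} with the same Abel-summation rearrangement used to prove \Cref{lem:eqform}, one gets, for every sequence $Z = (z_1, \dots, z_m)$,
\[
\ocdo(Z) \;=\; \sum_{j=1}^{m} \Big(\prod_{t=1}^{j} \itempr{z_t}\Big)\big(g(\{z_1,\dots,z_j\}) - g(\{z_1,\dots,z_{j-1}\})\big),
\]
that is, a weighted sum of the \emph{marginal coverage gains} along $Z$, where position $j$ carries weight $\prod_{t \le j}\itempr{z_t}$ --- precisely the probability that the first $j$ items of $Z$ are all accepted. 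This is the only probabilistic step; everything afterwards is deterministic algebra on this identity.

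Next I would instantiate the ordered-submodularity inequality of \Cref{def:ordered_submodular} by comparing appending $s$ at the end of $X$ with inserting $s$ immediately after $X$ inside the longer sequence $X \| Y$. Let $X = (x_1,\dots,x_a)$ and $Y = (y_1,\dots,y_b)$, put $P = \prod_{t \le a}\itempr{x_t}$ and $Q_i = \prod_{t \le i}\itempr{y_t}$, write $Y_{<i} = \{y_1,\dots,y_{i-1}\}$, and let $g(y \mid S) = g(S \cup \{y\}) - g(S)$. Substituting $X$, $X\|s$, $X\|Y$ and $X\|s\|Y$ into the identity above, the terms coming from the prefix $X$ are identical across all four sequences and cancel when we form the two differences. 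After dividing through by the common nonnegative factor $P$, the inequality to be proved collapses to
\[
\sum_{i=1}^{b} Q_i\Big(g\big(y_i \mid X \cup Y_{<i}\big) - \itempr{s}\,g\big(y_i \mid X \cup \{s\} \cup Y_{<i}\big)\Big) \;\ge\; 0.
\]

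To finish, I would bound each summand separately. Submodularity of the coverage function gives $g(y_i \mid X \cup Y_{<i}) \ge g(y_i \mid X \cup \{s\} \cup Y_{<i})$, since a marginal gain can only shrink when the conditioning set grows; and because $\itempr{s} \in [0,1]$ while marginal coverage gains are nonnegative, $g(y_i \mid X \cup \{s\} \cup Y_{<i}) \ge \itempr{s}\,g(y_i \mid X \cup \{s\} \cup Y_{<i})$. Chaining the two and multiplying by $Q_i \ge 0$ shows every summand is nonnegative, hence the sum is. Monotonicity of $\ocdo$ under appending --- the remaining half of the ordered-submodular property --- is immediate from the identity, as appending one item adds a single nonnegative term. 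The only part that needs genuine care is the bookkeeping in the reformulation and checking that the prefix-$X$ contributions cancel once the sequence-dependent weights $\prod_{t \le j}\itempr{z_t}$ are tracked correctly; the submodularity step, where the diminishing-returns intuition enters, is one line.
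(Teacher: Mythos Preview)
Your Abel-summation identity for $\ocdo$ is correct, but the inequality you then verify is not the one in \Cref{def:ordered_submodular}. Ordered submodularity requires
\[
\ocdo(X\|s)-\ocdo(X)\;\ge\;\ocdo(X\|s\|Y)-\ocdo(X\|\bar s\|Y)
\]
for \emph{every} element $\bar s$, whereas you compare $\ocdo(X\|s\|Y)$ against $\ocdo(X\|Y)$, effectively replacing $\bar s$ by a dummy element with continuation probability $1$ and empty attribute set. These two right-hand sides are not the same, and the one you bound is the smaller one: whenever $p_{\bar s}$ is small the user is likely to quit at $\bar s$ and never reach $Y$, so $\ocdo(X\|\bar s\|Y)$ can sit strictly below $\ocdo(X\|Y)$, making the required right-hand side strictly larger than yours. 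Your termwise step, which uses only $g(y_i\mid X\cup Y_{<i})\ge g(y_i\mid X\cup\{s\}\cup Y_{<i})$ and $p_s\le 1$, therefore does not reach the target.

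If you redo the cancellation with $\bar s$ kept in place, the collapsed inequality (after dividing by $P$) becomes
\[
p_{\bar s}\,g(\bar s\mid X)+p_{\bar s}\sum_{i=1}^{b}Q_i\,g\bigl(y_i\mid X\cup\{\bar s\}\cup Y_{<i}\bigr)\;\ge\;p_s\sum_{i=1}^{b}Q_i\,g\bigl(y_i\mid X\cup\{s\}\cup Y_{<i}\bigr),
\]
and coverage-submodularity alone cannot force this when $p_{\bar s}\ll p_s$. Concretely, take $X$ empty, $p_s=p_{y_1}=1$, $\attr(s)=\{A\}$, $\attr(y_1)=\{B\}$, $\attr(\bar s)=\emptyset$, and $p_{\bar s}=\varepsilon$: then the left side of the ordered-submodularity inequality is $\ocdo((s))-\ocdo(())=1$, while the right side is $\ocdo((s,y_1))-\ocdo((\bar s,y_1))=2-\varepsilon$. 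The paper does not attempt a direct verification like yours; it instead writes $\ocdo$ as a nonnegative combination of the prefix-coverage functions $f_k(O)=\divfcov(O_k)$ and invokes the closure lemmas of \citet{kleinberg2022ordered}. So a repair along your lines would need a genuinely new ingredient that controls the interaction between the two item-dependent weights $p_s$ and $p_{\bar s}$, which is exactly the piece your current argument drops.
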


As proved by \citet{kleinberg2022ordered}, a simple greedy algorithm
% which greedily builds a sequence by appending the un\-selected item 
% that produces the largest marginal gain for $\ocdo( \cdot )$, 
provides a $1/2$ approximation for an ordered submodular function, 
and thus, for $\omcd$.
On the other hand, sequential sum diversity is not ordered submodular.

\begin{restatable}{observation}{omsdOrderedSubmodular}\label{obs:omsd_ordered_submodular}
   The sequential sum diversity function $\osdo( \cdot )$ is not ordered-submodular.
\end{restatable}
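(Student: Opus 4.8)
The plan is to refute ordered submodularity directly: since the defining inequality of Definition~\ref{def:ordered_submodular} must hold for \emph{all} sequences $X, Y$ and \emph{all} elements $s, \bar{s}$, it suffices to exhibit a single instance together with one choice of $X, Y, s, \bar{s}$ for which it fails.

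I would work in the regime where all continuation probabilities equal $1$. By Lemma~\ref{lem:eqform}, $\osdo$ of an arbitrary sequence then collapses to $\sum_{a<b}\dist{\order{a},\order{b}}$, the sum of all pair-wise distances among its items, which is \emph{independent of the order}; this is a classic example of a supermodular (hence, in general, not submodular) set function, since the marginal value of a new item grows with the current set — exactly the behaviour ordered submodularity forbids. To make this concrete, take $\unordSet=\{1,2,3\}$ with $\itempr{1}=\itempr{2}=\itempr{3}=1$ and $\dist{1,3}=1$, $\dist{2,3}=0$ (any value strictly below $\dist{1,3}$ works, and a genuine metric is obtained by instead setting $\dist{1,2}=\dist{1,3}=1$, $\dist{2,3}=\tfrac{1}{2}$). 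Let $X$ be the empty sequence, $Y=(3)$, $s=1$, and $\bar{s}=2$. Then the left-hand side of the inequality is $\osdo((1))-\osdo(())=0$, while the right-hand side is $\osdo((1,3))-\osdo((2,3))=\dist{1,3}-\dist{2,3}>0$, so the required inequality $\osdo(X||s)-\osdo(X)\ge\osdo(X||s||Y)-\osdo(X||\bar{s}||Y)$ fails and $\osdo$ is not ordered-submodular.

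I do not expect any real obstacle here — the whole content is picking a suitable tiny configuration, after which the claim is a one-line computation from Lemma~\ref{lem:eqform}. The only points worth checking explicitly are that empty and single-element sequences are admissible inputs to the sequence function (they are, consistently with the framework of \citet{kleinberg2022ordered}), and, should one insist that $d$ be a metric rather than the arbitrary nonnegative dissimilarity allowed in our setting, that the triangle inequality holds — which the variant $\dist{1,2}=\dist{1,3}=1$, $\dist{2,3}=\tfrac{1}{2}$ satisfies, still giving right-hand side $\tfrac{1}{2}>0$. If a fully non-degenerate witness is preferred, the same construction extends to four items with $X$ nonempty and all distances positive, but that is not needed for the statement.
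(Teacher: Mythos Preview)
Your counterexample is correct and complete. The paper takes a different but equally short route: instead of setting all continuation probabilities to $1$ and exploiting the (classic) non-submodularity of the sum-of-pairwise-distances set function, it leaves $X,Y,s$ general with positive probabilities and $d(s,Y)>0$, but chooses $\bar{s}$ with $p_{\bar{s}}=0$. Then $\osdo(X\,||\,\bar{s}\,||\,Y)=\osdo(X)$ (the zero-probability item blocks everything after it), and the ordered-submodularity inequality collapses to $\osdo(X\,||\,s)\ge\osdo(X\,||\,s\,||\,Y)$, which fails strictly since appending $Y$ can only raise expected diversity. Your approach has the conceptual merit of locating the failure squarely in the underlying diversity measure $\divfsum$ --- it persists even in the deterministic regime where no stochastic user behaviour is present --- whereas the paper's construction shows that the continuation-probability mechanism alone already breaks the inequality, with no assumption on the distances among $X$, $s$, and $Y$. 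Both arguments are one-line verifications once the configuration is fixed.
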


In the remainder of our paper, we focus on the technically more intriguing \omsd problem. 
Nonetheless, it is important to highlight that our definition of the \omcd problem, as well, 
% incorporates considerations for user uncertainty and item ordering into coverage diversity maximization. 
% This constitutes a 
provides a novel perspective to recommender systems and information-retrieval applications.

\section{Ordered Hamiltonian Path}
\label{section:reduction}

In the previous section, we establish that $\omsd$ is \NPhard through a reduction from the well-known \emph{clique problem}~\cite{garey1979computers}, highlighting the inherent complexity of directly solving the problem.
Furthermore, we observe that $\omsd$ is not ordered submodular, precluding the use of standard \emph{greedy} approaches for constant-factor approximations.

To address this, we introduce a new problem, \emph{maximum ordered Hamiltonian path} ($\omshp$), and show in \Cref{thm:reduction_non_uniform_p} that \omsd can be reduced to \omshp with only a constant-factor approximation loss.  
This reduction allows us to focus on \omshp, a more tractable problem, for which we propose approximation algorithms in subsequent sections.

% First, we define the \emph{ordered Hamiltonian-path} objective. 

\begin{definition}[Ordered Hamiltonian path (\ohpo)]
    \label{def:oshp}
    We are given a fin\-i\-te set $\unordSet = \{1,\ldots, n\}$ of $n$ distinct items, 
    a distance function~$\dist{\cdot, \cdot}$, and 
    probabilities $\{\itempr{1}, \ldots, \itempr{n}\}$ assigned to each item $i\in\unordSet$.
    Let $\ordSet = (\order{i})_{i=1}^n$ be the ordered sequence of items of $\unordSet$ 
    according to an order~$\order$. 
    The ordered Hamiltonian path, denoted by \ohpo, is defined as 
    \begin{equation}
      \ordpath{\ordSet} = 
        \sum\nolimits_{i = 1}^{n-1} \wi \dist{\order{i}, \order{i+1}},
        \label{eq:ohp}
    \end{equation}
    where $\wi = \sum\nolimits_{j=i+1}^{n} \itempr{\ordSet{j}}$.
\end{definition}

Note that the \emph{ordered} Hamiltonian-path
differs significantly from the classic Hamiltonian-path \cite{gurevich1987expected}, where the edge weights 
do not depend on the \emph{order} in which edges are placed.
In contrast, in the ordered Hamiltonian-path, 
the edge weights $\dist{\order{i}, \order{i+1}}$ are multiplied by a coefficient~$\wi$, 
which depends on the order \order 
as well as on the probabilities of all the items in \unordSet. 
\iffalse
Below in Example~\ref{example:order}, we give an example to clarify how the coefficient of the same edge changes under different ordered sequences.

\begin{example}
\label{example:order}
Let $\unordSet = \{u_1, u_2, u_3\}$ and $\itempr{i} = p$, for all $i \in \unordSet$. 
Consider the ordered sequences $\ordSet = (u_1, u_2, u_3)$ and $\ordSet' = (u_3, u_1, u_2)$.
The coefficient of $\dist{u_1, u_2}$ under $\ordpath{\ordSet}$ is $W_{\ordSet_1}= \itempr^2 + \itempr^3$;
the coefficient of $\dist{u_1, u_2}$ under $\ordpath{\ordSet'}$ is $W_{\ordSet'_2}=\itempr^3$. 
\end{example}
\fi 

Next, we define the problem
of maximizing the ordered Hamiltonian path (\omshp).

\begin{problem}[\omshp]
    \label{prob:omshp}
    We are given a fin\-i\-te set $\unordSet = \{1,\ldots, n\}$ of $n$ distinct items, 
    a distance function~$\dist{\cdot, \cdot}$, and 
    probabilities $\itempr{1}, \ldots, \itempr{n}$ assigned to each item $i\in\unordSet$.
    The goal is to find an order $\ordSet^*$ of the items in \unordSet 
    so as to maximize the $\ordpath$ objective,
    that is, 
    \begin{equation*}
      \ordSet^* 
         = \argmax_{\ordSet = \order{\unordSet}} \ordpath{\ordSet},
        %  = \arg \max_{\ordSet = \order{\unordSet}} \sum\nolimits_{i = 1}^{n-1} \wi \dist{\order{i}, \order{i+1}},
    \end{equation*}
     where $\wi = \sum_{j=i+1}^{n} \itempr{\ordSet{j}}$.
\end{problem}

% The main result of this section is the following.

% we use the following auxiliary results.

% The following lemma facilitates our exposition.
\begin{restatable}{lemma}{oshpreform}
    \label{lem:oshp-reform}
    $\ordpath{\ordSet}$ can be equivalently formulated as $$\ordpath{\ordSet} = \sum\nolimits_{i=1}^{n-1} \itempr{\ordSet{i+1}} \distpath{\ordSet{i+1}},$$
    where $\distpath{\ordSet{\kprefix}} \coloneqq \sum_{t=1}^{\kprefix-1} \dist{\order{t}, \order{t+1}}$.
\end{restatable}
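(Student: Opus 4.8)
The plan is to prove \Cref{lem:oshp-reform} by a single exchange of summation order in the double sum defining $\ordpath{\ordSet}$ (a Fubini-type rearrangement of a finite sum), followed by one index shift. Starting from \Cref{def:oshp}, I would expand $\wi = \sum_{j=i+1}^{n} \itempr{\ordSet{j}}$ and view the result as a sum over pairs:
\[
\ordpath{\ordSet} \;=\; \sum_{i=1}^{n-1} \wi \, \dist{\order{i}, \order{i+1}}
\;=\; \sum_{i=1}^{n-1} \left( \sum_{j=i+1}^{n} \itempr{\ordSet{j}} \right) \dist{\order{i}, \order{i+1}}
\;=\; \sum_{\substack{1 \le i \le n-1 \\ i+1 \le j \le n}} \itempr{\ordSet{j}} \, \dist{\order{i}, \order{i+1}} .
\]

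The key observation is that the index region $\{(i,j) : 1 \le i \le n-1,\ i+1 \le j \le n\}$ is identical to $\{(i,j) : 2 \le j \le n,\ 1 \le i \le j-1\}$. Swapping the order of summation and pulling $\itempr{\ordSet{j}}$ outside the inner sum gives
\[
\ordpath{\ordSet} \;=\; \sum_{j=2}^{n} \itempr{\ordSet{j}} \sum_{i=1}^{j-1} \dist{\order{i}, \order{i+1}}
\;=\; \sum_{j=2}^{n} \itempr{\ordSet{j}} \, \distpath{\ordSet{j}} ,
\]
where the last equality is just the definition $\distpath{\ordSet{j}} = \sum_{t=1}^{j-1} \dist{\order{t}, \order{t+1}}$. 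Re-indexing by $j = i+1$ (so $j$ ranging over $2,\dots,n$ corresponds to $i$ ranging over $1,\dots,n-1$) yields exactly the claimed identity $\ordpath{\ordSet} = \sum_{i=1}^{n-1} \itempr{\ordSet{i+1}} \, \distpath{\ordSet{i+1}}$.

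There is no real obstacle here: the only point needing care is describing the summation region correctly when interchanging $i$ and $j$, and checking the boundary indices ($i = n-1$, $j = n$) survive the final shift $j \mapsto i+1$. An equivalent and perhaps cleaner phrasing of the same argument, which I would use if it reads more smoothly, is to compare coefficients directly: on the right-hand side the edge term $\dist{\order{t}, \order{t+1}}$ is multiplied by $\sum_{i=t}^{n-1} \itempr{\ordSet{i+1}} = \sum_{j=t+1}^{n} \itempr{\ordSet{j}} = W_{\ordSet_t}$, which is precisely its coefficient on the left-hand side, so the two expressions agree term by term.
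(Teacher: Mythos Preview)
Your proposal is correct and takes essentially the same approach as the paper: both expand $\wi$ into its defining sum and then regroup terms by the outer index $j$ (equivalently, swap the order of summation), with the paper writing out the intermediate grouped sum $\itempr{\ordSet{2}}\dist{\order{1},\order{2}} + \itempr{\ordSet{3}}(\dist{\order{1},\order{2}}+\dist{\order{2},\order{3}}) + \cdots$ explicitly rather than describing the index region abstractly. Your alternative coefficient-comparison phrasing is also fine and amounts to the same computation.
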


We observe that $\ordpath{\ordSet}$ can be reformulated according to \Cref{lem:oshp-reform}. 
The reformulation helps us to relate $\ordpath{\ordSet}$ with $\osdo(\ordSet)$.
% (in \Cref{lem:eqform}).
Notice that $\itempr{\ordSet{i+1}}$ is the common coefficient of $\distpath{\ordSet{i+1}}$ in $\ordpath{\ordSet}$ and $\dist{\order{i+1}, \ordSet{i}}$ in $\osdo(\ordSet)$. 
Hence, we essentially only compare $\distpath{\ordSet{i+1}}$ with $\dist{\order{i+1}, \ordSet{i}}$, 
for all $1\leq i \leq n-1$.

The following corollary is a direct implication from the triangle inequality, 
$2 \dist{\order{i+1}, \ordSet{i}} \geq \distpath{\ordSet{i+1}}$, for any sequence~$\ordSet$.

\begin{restatable}{corollary}{anyinequality}
\label{lemma:any_inequality}
    For any ordered sequence $\ordSet$ of items in $\unordSet$ it holds 
    \begin{align*}
    2\osdo{(\ordSet)} \geq \ordpath{\ordSet}.
    \end{align*}
\end{restatable}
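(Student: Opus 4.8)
The plan is to combine the two reformulations already established. \Cref{lem:eqform} gives $\osdo(\ordSet) = \sum_{i=1}^{n-1} \itempr{\ordSet{i+1}}\, \dist{\order{i+1}, \ordSet{i}}$, and \Cref{lem:oshp-reform} gives $\ordpath{\ordSet} = \sum_{i=1}^{n-1} \itempr{\ordSet{i+1}}\, \distpath{\ordSet{i+1}}$. In both sums the $i$-th term carries the \emph{same} nonnegative coefficient $\itempr{\ordSet{i+1}} = \prod_{t=1}^{i+1}\itempr{\order{t}} \ge 0$, so it suffices to prove the termwise inequality $\distpath{\ordSet{i+1}} \le 2\,\dist{\order{i+1}, \ordSet{i}}$ for every $1 \le i \le n-1$ and then sum against these coefficients.

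For the termwise inequality I would unfold both quantities: $\distpath{\ordSet{i+1}} = \sum_{t=1}^{i} \dist{\order{t}, \order{t+1}}$ is the length of the path $\order{1} \to \cdots \to \order{i+1}$, while $\dist{\order{i+1}, \ordSet{i}} = \sum_{t=1}^{i} \dist{\order{i+1}, \order{t}}$ is the total distance from $\order{i+1}$ to the first $i$ items. I apply the triangle inequality $\dist{\order{t}, \order{t+1}} \le \dist{\order{t}, \order{i+1}} + \dist{\order{i+1}, \order{t+1}}$ to each edge with $1 \le t \le i-1$ and keep the last edge $\dist{\order{i}, \order{i+1}} = \dist{\order{i+1}, \order{i}}$ unchanged. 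Using symmetry of $\dist{\cdot,\cdot}$ and reindexing $\sum_{t=1}^{i-1}\dist{\order{i+1},\order{t+1}}$ as a sum over $\{2,\dots,i\}$, every distance $\dist{\order{i+1}, \order{t}}$ with $1 \le t \le i$ is charged at most twice on the right-hand side, which gives $\distpath{\ordSet{i+1}} \le 2\sum_{t=1}^{i} \dist{\order{i+1}, \order{t}} = 2\,\dist{\order{i+1}, \ordSet{i}}$.

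Finally, I multiply this bound by $\itempr{\ordSet{i+1}} \ge 0$, sum over $i = 1, \ldots, n-1$, and apply the two reformulations to conclude $\ordpath{\ordSet} \le 2\,\osdo(\ordSet)$. There is no genuine obstacle here; the only step requiring mild care is the bookkeeping in the charging argument — checking that after combining the triangle-inequality term $\dist{\order{i+1},\order{t}}$, the shifted term $\dist{\order{i+1},\order{t+1}}$, and the leftover edge $\dist{\order{i+1},\order{i}}$, no pairwise distance appearing in $\dist{\order{i+1}, \ordSet{i}}$ is counted more than twice, which I would verify by writing out the index ranges explicitly.
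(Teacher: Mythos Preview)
Your proposal is correct and follows essentially the same approach as the paper: both reduce to the termwise inequality $\distpath{\ordSet{i+1}} \le 2\,\dist{\order{i+1}, \ordSet{i}}$ via the common coefficients $\itempr{\ordSet{i+1}}$ from \Cref{lem:eqform} and \Cref{lem:oshp-reform}, and both prove it by applying the triangle inequality to each path edge through the anchor $\order{i+1}$. The only cosmetic difference is that the paper applies the triangle inequality to all $i$ edges and then subtracts off the leftover $\dist{\order{i+1},\order{1}}$, whereas you treat the last edge separately; the charging count comes out the same.
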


Let $\ordSet^*$ be the optimal sequence of $\osdo$:  $\ordSet^* = \arg\allowbreak \max_{\ordSet = \order{\unordSet}} \osdo{(\ordSet)}$.
% We apply its optimality: 
Let $(\order{i}, \order{i+1})$ be any pair of items swapped in $\ordSet^*$, 
and denote $\ordSetswap= (\ldots, \order{i+1}, \order{i}, \ldots)$.
By optimality, it follows that $\osdo{(\ordSet^*)} \geq \osdo{(\ordSetswap)}$.
With some simplification, we get the following lemma.  

\begin{restatable}{lemma}{localoptimalnonuniformp}
\label{lemma:local_optimal_nonuniform_p}
    For an optimal solution $\ordSet^* = \argmax \osdo{(\ordSet)}$ to the~\omsd problem, it holds 
    \begin{align*}
        \dist{\order{i+1}, \ordSet{i}^{*}} \leq \frac{1-\itempr_{\order{i+1}}}{\itempr_{\order{i+1}}} \sum\nolimits_{j=1}^{i} \frac{\itempr_{\order{j+1}}}{1-\itempr_{\order{j+1}}} \dist{\order{j},\order{j+1}},
    \end{align*}
     for any $1 \leq i \leq n-1$. 
    %  and where $\dist{\order{i+1}, \ordSet{i}^{*}}$ is defined as in \Cref{lem:eqform}.
\end{restatable}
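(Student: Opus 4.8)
\smallskip
\noindent\textbf{Proof plan.}
The plan is to use optimality of $\ordSetstar$ only against \emph{adjacent transpositions}, convert the resulting first-order condition into a one-step recursion, and then close the bound by induction on~$i$. Fix $i$ with $2 \le i \le n-1$, write $a = \order{i}$ and $b = \order{i+1}$ for the two items in positions $i$ and $i+1$ of $\ordSetstar$, and let $\ordSetswap$ be obtained by exchanging these two items. Starting from the reformulation $\osdo(\ordSet) = \sum_{k=1}^{n-1} \itempr{\ordSet{k+1}}\, \dist{\order{k+1},\ordSet{k}}$ of \Cref{lem:eqform}, the first step is to observe that only the summands with $k = i-1$ and $k = i$ are affected by the swap: every prefix product $\itempr{\ordSet{k+1}} = \prod_{t=1}^{k+1}\itempr{\order{t}}$ is unchanged (positions outside $\{i,i+1\}$ are fixed and $\{a,b\}$ is the same multiset), whereas $\dist{\order{k+1},\ordSet{k}} = \sum_{t\le k}\dist{\order{k+1},\order{t}}$ changes only when a displaced item sits at an endpoint. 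Writing out $\osdo(\ordSetstar) - \osdo(\ordSetswap) \ge 0$, the common $\dist{a,b}$ contributions cancel and what remains is
\[
\Bigl(\textstyle\prod_{t=1}^{i-1}\itempr{\order{t}}\Bigr)\Bigl(\itempr{a}(1-\itempr{b})\,\dist{\order{i},\ordSet{i-1}^{*}} - \itempr{b}(1-\itempr{a})\,\dist{\order{i+1},\ordSet{i-1}^{*}}\Bigr) \ge 0 ,
\]
and since the prefix product is positive this yields $\itempr{a}(1-\itempr{b})\,\dist{\order{i},\ordSet{i-1}^{*}} \ge \itempr{b}(1-\itempr{a})\,\dist{\order{i+1},\ordSet{i-1}^{*}}$.

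Next, divide by $\itempr{b}(1-\itempr{a}) > 0$, add $\dist{\order{i},\order{i+1}}$ to both sides, and use the identity $\dist{\order{i+1},\ordSet{i}^{*}} = \dist{\order{i+1},\ordSet{i-1}^{*}} + \dist{\order{i+1},\order{i}}$. Introducing the shorthands $D_i \coloneqq \dist{\order{i+1},\ordSet{i}^{*}}$ and $r_j \coloneqq \itempr{\order{j+1}}/(1-\itempr{\order{j+1}})$, and noting $\itempr{a}(1-\itempr{b})/\bigl(\itempr{b}(1-\itempr{a})\bigr) = r_{i-1}/r_i$, this becomes the recursion
\[
D_i \ \le\ \frac{r_{i-1}}{r_i}\, D_{i-1} + \dist{\order{i},\order{i+1}}, \qquad 2 \le i \le n-1 ,
\]
where crucially $D_{i-1}$ is the one-index-earlier instance of the very quantity to be bounded. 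I would then prove by induction on $i$ that $D_i \le \tfrac{1}{r_i}\sum_{j=1}^{i} r_j\, \dist{\order{j},\order{j+1}}$, which is exactly the claim once $r_j$ is expanded. The base case $i = 1$ holds with equality by definition, $D_1 = \dist{\order{2},\ordSet{1}^{*}} = \dist{\order{1},\order{2}} = \tfrac{1}{r_1}\bigl(r_1\,\dist{\order{1},\order{2}}\bigr)$, and uses no optimality. For $i \ge 2$, substituting the inductive hypothesis into the recursion gives
\[
D_i \ \le\ \frac{r_{i-1}}{r_i}\cdot\frac{1}{r_{i-1}}\sum_{j=1}^{i-1} r_j\, \dist{\order{j},\order{j+1}} + \dist{\order{i},\order{i+1}} \ =\ \frac{1}{r_i}\Bigl(\sum_{j=1}^{i-1} r_j\, \dist{\order{j},\order{j+1}} + r_i\, \dist{\order{i},\order{i+1}}\Bigr),
\]
which is the desired inequality for $i$.

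The main obstacle is the bookkeeping in the first step: one must verify carefully that exactly the two summands $k \in \{i-1, i\}$ change under the adjacent swap, that all prefix-probability coefficients remain invariant, and that the $\dist{a,b}$ terms cancel, so that the first-order condition collapses to the clean two-term inequality above; with that in hand, the passage to the recursion and the induction are routine algebra. The index $i = 1$ is a degenerate case that must be stated separately but follows immediately from the definitions.
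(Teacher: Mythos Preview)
Your proposal is correct and mirrors the paper's proof: adjacent-swap optimality gives the two-term inequality $\itempr_a(1-\itempr_b)\,\dist{\order{i},\ordSet_{i-1}^*}\ge \itempr_b(1-\itempr_a)\,\dist{\order{i+1},\ordSet_{i-1}^*}$, which is then unrolled---the paper phrases this as a ``telescope sum,'' you phrase it as induction on $i$, but the arithmetic is identical. One minor slip in the exposition: the prefix product $\itempr{\ordSet_{k+1}}$ at $k=i-1$ \emph{does} change under the swap (it equals $\itempr{\ordSet_{i-1}^*}\,\itempr_a$ before and $\itempr{\ordSet_{i-1}^*}\,\itempr_b$ after), so your parenthetical justification is off for that one index; however, your displayed formula already handles this correctly, so the argument goes through.
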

% The proofs are similar to that of \Cref{lemma:localoptimal}.

We observe that $\frac{1 - \itempr{i}}{\itempr{i}} \leq \frac{1-a}{a}$, 
and $\frac{\itempr{i}}{1- \itempr{i}} \leq \frac{b}{1-b}$ if $\itempr{i} \in [a, b]$.
The following corollary holds directly by the implication of
\Cref{lemma:local_optimal_nonuniform_p} that $\dist{\order{i+1}, \ordSet{i}^{*}} \leq \frac{b(1-a)}{a(1-b)}\distpath{\ordSet{i+1}^*}$.

\begin{restatable}{corollary}{optimalinequalitynonuniformp}
\label{lemma:optimal_inequality_nonuniform_p}
    For an optimal sequence  $\ordSet^* = \argmax \osdo{(\ordSet)}$ to the~\omsd problem, 
    and assuming that $\itempr{i} \in [a, b]$, for all $i \in \unordSet$, 
    it holds that $\osdo{(\ordSet^*)} \leq \frac{b(1-a)}{a(1-b)}\ordpath{\ordSet^*}$.
\end{restatable}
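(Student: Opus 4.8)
The plan is to combine the position-wise reformulations of the two objectives with the per-position bound that \Cref{lemma:local_optimal_nonuniform_p} already provides for the optimal solution. Recall that \Cref{lem:eqform} writes $\osdo(\ordSet)$ as a weighted sum over positions, $\sum_{i=1}^{n-1}\itempr{\ordSet{i+1}}\,\dist{\order{i+1},\ordSet{i}}$, and that \Cref{lem:oshp-reform} writes $\ordpath{\ordSet}$ as the same weighted sum with $\dist{\order{i+1},\ordSet{i}}$ replaced by the path-length $\distpath{\ordSet{i+1}}$. Since both sums carry the same nonnegative coefficients $\itempr{\ordSet{i+1}}$, it suffices to establish the per-position inequality $\dist{\order{i+1},\ordSet{i}^{*}}\le\frac{b(1-a)}{a(1-b)}\,\distpath{\ordSet{i+1}^{*}}$ for every $i$ and then sum it up against these coefficients.

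For the per-position inequality I would start from \Cref{lemma:local_optimal_nonuniform_p}, which gives $\dist{\order{i+1},\ordSet{i}^{*}}\le\frac{1-\itempr_{\order{i+1}}}{\itempr_{\order{i+1}}}\sum_{j=1}^{i}\frac{\itempr_{\order{j+1}}}{1-\itempr_{\order{j+1}}}\dist{\order{j},\order{j+1}}$. Because $x\mapsto\frac{1-x}{x}$ is decreasing and $x\mapsto\frac{x}{1-x}$ is increasing on $(0,1)$, the hypothesis $\itempr{i}\in[a,b]$ yields $\frac{1-\itempr_{\order{i+1}}}{\itempr_{\order{i+1}}}\le\frac{1-a}{a}$ and $\frac{\itempr_{\order{j+1}}}{1-\itempr_{\order{j+1}}}\le\frac{b}{1-b}$; pulling these two constants out of the sum replaces the right-hand side by $\frac{b(1-a)}{a(1-b)}\sum_{j=1}^{i}\dist{\order{j},\order{j+1}}=\frac{b(1-a)}{a(1-b)}\,\distpath{\ordSet{i+1}^{*}}$, which is exactly the inequality announced in the paragraph preceding the statement.

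Finally I would substitute this bound into the reformulation of $\osdo(\ordSet^{*})$ term by term, which is legitimate because each coefficient $\itempr{\ordSet{i+1}^{*}}$ is nonnegative, obtaining $\osdo(\ordSet^{*})\le\frac{b(1-a)}{a(1-b)}\sum_{i=1}^{n-1}\itempr{\ordSet{i+1}^{*}}\,\distpath{\ordSet{i+1}^{*}}=\frac{b(1-a)}{a(1-b)}\,\ordpath{\ordSet^{*}}$, where the last equality is \Cref{lem:oshp-reform}. There is no genuine obstacle in the corollary itself; the only point that deserves a moment's attention is that the two monotone factors are bounded at \emph{opposite} endpoints of $[a,b]$, so the emerging constant $\frac{b(1-a)}{a(1-b)}$ is at least $1$ (and equals $1$ exactly when $a=b$). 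All of the substantive work --- the swap/exchange argument on $\ordSet^{*}$ --- is already contained in \Cref{lemma:local_optimal_nonuniform_p}, which we may assume.
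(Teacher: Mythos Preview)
Your proposal is correct and follows essentially the same approach as the paper: both derive the per-position inequality $\dist{\order{i+1},\ordSet_i^{*}}\le\frac{b(1-a)}{a(1-b)}\,\distpath{\ordSet_{i+1}^{*}}$ from \Cref{lemma:local_optimal_nonuniform_p} by bounding the two monotone factors at the appropriate endpoints of $[a,b]$, and then use the reformulations in \Cref{lem:eqform} and \Cref{lem:oshp-reform} to conclude. If anything, you are more explicit than the paper about why the per-position bound suffices (the paper simply asserts ``it suffices to show'' and leaves the summation implicit).
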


% We are ready to prove~\Cref{thm:reduction_non_uniform_p}.
The main result of this section is the following.
\begin{restatable}{theorem}{reductionnonuniformp}
\label{thm:reduction_non_uniform_p}
    Let $\itempr{i} \in [a, b]$, with $0<a < b<1$.
    An $\alpha$-approximation solution for $\omshp$
    is a $\frac{a(1-b)}{2b(1-a)} \alpha$-approximation for $\omsd$.
\end{restatable}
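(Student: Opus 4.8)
The plan is to establish two-sided bounds relating $\osdo(\ordSet)$ and $\ordpath{\ordSet}$ that hold simultaneously for every ordering $\ordSet$, and then chain them through the approximation guarantee. First I would rewrite both objectives with a common indexing. By Lemma~\ref{lem:eqform}, $\osdo(\ordSet) = \sum_{i=1}^{n-1} \itempr{\ordSet{i+1}} \dist{\order{i+1}, \ordSet{i}} = \sum_{i=1}^{n-1} \itempr{\ordSet{i+1}} \sum_{t=1}^{i} \dist{\order{i+1}, \order{t}}$, while $\ordpath{\ordSet} = \sum_{i=1}^{n-1} W_{\ordSet_i}\, \dist{\order{i}, \order{i+1}}$ with $W_{\ordSet_i} = \sum_{j=i+1}^{n} \itempr{\ordSet{j}}$. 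The idea is to swap the order of summation in $\osdo$ so that, like $\ordpath$, it becomes a weighted sum over the \emph{consecutive} edges $(\order{t},\order{t+1})$, plus "long-range" terms $\dist{\order{i+1},\order{t}}$ with $t<i$. The key inequality I expect to exploit is that the distances $\dist{\cdot,\cdot}$ are nonnegative (they come from a distance function), so $\osdo$ dominates the portion of its sum corresponding to consecutive edges. Concretely, restricting the inner sum to $t=i$ gives $\osdo(\ordSet) \ge \sum_{i=1}^{n-1} \itempr{\ordSet{i+1}}\, \dist{\order{i}, \order{i+1}}$, which already looks like $\ordpath$ but with coefficient $\itempr{\ordSet{i+1}}$ in place of $W_{\ordSet_i}$.

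The heart of the argument is therefore comparing the coefficient $\itempr{\ordSet{i+1}} = \prod_{t=1}^{i+1}\itempr{\order{t}}$ against $W_{\ordSet_i} = \sum_{j=i+1}^{n}\itempr{\ordSet{j}} = \sum_{j=i+1}^{n}\prod_{t=1}^{j}\itempr{\order{t}}$, uniformly over orderings, using $\itempr{i}\in[a,b]$ with $0<a<b<1$. On one side, $W_{\ordSet_i} \ge \itempr{\ordSet{i+1}}$ trivially (it is the first term plus nonnegative terms). On the other side, each later term satisfies $\itempr{\ordSet{j}} = \itempr{\ordSet{i+1}} \prod_{t=i+2}^{j}\itempr{\order{t}} \le \itempr{\ordSet{i+1}}\, b^{\,j-i-1}$, so $W_{\ordSet_i} \le \itempr{\ordSet{i+1}} \sum_{m\ge 0} b^m = \itempr{\ordSet{i+1}}/(1-b)$. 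This gives the two-sided bound $\itempr{\ordSet{i+1}} \le W_{\ordSet_i} \le \frac{1}{1-b}\,\itempr{\ordSet{i+1}}$ for every $i$ and every ordering. Plugging this into the consecutive-edge expressions yields $\ordpath{\ordSet} \le \frac{1}{1-b}\sum_i \itempr{\ordSet{i+1}}\dist{\order{i},\order{i+1}} \le \frac{1}{1-b}\,\osdo(\ordSet)$, i.e.\ $\ordpath{\ordSet} \le \frac{1}{1-b}\osdo(\ordSet)$ for all $\ordSet$. For the reverse direction I need to upper-bound the full $\osdo$ (including long-range terms $\dist{\order{i+1},\order{t}}$ for $t<i$) by $\ordpath$; here I would bound each long-range distance by a sum of consecutive distances along the path via the triangle inequality — wait, but the paper does not assume $\dist{\cdot,\cdot}$ is a metric. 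This is the step I expect to be the main obstacle.

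To circumvent the possible lack of triangle inequality, I would instead compare $\osdo(\ordSet)$ to $\ordpath{\ordSet}$ at the level of \emph{all} pairwise distances, not just consecutive ones. Rewrite $\osdo(\ordSet) = \sum_{t < j} \itempr{\ordSet{j}}\,\dist{\order{t},\order{j}}$ (every pair $\{\order{t},\order{j}\}$ with $t<j$ contributes $\itempr{\ordSet{j}}$, summing the reformulation of Lemma~\ref{lem:eqform} appropriately), and similarly expand $\ordpath{\ordSet} = \sum_{i<n} W_{\ordSet_i}\dist{\order{i},\order{i+1}} = \sum_{i<n}\big(\sum_{j>i}\itempr{\ordSet{j}}\big)\dist{\order{i},\order{i+1}}$. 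Then $\ordpath{\ordSet} = \sum_{t<j} \itempr{\ordSet{j}}\,[\,t = j{-}1\,]\,\dist{\order{t},\order{j}}\cdot(\text{count})$ — more precisely $\ordpath{\ordSet}$ puts weight $\itempr{\ordSet{j}}$ on edge $(\order{j-1},\order{j})$ and zero on non-consecutive pairs, so it captures exactly the "diagonal" $t=j-1$ part of $\osdo$. The genuine reduction in the paper must therefore be algorithmic: given an $\alpha$-approximate ordering $\hat\ordSet$ for $\omshp$, one reorders its items (e.g.\ by some rebalancing that keeps high-probability items early) to produce $\ordSet'$ with $\osdo(\ordSet') \ge c\cdot \ordpath{\hat\ordSet}$ for an absolute constant $c$ depending on $a,b$, while also $\ordpath{\hat\ordSet} \ge \alpha\,\ordpath{\ordSet^*_{\ohpo}} \ge \alpha\cdot c'\cdot\osdo(\ordSet^*)$. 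I would locate the constant $\frac{a(1-b)}{2b(1-a)}$ as the product of: the factor $(1-b)$ from $W_{\ordSet_i}\le \itempr{\ordSet{i+1}}/(1-b)$ above; a factor $\tfrac12$ from pairing/matching the long-range pairs against consecutive edges (each item appears in at most two consecutive edges of a path, the standard "Hamiltonian path contains a perfect matching of half the weight" style bound); and a factor $a/(b(1-a))$ comparing prefix-product coefficients $\itempr{\ordSet{j}}$ across two different orderings of the same ground set, using $a \le \itempr{i} \le b$ to bound ratios $\itempr{\ordSet'{j}}/\itempr{\hat\ordSet{j}}$ and the tail sum $\sum_{m\ge1}a^{m}/(1-a)$-type estimates. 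The main obstacle is making the algorithmic reordering explicit and verifying that every pairwise distance in $\osdo(\ordSet^*)$ can be charged — with total overcharge at most $2$ — to a consecutive edge of the constructed path; I would handle this by the classical argument that an optimal $\osdo$-configuration's pairwise-distance "graph" can be decomposed so that each edge is dominated (up to the probability-coefficient distortion $\frac{a(1-b)}{b(1-a)}$) by an edge of a Hamiltonian path, then halved because a path is the union of two matchings.
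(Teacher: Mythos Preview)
Your lower bound $\osdo(\ordSet)\ge (1-b)\,\ordpath{\ordSet}$ for every ordering is correct (and interestingly does not need the triangle inequality), although the paper instead proves $2\,\osdo(\ordSet)\ge \ordpath{\ordSet}$ via the triangle inequality, which it \emph{does} assume. So that direction is fine either way.

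The genuine gap is the reverse direction: you need $\osdo(\ordSet^{*})\le C\,\ordpath{\ordSet^{*}}$ for the \emph{optimal} \omsd ordering, and none of your proposed mechanisms get there. Your speculation that ``the reduction must be algorithmic'' (reorder the $\alpha$-approximate $\omshp$ solution, charge long-range pairs to consecutive edges via a matching decomposition, etc.) is off-track; the paper uses the \emph{same} ordering throughout and no reordering step. The actual idea you are missing is a \emph{local-optimality / exchange} argument: at the optimum $\ordSet^{*}$ of \omsd, swapping any adjacent pair $(\order{i},\order{i+1})$ cannot increase $\osdo$, which after simplification yields
\[
\frac{\itempr{\order{i}}}{1-\itempr{\order{i}}}\,\dist{\order{i},\ordSet^{*}_{i-1}} \;\ge\; \frac{\itempr{\order{i+1}}}{1-\itempr{\order{i+1}}}\,\dist{\order{i+1},\ordSet^{*}_{i-1}}.
\]
Telescoping these inequalities bounds the ``long-range'' sum $\dist{\order{i+1},\ordSet^{*}_{i}}$ by a weighted sum of consecutive distances $\dist{\order{j},\order{j+1}}$, and the bounds $\itempr{i}\in[a,b]$ then give $\osdo(\ordSet^{*})\le \tfrac{b(1-a)}{a(1-b)}\,\ordpath{\ordSet^{*}}$. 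Combined with the factor $\tfrac12$ from the triangle-inequality bound $2\,\osdo\ge\ordpath$, this is exactly where $\tfrac{a(1-b)}{2b(1-a)}$ comes from. Your proposed accounting of the constant (a $\tfrac12$ from a matching argument, an $a/(b(1-a))$ from coefficient ratios across orderings) does not match and would not close the proof.
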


\begin{proof}
    Let $\ordSet^*$ be an optimal sequence for the $\omsd$ problem, 
    and $\ordSet^o$ be an optimal sequence for the $\omshp$ problem.
    Let $\ordSet^{\alpha}$ be an $\alpha$-approximation sequence for $\omshp$. 
    It holds
        \begin{align*}
        \osdo{(\ordSet^*)} 
        & \stackrel{(\dagger)}{\leq} \frac{b(1-a)}{a(1-b)}\ordpath{\ordSet^*} 
          \leq \frac{b(1-a)}{a(1-b)}\ordpath{\ordSet^o} \\ 
        & \leq  \frac{b(1-a)}{a(1-b)} \frac{\ordpath{\ordSet^{\alpha}}}{\alpha} 
          \stackrel{(\ddagger)}{\leq}  \frac{2 b(1-a)}{\alpha a(1-b)  }\osdo{(\ordSet^{\alpha})},
    \end{align*}
   where inequality $(\dagger)$ follows from \Cref{lemma:optimal_inequality_nonuniform_p}, 
   and $(\ddagger)$ from \Cref{lemma:any_inequality}. 
   Thus we proved $\frac{\osdo{(\ordSet^{\alpha})}}{\osdo{(\ordSet^*)}} \geq \frac{a(1-b)}{2b(1-a)} \alpha$. We note that for uniform continuation probabilities, i.e., $a=b$, $\frac{2 b(1-a)}{\alpha a(1-b)} = 1/2$. 
\end{proof}

\iffalse
\begin{corollary}
\label{thm:reduction_uniform_p}
    Consider the special case where $\itempr{i} = \itempr$ for all $i \in U$. 
    An $\alpha$-approximation solution for $\omshp$ 
    is a $\frac{\alpha}{2}$-approximation for $\omsd$.
\end{corollary}
\fi

\section{Uniform Continuation Probabilities}
\label{section:uniform}

In this section, we address the case where all continuation probabilities \itempr{i} are equal, and we devise approximation algorithms for \omsd under this special case.
% Namely, we assume that $\itempr{i} = \itempr$ for all $i \in U$. 
Specifically, we assume $\itempr$ may vary as a function of $n$ and can asymptotically approach 0 or 1, excluding the trivial cases of $\itempr = 0$ or $\itempr = 1$.

We analyze three regimes for $\itempr$, and design algorithms that achieve constant-factor approximations for \omsd in each regime.
Due to space constraints, two regimes are presented in the main content, with the third deferred to the full version of this paper.

As outlined earlier, we reformulate the problem using the intemediate problem \omshp, which then leads to a solution for  \omsd.
Recall that for an order \ordSet, the ordered Hamiltonian-path objective is defined as 
$\ordpath{\ordSet} = \sum_{i = 1}^{n-1} \wi \dist{\order{i}, \order{i+1}}$, 
where $\wi = \sum_{j=i+1}^{n} \itempr{\ordSet{j}}$.
In the uniform probability setting, since $\itempr{i}=\itempr$ for all $i$,  
the coefficient $\wi$ is entirely determined by its position in the order, 
rather than the actual items being ranked.  
In this case, 
%we~have 
\begin{equation}
\wi = \sum\nolimits_{j=i+1}^n \itempr^j = \frac{\itempr^{i+1} - \itempr^{n+1}}{1 - \itempr}.
\label{eq:w_ai}
\end{equation}

\subsection{The best-$\kpar$ items Algorithm}
\label{sec:uniform:small}

We first consider a case where $\itempr < 1$. 
From \Cref{eq:w_ai}, we observe that the term $\itempr^{n+1}$ in the numer\-ator diminishes as $n$ increases, and the coefficient~$\wi$ is dominated by $\nicefrac{\itempr^{i+1}}{(1 - \itempr)}$.
Specifically, when $\lim_{n \rightarrow +\infty}\itempr^{n} = 0$,
$\wi$ \emph{decays  exponentially}, meaning the value of $\ordpath{\ordSet}$ is dominated by the top items in the order \ordSet. 

To obtain a better intuition for this regime, consider a scenario where $\itempr$ is very small, i.e., close to $0$.
In this case, a user is likely to quit after examining only a few items, so the ranking of the first few items becomes critical.
To maximize diversity, the most diverse items should be placed at the top, as they contribute most significantly to the objective of the ordered Hamiltonian path.

Building on this insight, we propose an algorithm for the general case where $\itempr$ is bounded away from~1.
% In particular, our analysis does not require that $\lim_{n \rightarrow +\infty}\itempr^n = 0$.
We call this algorithm \emph{best-$\kpar$ items} (\bke), where $\kpar$ is a parameter controlling the approximation quality. The algorithm always outputs a ranking of all items, with the pseudocode provided in \Cref{alg:bke}.

\begin{algorithm}[t]
  \label[algorithm]{alg:bke}
  \caption{Best-$\kpar$ items (\bke) algorithm}
  \Input{Finite set $\unordSet$, integer $\kpar$, probabilities $\itempr{i}$, diversity function $d$}
  \Output{An ordered sequence of $\unordSet$ according to $\order$, $\ordSet{n} \gets \order{\unordSet}$}
    \If{$\itempr{i}$ are uniform}{
$\ordSet{\kpar} \gets$ Solution of \Cref{eq:best_k_nodes_eq} \;
    }
    \If{$\itempr{i}$ are non-uniform}{
$\ordSet{\kpar} \gets$ Solution of \Cref{def:bestkedge_nonuniform}  \label{line:nonuniform}\;
    }
      $\ordSet{n} \gets$ Extend $\ordSet{\kpar}$ to a Hamiltonian path if $\kpar<n$\;
  \Return $\ordSet{n}$\;
\end{algorithm}

Concretely, the {best-$\kpar$ items} (\bke) algorithm works as follows:
for any integer $\kpar$, let 
\begin{equation*}
 \ordpathhat{\ordSet{\kpar}} = \sum\nolimits_{i = 1}^{\kpar-1} \frac{\itempr^{i+1}}{1-\itempr} \dist{\order{i}, \order{i+1}}   
\end{equation*}
denote the contribution of $\ordSet{\kpar}$ to $\ordpath{\ordSet}$. 
The $\bke$ algorithm seeks the optimal $\kpar$-item sequence, denoted as $\bp_{\kpar}$, 
which maximizes $\ordpathhat{\ordSet{\kpar}}$; in other words,
\begin{equation}
\label{eq:best_k_nodes_eq}
    \bp_{\kpar} = \argmax_{\ordSet{\kpar} \sqsubseteq \order{\unordSet}} \ordpathhat{\ordSet{\kpar}}.
    % \argmax_{\ordSet{\kpar} \sqsubseteq \order{\unordSet}}  \sum\nolimits_{i=1}^{\kpar-1}\frac{\itempr^{i+1}}{1-\itempr} \dist{\order{i},\order{i+1}}.
\end{equation}

The $\bke$ algorithm first construct the
ordered sequence of best-$\kpar$ items, $\bp_{\kpar}$, and then extends this sequence to include the remaining items in $U$.
%  as in \Cref{alg:bke}. 
% An arbitrary ordering of the rest of the items can be used to extend  $\bp_{\kpar}$, 
% as the particular ordering does not influence our analysis of the approximation ratio of the algorithm.
% In practice, a more judicious approach can be used to extend $\bp_{\kpar}$, 
% for instance, a greedy heuristic can be employed
% and append items that maximize the marginal gain on 
The extension can be done using any arbitrary ordering for the remaining items, as it does not affect the algorithm's approximation ratio. However, a more refined approach, such as a greedy heuristic, could be used to append items that maximize the marginal gain on $\ordpathhat{\ordSet{\kpar}}$ at each step.

The performance of \bke algorithm is as follows.

\begin{restatable}{theorem}{pconstantapprox}
\label{thm:p-constant-approx}
Assume that $\itempr{i} = \itempr < 1$, for all $i \in \unordSet$. 
% Moreover, assume that there is a constant positive number $\epsilon >0$ such that $\itempr < 1 - \epsilon$.
For any integer $\kpar$, where $2\leq \kpar \leq n$, 
the \bke algorithm obtains a
$(1 - \itempr^{\kpar-1} - \itempr^{n - \kpar} + \itempr^n)$-approximation for \omshp.
\end{restatable}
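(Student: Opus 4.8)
The plan is to bound the algorithm's output \alg against an optimal ordering $\ordSetstar$ for \omshp in two stages, both mediated by the truncated objective $\ordpathhat$. For the setup, recall that in the uniform regime the position-$i$ coefficient $\wi=\sum_{j=i+1}^{n}\itempr^{j}$ equals $\hat W_i\,(1-\itempr^{\,n-i})$, where $\hat W_i=\nicefrac{\itempr^{\,i+1}}{(1-\itempr)}$ is the coefficient that appears inside $\ordpathhat$ (cf.\ \Cref{eq:w_ai}); in particular $\wi\le\hat W_i$ and the ratio $\wi/\hat W_i=1-\itempr^{\,n-i}$ is increasing in $i$. The two facts I will exploit are: (i) $\bp_{\kpar}$ maximises $\ordpathhat$ over \emph{all} $\kpar$-item sequences, so the $\ordpathhat$-value of any sub-sequence of at most $\kpar$ items of any ordering is at most $\ordpathhat{\bp_{\kpar}}$; and (ii) \alg is an extension of $\bp_{\kpar}$, so its first $\kpar-1$ edges are exactly the $\kpar-1$ edges of $\bp_{\kpar}$.

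\emph{Stage 1: from $\ordpathhat{\bp_{\kpar}}$ to $\ordpath{\alg}$.} Since \alg keeps $\bp_{\kpar}$ as its $\kpar$-prefix and, by uniformity, the weight at position $i$ is $\wi$ regardless of which item sits there, I would discard the nonnegative terms of $\ordpath{\alg}$ at positions $\ge\kpar$ and replace each remaining $\wi$ (for $i\le\kpar-1$) by its smallest possible value $(1-\itempr^{\,n-\kpar+1})\,\hat W_i$; this yields $\ordpath{\alg}\ge(1-\itempr^{\,n-\kpar+1})\,\ordpathhat{\bp_{\kpar}}$.

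\emph{Stage 2: from $\ordpath{\ordSetstar}$ to $\ordpathhat{\bp_{\kpar}}$.} This is the crux. Writing $\ordSetstar=(v_1,\dots,v_n)$ and $d_t$ for the distance of its $t$-th edge, I would cut the $n-1$ edges into consecutive blocks of at most $\kpar-1$ edges: block $m\ge0$ occupies edge positions $m(\kpar-1)+1,\dots,(m+1)(\kpar-1)$ and corresponds to the sub-sequence of $\ordSetstar$ on at most $\kpar$ consecutive vertices starting at $v_{m(\kpar-1)+1}$. Each such block, padded with spare items of $\unordSet$ (possible since $\kpar\le n$, and padding only increases $\ordpathhat$), is a feasible $\kpar$-item sequence, so by fact~(i) its $\ordpathhat$-contribution $\sum_{\ell}\hat W_\ell\,d_{m(\kpar-1)+\ell}$ is at most $\ordpathhat{\bp_{\kpar}}$. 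On the other hand, for a position $i=m(\kpar-1)+\ell$ with $1\le\ell\le\kpar-1$ one has $\wi\le\nicefrac{\itempr^{\,i+1}}{(1-\itempr)}=\itempr^{\,m(\kpar-1)}\hat W_\ell$, so the contribution of block $m$ to $\ordpath{\ordSetstar}$ is at most $\itempr^{\,m(\kpar-1)}\,\ordpathhat{\bp_{\kpar}}$. Summing over blocks and bounding $\sum_{m\ge0}\itempr^{\,m(\kpar-1)}=\nicefrac{1}{(1-\itempr^{\,\kpar-1})}$ gives $\ordpath{\ordSetstar}\le\nicefrac{\ordpathhat{\bp_{\kpar}}}{(1-\itempr^{\,\kpar-1})}$, i.e.\ $\ordpathhat{\bp_{\kpar}}\ge(1-\itempr^{\,\kpar-1})\,\ordpath{\ordSetstar}$.

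Chaining the two stages, $\ordpath{\alg}\ge(1-\itempr^{\,n-\kpar+1})(1-\itempr^{\,\kpar-1})\,\ordpath{\ordSetstar}$; expanding the product to $1-\itempr^{\,\kpar-1}-\itempr^{\,n-\kpar+1}+\itempr^{\,n}$ and using $\itempr^{\,n-\kpar+1}\le\itempr^{\,n-\kpar}$ gives the claimed factor $1-\itempr^{\,\kpar-1}-\itempr^{\,n-\kpar}+\itempr^{\,n}$. I expect Stage~2 to be the main obstacle: one has to be careful that a short trailing block is still a legitimate (shorter) window whose $\ordpathhat$-value is dominated by $\ordpathhat{\bp_{\kpar}}$, and the key insight is that the geometric decay of the block coefficients $\itempr^{\,m(\kpar-1)}$ is exactly what lets a single ``best window'' absorb the entire --- possibly long --- optimal Hamiltonian path.
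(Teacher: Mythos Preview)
Your proposal is correct and follows essentially the same approach as the paper: the paper also lower-bounds the algorithm via the first $\kpar-1$ positions (obtaining the same factor $1-\itempr^{\,n-\kpar+1}$ against $\ordpathhat{\bp_{\kpar}}$, though via subtracting the $\itempr^{n+1}/(1-\itempr)$ term rather than your ratio bound) and upper-bounds the optimum by exactly your block decomposition into windows of $\kpar-1$ edges with geometric weights $\itempr^{\,m(\kpar-1)}$, then chains and relaxes $\itempr^{\,n-\kpar+1}\le\itempr^{\,n-\kpar}$ identically. One minor slip: the ratio $\wi/\hat W_i=1-\itempr^{\,n-i}$ is \emph{decreasing} in $i$, not increasing, but this does not affect your argument since you correctly use its minimum $1-\itempr^{\,n-\kpar+1}$ over $i\le\kpar-1$.
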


\begin{proof}
We start by introducing some additional notation. 
Let $\order$ be the ordering of $\unordSet$ obtained from \Cref{alg:bke}, and $\ordSet$ be the corresponding sequence. By definition, 
\begin{align*}
   \ordpath(\ordSet) = \sum\nolimits_{i=1}^{n-1} \wi \dist{\order{i}, \order{i+1}}. 
\end{align*}

Let $\optorder$ be the optimal ordering for $\omshp$, 
and~$\optordSet$ be the corresponding optimal sequence. 
Hence, 
\begin{align*}
\ordpath(\optordSet) = \sum\nolimits_{i=1}^{n-1} \woi \dist{\optorder{i}, \optorder{i+1}}.
\end{align*}

Since $\itempr{i} = \itempr$, for all $i \in \unordSet$, 
we notice that $\itempr{\ordSet{j}} = \itempr{\optordSet{j}} = \itempr^j$. 
Hence, $\wi = \woi = \sum_{j=i+1}^n \itempr^j$, after expansion, we get
\begin{align*}
     \sum_{j=i+1}^n \itempr^j &\stackrel{(a)}{=} \itempr^{i+1} \frac{1 - \itempr^{n-i}}{1 - \itempr} 
     =  \frac{\itempr^{i+1}}{1-p} - \frac{\itempr^{n+1}}{1-p},
\end{align*}
where equality $(a)$ holds by \Cref{lem:geo-sum}.

\iffalse
\sj{notice that the description I commented out is not correct, as we did not specify the domain of $p$.}
Observe that when $\itempr$ is strictly smaller than $1$, 
both $\wi$ and $\woi$ are dominated by 
the term $\frac{\itempr^{i+1}}{1 - \itempr}$, 
as $\lim_{n \rightarrow \infty} \sum_{j=i+1}^n \itempr^j = \frac{\itempr^{i+1}}{1-p}$.
For a rigorous calculation, 
we can write that $\sum_{j=i+1}^n \itempr^j = \frac{\itempr^{i+1}}{1-p} - \Theta(\itempr^n)$. 
\fi 

Recall that the $\bke$ algorithm 
finds the optimal $\kpar$-item sequence that maximizes 
the following equation:
\begin{equation}
    \ordSet{\kpar} = \argmax_{\ordSet{\kpar} \sqsubseteq \order{\unordSet}}  \sum_{i=1}^{\kpar-1}\frac{\itempr^{i+1}}{1-\itempr} \dist{\order{i},\order{i+1}}.
\end{equation}

For the convenience in our calculations, 
we let $$\ell_{\order} = \sum_{i=1}^{\kpar-1} \frac{p^{i+1}}{1-p} \dist{\order{i}, \order{i+1}}.$$

Let us derive a lower bound on $\ordpath(\ordSet)$ as follows:
\begin{equation}
\label{eq_bk_alg}
\begin{aligned} 
        \ordpath(\ordSet)  &= \sum_{i=1}^{n-1} \frac{\itempr^{i+1}}{1-p} \dist{\order{i}, \order{i+1}} - \sum_{i=1}^{n-1} \frac{\itempr^{n+1}}{1-p} \dist{\order{i}, \order{i+1}} \\
        &\geq \sum_{i=1}^{\kpar-1} \frac{\itempr^{i+1}}{1-p} \dist{\order{i}, \order{i+1}} - \sum_{i=1}^{\kpar-1} \frac{\itempr^{n+1}}{1-p} \dist{\order{i}, \order{i+1}} \\
    & =  \ell_{\order} - \frac{\itempr^{n+1}}{1-p} \sum_{i=1}^{\kpar-1}\dij.\\
\end{aligned}
\end{equation}

Next, we derive an upper bound on $\ordpath{\optordSet}$. 
Let $T = \left\lceil \frac{n}{\kpar - 1} \right\rceil$,
\begin{equation}
\label{eq_bk_opt}
\begin{aligned}
    \ordpath(\optordSet) & = \sum_{i=1}^{n-1} \frac{\itempr^{i+1}}{1-p} \dist{\optorder{i}, \optorder{i+1}} - \sum_{i=1}^{n-1} \frac{\itempr^{n+1}}{1-p} \dist{\optorder{i}, \optorder{i+1}} \\
    &\leq  \sum_{i=1}^{n-1} \frac{\itempr^{i+1}}{1-\itempr} \dist{\optorder{i}, \optorder{i+1}}  \leq \sum_{t=1}^{T} \sum_{i=(t-1)(\kpar-1) + 1}^{t\cdot (\kpar-1)} \frac{\itempr^{i+1}}{1-\itempr} \doij\\
    &\stackrel{(a)}{\leq} \ell_{\order}  (1 + \itempr^{\kpar-1} + \cdots + \itempr^{(\numseg-1)(\kpar-1)} )  \leq \frac{\ell_{\pi}}{1 - \itempr^{\kpar-1}}.
\end{aligned}
\end{equation}
Notice that $(a)$ holds due to the optimality of $\ell_{\order}$. Notice that $T \cdot (\kpar-1)$ might be greater than $n$, we can hypothetically set $d(\pi^o(i), \pi^o(i+1)) =0$ for $i \geq n$, for our analysis.

Combining Equations~(\ref{eq_bk_alg}) and~(\ref{eq_bk_opt}), we get

\begin{equation*}
% \label{eq_bk_apprx}
\begin{aligned}
    \frac{\ordpath(\ordSet)}{\ordpath(\optordSet)} &\geq (1 - \itempr^{\kpar-1}) \frac{ \ell_{\order} - \frac{\itempr^{n+1}}{1-p} \sum_{i=1}^{\kpar-1}\dij}{\ell_{\order}} \\
     & = (1 - \itempr^{\kpar-1}) \left(1 - \frac{\frac{\itempr^{n+1}}{1-p} \sum_{i=1}^{\kpar-1}\dij}{\sum_{i=1}^{\kpar-1} \frac{p^{i+1}}{1-p} \dist{\order{i}, \order{i+1}}}\right) \\
      & \geq (1 - \itempr^{\kpar-1}) \left(1 - \frac{\frac{\itempr^{n+1}}{1-p} \sum_{i=1}^{\kpar-1}\dij}{\sum_{i=1}^{\kpar-1} \frac{p^{\kpar}}{1-p} \dist{\order{i}, \order{i+1}}}\right)\\
    & = (1 - \itempr^{\kpar-1})(1 - \itempr^{n-\kpar+1}) = 1 - \itempr^{\kpar-1} - \itempr^{n-\kpar+1} + \itempr^n 
    \geq 1 - \itempr^{\kpar-1} - \itempr^{n-\kpar} + \itempr^n.
\end{aligned}
\end{equation*}
We have demonstrated that the $\bke$ algorithm provides an asymptotic $(1 - \itempr^{\kpar-1} - \itempr^{n-\kpar} + \itempr^n)$-approximation.
\end{proof}

Notice that when there is a constant $\epsilon >0$ 
such that $\itempr < 1 - \epsilon$, and $\kpar \ll n$, both $\itempr^{n - \kpar}$ and $\itempr^n$ become negligible. In this case, \bke is a $\left(1 - \itempr^{\kpar-1} - \Theta(p^n)\right)$-approximation algorithm.  
This aligns with our intuition: optimizing the dominant term $\ordpath(\ordSet{\kpar})$ yields a good approximation for the $\omshp$ problem.

Observe that searching the best $\kpar$ items for $\bp_{\kpar}$ requires $\bigO(\binom{n}{\kpar} \kpar!)$ time. 
\Cref{thm:p-constant-approx} 
illustrates the inherent trade-off between approx\-i\-ma\-tion-quality and efficiency:
{increasing $\kpar$ improves the approximation but also increases the computational cost of the algorithm.}

\begin{algorithm}[t]
  \label[algorithm]{alg:bkm}
  \caption{Greedy matching (\bkm) algorithm}
  \Input{Finite set $\unordSet$, diversity function $\dist$}
  \Output{An ordered sequence of $\unordSet$ according to $\order$, $\ordSet{n} = \order{\unordSet}$} 
  $\kpar \gets \lfloor \frac{\abs{\unordSet}}{2} \rfloor$, $M \gets $ empty sequence, $S \gets \emptyset$\;
  $E \gets$ all possible pairs $(u, v)$ in decreasing order of $d(u, v)$\;

  \For{$(u, v) \in E$}{
  \If{$M || (u,v)$ forms a matching}{
  $M \gets M || (u, v)$\;
  $S \gets S \cup \{u,v\}$}
  } 
  Naming the edges in $M$: $M \gets ((u_1, v_1),(u_3, v_3) \cdots, (u_{2{\kpar}-1},v_{2{\kpar}-1}))$\;
  \If{$\abs{U} = 2\kpar + 1$}{
    Let $v_{2{\kpar}+1} \in \unordSet \setminus S$, $\order{2{\kpar}+1} \gets v_{2{\kpar}+1}$\;
  }
  \For{$i \gets \kpar$ to $1$}{
    \If{$i=\kpar$ and $\abs{U} = 2\kpar$}{$\order{2{\kpar}-1} \gets v_{2{\kpar}-1}$, $\order{2\kpar} \gets u_{2{\kpar}-1}$\;}
    \uIf{$\dist{v_{2i-1}, \order{2i + 1}} \geq \dist{u_{2i-1}, \order{2i + 1}}$}{$\order{2i-1} \gets u_{2i-1}$, $\order{2i} \gets v_{2i-1}$}
    \Else{$\order{2i-1} \gets v_{2i-1}$, $\order{2i} \gets u_{2i-1}$}
  }
%   $Path \gets \text{Reversed } Path$.\;
  \Return $\ordSet{n} \gets \order{\unordSet}$\;
\end{algorithm}

\subsection{The Greedy-matching Algorithm}
\label{section:uniform:large:2}

In this section, we examine the case where $\itempr$ is close to 1, but not a constant, 
specifically, when $\lim_{n \rightarrow \infty} \itempr = 1$ and $\lim_{n \rightarrow \infty} \itempr^n = 0$.
This suggests that as the number of items increases, the user accepts more items, but eventually quits before examining all of them.
For instance, setting $\itempr = 1 - \frac{1}{\log\log n}$ implies that, in expectation, the user accepts $\log\log n$ items before quitting. For simplicity, we assume $\itempr = 1 - \frac{1}{t_n}$, where $t_n$ is an increasing function of $n$.

In this scenario, applying \bke algorithm to find best-$\kpar$ items becomes computationally infeasible. As \Cref{thm:p-constant-approx} shows, achieving a constant approximation requires setting $\kpar = \Omega(t_n)$. To address this, we introduce a polynomial-time \emph{greedy-matching} ($\bkm$) algorithm.

The idea behind $\bkm$ is to construct a matching of size-$(t_n/2)$ from $t_n$ items. We observe that for the first $t_n$ items, the coefficient $\wi$ does not decrease significantly, since $\itempr^{t_n} = 1/e - \Theta(1/t_n)$. 
Intuitively, the algorithm should find a path of length $t_n$ such that the sum of the edge weights is large.
In addition, the algorithm should assign larger $\wi$ to the edges with larger weights; in other words, the edge weights should be ordered in decreasing order in the path.
% The \bkm algorithm 
The proposed $\bkm$ algorithm, 
presented in \Cref{alg:bkm}, achieves both goals.

% Let $G = (U, E)$ be the complete graph constructed from $U$ with edge weights $w(i,j) = d(i,j)$.
The \bkm algorithm starts by applying a greedy algorithm to obtain a matching~$M$,
which is a sequence of edges in decreasing order of edge weights. 
% sorting the edges by decreasing order of edge weights, and then adds edges to list $M$ in this order as long as $M$ forms a matching. 
The size of $M$ is equal to $\lfloor \frac{n}{2}\rfloor$. 
We denote the matching as $M = ((u_{2j-1}, v_{2j-1}))_{j=1}^{\lfloor \frac{n}{2}\rfloor}$.

The next step is to extend $M$ to a Hamiltonian path. 
Notice that the (undirected) edge $(u_{2j-1}, v_{2j-1})$ is always the $(2j-1)$-th edge on the path, 
and the algorithm needs to decide which item comes first in the path. 
% the position of $u_{2j-1}$ and $v_{2j-1}$. 
To make this decision, 
the algorithm utilizes the triangle inequality. 
% the next node that has already been positioned right after $u_{2j-1}$ and $v_{2j-1}$, and 
% applies the triangle inequality to decide the positions of the nodes.

Let $\order$ be order of the items on the Hamiltonian path, 
as obtained by the $\bkm$ algorithm. 
The above-mentioned construction satisfies two properties:
\begin{enumerate}
    \item \label{property:edge_weight_decreasing} $d((\order{2i-1},\order{2i}))\geq d((\order{2i+1},\order{2i+2}))$, for $i < \lfloor \frac{n}{2}\rfloor$;
    \item \label{property:even_greater_than_half_odd} $d((\order{2i},\order{2i+1})) \geq \frac{1}{2}d((\order{2i-1},\order{2i}))$, for $i \leq \lfloor \frac{n}{2}\rfloor$.
\end{enumerate}

For the approximation quality of \bkm, the following holds.

\begin{restatable}{theorem}{pnonconstantsmalltwo}
\label{thm:p-nonconstant-small-two}
Assume that $\itempr{i} = \itempr$, for any $i \in \unordSet$. 
Let $\itempr$ be a function of $n$ such that $\lim_{n \rightarrow \infty} \itempr =1$ 
and  $\lim_{n \rightarrow \infty} \itempr^n = 0$. 
Moreover, assume that $\itempr = 1 - \frac{1}{t_n}$, 
where $t_n$ is a increasing function of $n$  and $t_n = o(n)$. 
The algorithm \bkm yields 
a ~$\left(\frac{3(e-1)}{16 e^2} - \Theta \left(\frac{1}{t_n}\right)\right)$-approximation 
for \omshp.
\end{restatable}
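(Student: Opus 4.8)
The plan is to lower-bound $\ordpath{\ordSet}$ for the order $\ordSet$ produced by \bkm and compare it against an upper bound on the optimum $\ordpath{\ordSet^*}$.

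First I would set $\kpar = \lfloor n/2 \rfloor$ and focus on the first $\kpar$ edges of the Hamiltonian path, namely the matching edges $(\order{2i-1},\order{2i})$ for $i=1,\dots,\kpar$. Using Equation~(\ref{eq:w_ai}), the coefficient $\wi$ of the $j$-th edge is $\frac{\itempr^{j+1}-\itempr^{n+1}}{1-\itempr}$, which is increasing in $n$ and, crucially, for $j \le t_n$ stays within a constant factor of $\frac{1}{1-\itempr} = t_n$ since $\itempr^{t_n} \to 1/e$. So I would restrict attention to the first $t_n$ edges (or a constant fraction thereof), where each $W$ coefficient is $\Theta(t_n)$; more precisely $\wi \ge \frac{\itempr^{2i} - \itempr^{n+1}}{1-\itempr}$ after accounting for the fact that only odd-indexed edges carry the matching weight, and for $i \le t_n/2$ this is at least $\bigl(\frac1e - \Theta(1/t_n)\bigr) t_n$. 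Then the \bkm objective is at least $\sum_{i=1}^{\Theta(t_n)} \wi \, d(\order{2i-1},\order{2i})$, and by Property~\ref{property:edge_weight_decreasing} the matching weights $d(\order{2i-1},\order{2i})$ are sorted in decreasing order, so this is at least $\Theta(t_n)$ times the sum of the largest $\Theta(t_n)$ greedy-matching edge weights, plus a comparable contribution from the even-indexed edges via Property~\ref{property:even_greater_than_half_odd}, which adds another factor governed by the $\tfrac12$ in that property (contributing the $\tfrac{3}{16}$-type constant together with the $\tfrac{e-1}{e^2}$ coming from the $W$-coefficient bookkeeping).

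Next I would upper-bound $\ordpath{\ordSet^*}$. In the optimal order the coefficients satisfy $\wi \le \frac{\itempr^{i+1}}{1-\itempr} \le t_n$ for all $i$, and $\sum_i \wi = \frac{\itempr^2 - (n-1)\itempr^{n+1} + \dots}{(1-\itempr)^2}$ — but the cleaner route is: $\ordpath{\ordSet^*} \le \sum_{i=1}^{n-1} \wi \, d(\order[*]{i},\order[*]{i+1}) \le t_n \sum_{i=1}^{n-1} d(\order[*]{i},\order[*]{i+1})$, and since the edges of any Hamiltonian path form the union of two matchings, $\sum_{i} d(\order[*]{i},\order[*]{i+1}) \le 2 M_{\max}$ where $M_{\max}$ is the maximum-weight matching value. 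The greedy matching $M$ computed by \bkm is a $\tfrac12$-approximation to $M_{\max}$ (standard greedy on edge weights), but more to the point I would use that the sum of the top $\Theta(t_n)$ greedy edge weights already captures a constant fraction of $M_{\max}$ — this requires a short argument that the greedy edges dominate, edge-by-edge in sorted order, any matching's sorted weights (a classical exchange/matroid-intersection style claim). Dividing the lower bound by the upper bound and tracking the constants $\tfrac1e$, $\tfrac{e-1}{e}$, $\tfrac12$ (triangle-inequality property), and $\tfrac12$ (greedy matching) yields the claimed $\frac{3(e-1)}{16e^2} - \Theta(1/t_n)$.

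The main obstacle is the comparison between the truncated greedy-matching contribution (only $\Theta(t_n)$ edges, weighted by the decaying $\wi$) and the optimum, which can in principle spread large distances across all $n-1$ path edges: I need to argue that because $\wi$ decays geometrically, the optimum's "useful" weight is also concentrated in its first $\Theta(t_n)$ edges up to a constant factor, and that the greedy matching's top $\Theta(t_n)$ weights dominate those of $\ordSet^*$'s best $\Theta(t_n)$ path edges. Handling this concentration carefully — bounding the geometric tail $\sum_{i > c\,t_n} \itempr^{i} d(\cdot)$ against the head — together with the parity bookkeeping (odd edges carry matching weights, even edges are controlled only through the factor-$\tfrac12$ triangle inequality of Property~\ref{property:even_greater_than_half_odd}) is where the constant $\tfrac{3}{16}$ and the $\Theta(1/t_n)$ error term are actually pinned down, and it is the step I would write out most carefully.
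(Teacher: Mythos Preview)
Your lower-bound sketch is headed in the right direction, but the upper bound has a genuine gap that breaks the argument.

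You bound $\ordpath{\ordSet^*} \le t_n \sum_{i=1}^{n-1} d(\orderstar{i},\orderstar{i+1}) \le 2 t_n M_{\max}$, where $M_{\max}$ is the maximum-weight matching on \emph{all} of $\unordSet$. But your lower bound only collects the first $\Theta(t_n)$ greedy matching edges. These two quantities are incomparable by a constant factor: take all pairwise distances equal to $1$. Then $M_{\max}=\lfloor n/2\rfloor$, your upper bound is $t_n n$, while your lower bound on $\ordpath{\ordSet}$ is $\Theta(t_n^2)$, so the ratio is $\Theta(t_n/n)\to 0$ since $t_n=o(n)$. The vague plan to ``argue concentration of the optimum in its first $\Theta(t_n)$ edges'' is exactly the missing piece, and the route via $M_{\max}$ does not supply it. (Also, the ``edge-by-edge domination'' claim you invoke for the greedy matching is false: with edges of weights $10,9,9,1$ where the weight-$10$ edge shares one endpoint with each weight-$9$ edge, greedy picks $\{10,1\}$ while $\{9,9\}$ is a matching, and $1\not\ge 9$. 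Only the \emph{prefix-sum} version---greedy's first $k$ edges have total weight at least half of the maximum size-$k$ matching---is true, and that is what the paper uses as Lemma~\ref{lemma:greedy_half_approx}.)

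The paper's fix is to make both bounds speak the same language: let $\ell_{t_n}$ be the maximum total edge weight of any path on $t_n$ nodes of $\unordSet$. For the upper bound, partition the optimal Hamiltonian path into $\lceil (n-1)/(t_n-1)\rceil$ consecutive blocks of $t_n-1$ edges each; the $t$-th block is a path on $t_n$ nodes, hence has edge-weight sum at most $\ell_{t_n}$, and every coefficient $\woi$ in block $t$ is at most $t_n\,\itempr^{(t-1)t_n}$. Summing the geometric series gives $\ordpath{\optordSet}\le \frac{t_n\,\ell_{t_n}}{1-\itempr^{t_n}}$. For the lower bound, the first $\halfn=\lfloor (t_n-1)/2\rfloor$ greedy matching edges have total weight at least $\tfrac12$ of the best size-$\halfn$ matching (Lemma~\ref{lemma:greedy_half_approx}), which in turn is at least $\tfrac12$ of the best $(t_n-1)$-edge path, i.e., $\tfrac14\ell_{t_n}$ up to a $1-O(1/t_n)$ factor. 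With this common currency $\ell_{t_n}$, the $t_n$ factors cancel and the constants $\tfrac{3}{4}$ (from combining odd and even edges via properties \ref{property:edge_weight_decreasing}--\ref{property:even_greater_than_half_odd} and Chebyshev's sum inequality), $\tfrac14$, $\itempr^{t_n}\to 1/e$, and $1-\itempr^{t_n}\to (e-1)/e$ assemble into $\frac{3(e-1)}{16e^2}$.
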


\iffalse
The key observation for understanding \Cref{thm:p-nonconstant-large}, 
is to notice that $\itempr^i$ does not decrease too much, 
and moreover, when $n$ tends to infinity, 
$\itempr{\ordSet{i+1}} \geq c$ holds for any $1 \leq i \leq n-1$.
\fi

\section{Non-uniform Continuation Probabilities}
\label{section:nonuniformcase}

In this section, we discuss a more general case when the continuation probabilities $\itempr{i}$ are non-uniform. 
We assume $\itempr{i} \in [a,b]$, for all $i \in \unordSet$, where $0<a \leq b<1$. 
We start by adapting the $\bke$ algorithm, introduced in \Cref{section:uniform}, to this setting. 
Given the computational complexity of the $\bke$ algorithm, 
we also propose an efficient greedy algorithm, 
whose performance guarantee is an implication of the $\bke$ algorithm by setting $\kpar = 2$.

\subsection{The best-$\kpar$ items Algorithm}

Recall that the ordered Hamiltonian-path is defined as 
$\ordpath{\ordSet} = \sum_{i = 1}^{n-1} \wi \dist{\order{i}, \order{i+1}}$, 
where $\wi = \sum_{j=i+1}^{n} \itempr{\ordSet{j}}$.
When the continuation probabilities $\itempr{i}$ are uniformly equal to $\itempr$,
we observe that the value of $\wi$ is dominated by $\nicefrac{\itempr^{i+1}}{(1 - \itempr)}$, hereby, we can use $\nicefrac{\itempr^{i+1}}{(1 - \itempr)}$ to approximate $\wi$. 
However, when $\itempr{i}$ are not uniform, it is hard to find a simple and closed-form equation to approximate $\wi$, which is convenient for our analysis. 

Our solution is to simply truncate $\wi$, 
so that the algorithm can determine the best-$\kpar$ items 
by only using the information of the first $\kpar$ items. 
In particular, we define
\begin{equation*}
    \ordpathtilde{\ordSet{\kpar}} = \sum\nolimits_{i=1}^{\kpar-1} \sum\nolimits_{j=i+1}^{\kpar} \itempr{\ordSet{j}} \dist{\order{i}, \order{i+1}},
\end{equation*}
and we find the best-$\kpar$ items by maximizing $\ordpathtilde{\ordSet{\kpar}}$.
We redefine
% $\bp_{\kpar}$ by
\begin{equation}
\label{def:bestkedge_nonuniform}
    \bp_{\kpar} = 
     \arg\max\nolimits_{\ordSet{\kpar} \sqsubseteq \order{\unordSet}} \ordpathtilde{\ordSet{\kpar}}.
\end{equation}
The \bke algorithm (\Cref{alg:bke})
has the following quality guarantee.
\begin{restatable}{theorem}{approbkenonuniform}
\label{thm:appro_bke_nonuniform}
    Let $\itempr{i} \in [a,b]$ for all $i \in U$, where $0 < a \leq  b < 1$.
    Let $\kpar$ be an integer such that $2 \leq \kpar \leq n$.
    The \bke algorithm provides a $\frac{a^2(1 - b)(1-b^{\kpar-1}) }{a^2 + (\kpar-1) b^{\kpar+1} }$-approximation guarantee
    for $\omshp$.
\end{restatable}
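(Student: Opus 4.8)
The plan is to sandwich the value of the ordering produced by \bke between the truncated objective $\tilde{\mathcal{H}}(\bp_{\kpar})$ and a constant fraction of the optimal value $\mathcal{H}(O^{*})$ of \omshp. Let $O$ be the ordering returned by \bke, with top-$\kpar$ prefix $\bp_{\kpar}$, let $O^{*}$ (with order $\pi^{*}$) be optimal for \omshp, and write $d_{i}=d(\pi^{*}(i),\pi^{*}(i+1))$, $W^{*}_{i}=\sum_{j=i+1}^{n}p_{O^{*}_{j}}$ with $p_{O^{*}_{j}}=\prod_{t=1}^{j}p_{\pi^{*}(t)}$, so that $a^{j}\le p_{O^{*}_{j}}\le b^{j}$. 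The first step is the observation $\mathcal{H}(O)\ge\tilde{\mathcal{H}}(\bp_{\kpar})$: for $i\le\kpar-1$ the factors $p_{O_{j}}$ with $j\le\kpar$ depend only on $\bp_{\kpar}$, and since all coefficients $W_{O_{i}}$ and all distances are nonnegative, discarding the edges of $O$ past position $\kpar-1$ and replacing $W_{O_{i}}$ by the partial sum $\sum_{j=i+1}^{\kpar}p_{O_{j}}$ can only decrease the objective. Hence it suffices to lower-bound $\tilde{\mathcal{H}}(\bp_{\kpar})$ by the stated fraction of $\mathcal{H}(O^{*})$.

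Next I would exploit the optimality of $\bp_{\kpar}$ among ordered $\kpar$-subsets of \unordSet in two ways. Feeding in the first $\kpar$ items of $O^{*}$ in their $O^{*}$-order gives $\tilde{\mathcal{H}}(\bp_{\kpar})\ge\tilde{\mathcal{H}}(O^{*}_{\kpar})$. And for any pair $u,v$, completing $(u,v)$ to some ordered $\kpar$-subset (possible since $\kpar\le n$) and retaining only the first inner term of $\tilde{\mathcal{H}}$ yields $\tilde{\mathcal{H}}(\bp_{\kpar})\ge p_{u}p_{v}\,d(u,v)\ge a^{2}d(u,v)$, so $d_{i}\le\tilde{\mathcal{H}}(\bp_{\kpar})/a^{2}$ for every edge of $O^{*}$.

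Then I would split the optimum as $\mathcal{H}(O^{*})=A+B$, the head $A=\sum_{i=1}^{\kpar-1}W^{*}_{i}d_{i}$ over the first $\kpar-1$ edges and the tail $B=\sum_{i=\kpar}^{n-1}W^{*}_{i}d_{i}$ over the rest. For the head, the elementary bounds $p_{O^{*}_{i+1}}\le\sum_{j=i+1}^{\kpar}p_{O^{*}_{j}}$ and $W^{*}_{i}\le p_{O^{*}_{i+1}}/(1-b)$ (the latter a geometric bound from $p_{\pi^{*}(t)}\le b$) combine to $\sum_{j=i+1}^{\kpar}p_{O^{*}_{j}}\ge(1-b)W^{*}_{i}$ for all $i\le\kpar-1$, whence $\tilde{\mathcal{H}}(O^{*}_{\kpar})\ge(1-b)A$ and, with the previous paragraph, $A\le\tilde{\mathcal{H}}(\bp_{\kpar})/(1-b)$. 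For the tail, I would partition the indices $\kpar,\dots,n-1$ into consecutive blocks of size at most $\kpar-1$; the $m$-th block begins at index $m(\kpar-1)+1$, so each of its indices $i$ satisfies $W^{*}_{i}\le b^{i+1}/(1-b)\le b^{m(\kpar-1)+2}/(1-b)$, while $d_{i}\le\tilde{\mathcal{H}}(\bp_{\kpar})/a^{2}$ and the block carries at most $\kpar-1$ edges; summing the resulting geometric series in $b^{\kpar-1}$ over $m\ge1$ gives $B\le\frac{(\kpar-1)\,b^{\kpar+1}}{a^{2}(1-b)(1-b^{\kpar-1})}\,\tilde{\mathcal{H}}(\bp_{\kpar})$. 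Adding the head and tail estimates and using $a^{2}(1-b^{\kpar-1})\le a^{2}$ gives $\mathcal{H}(O^{*})\le\frac{a^{2}+(\kpar-1)b^{\kpar+1}}{a^{2}(1-b)(1-b^{\kpar-1})}\,\tilde{\mathcal{H}}(\bp_{\kpar})$; rearranging and combining with the first step proves the theorem.

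The hard part is the tail term $B$: there the distances $d_{i}$ are unbounded, so the estimate has to use at once the geometric decay of the \emph{optimal} coefficients $W^{*}_{i}$ and the uniform bound $d_{i}\le\tilde{\mathcal{H}}(\bp_{\kpar})/a^{2}$ obtained from the freedom to move any single edge to the front; grouping the tail edges into chunks of $\kpar-1$ is exactly what turns these two ingredients into the factor $1/(1-b^{\kpar-1})$ and the $(\kpar-1)$ appearing in the ratio. A secondary subtlety is ensuring that the truncation of $W_{i}$ to its first $\kpar$ summands is the only loss incurred on the head edges, which is why the head is routed through $\tilde{\mathcal{H}}(O^{*}_{\kpar})$ rather than bounded directly.
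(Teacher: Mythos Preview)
Your argument is correct and reaches the stated ratio, but the decomposition differs from the paper's. The paper partitions \emph{all} of $1,\dots,n-1$ into $T=\lceil(n-1)/(\kpar-1)\rceil$ segments and, within each segment $t$, splits every coefficient $\woi$ at the index $t\kpar$: the near part is compared to $\ell_{\kpar}$ via optimality (picking up a factor $b^{(t-1)(\kpar-1)}$), the far part is controlled by $d_{\max}$ and the geometric tail of the $b^j$. Summing both contributions over all $t$ yields $\ordpath(\optordSet)\le \ell_{\kpar}/(1-b^{\kpar-1}) + (\kpar-1)d_{\max}\,b^{\kpar+1}/((1-b)(1-b^{\kpar}))$, and then $d_{\max}\le \ell_{\kpar}/a^2$ is invoked once. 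You instead make a single head/tail split: for the head you avoid $d_{\max}$ altogether by the neat one-line estimate $\sum_{j=i+1}^{\kpar}p_{O^{*}_{j}}\ge p_{O^{*}_{i+1}}\ge(1-b)W^{*}_{i}$, which routes $A$ through $\tilde{\mathcal H}(O^{*}_{\kpar})$; for the tail you use only the $d_{\max}$ bound, blocked into chunks of $\kpar-1$. Your intermediate ratio $\frac{a^2(1-b)(1-b^{\kpar-1})}{a^2(1-b^{\kpar-1})+(\kpar-1)b^{\kpar+1}}$ is in fact slightly stronger than the theorem's (you correctly weaken via $a^2(1-b^{\kpar-1})\le a^2$). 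The advantage of your route is that the head step is conceptually cleaner and the index bookkeeping is simpler; the paper's segment-by-segment split has the minor benefit that the optimality of $\ell_{\kpar}$ is exploited on every segment, not just the first, which is why its two geometric series have slightly different common ratios ($b^{\kpar-1}$ and $b^{\kpar}$) but ultimately lands at the same final bound.
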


\subsection{The Greedy Algorithm}
Since the $\bke$ algorithm is impractical when $\kpar$ is large, we propose a greedy algorithm, presented in \Cref{alg:greedy_MSD}. Its approximation ratio is obtained by setting $\kpar = 2$ in \Cref{thm:appro_bke_nonuniform}.

\begin{restatable}{corollary}{greedyapproxnonuniform}
\label{corollary:greedy_approx_nonunoformp}
    % Assume $\itempr{i} \in [a,b]$, for all $i \in U$, for $0 \leq a \leq  b < 1$.
The greedy algorithm (\Cref{alg:greedy_MSD}) provides a $\frac{a^2(1 - b)^2}{a^2 + b^2}$-approximation guarantee for $\omshp$.
\end{restatable}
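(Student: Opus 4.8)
The plan is to obtain the result as an immediate specialization of \Cref{thm:appro_bke_nonuniform} to $\kpar = 2$, after first checking that the greedy algorithm of \Cref{alg:greedy_MSD} is exactly the \bke algorithm run with $\kpar = 2$. To this end I would unfold the truncated objective $\ordpathtilde{\ordSet{\kpar}}$ at $\kpar = 2$: the double sum collapses to the single term $i=1,\ j=2$, so $\ordpathtilde{\ordSet{2}} = \itempr{\ordSet{2}}\,\dist{\order{1}, \order{2}} = \itempr{\order{1}}\,\itempr{\order{2}}\,\dist{\order{1}, \order{2}}$, using $\itempr{\ordSet{2}} = \itempr{\order{1}}\itempr{\order{2}}$. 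Hence maximizing $\ordpathtilde{\ordSet{2}}$ over all length-$2$ prefixes (the definition of $\bp_{\kpar}$ in \Cref{def:bestkedge_nonuniform}) is equivalent to choosing the pair $(u,v)$ maximizing $\itempr{u}\,\itempr{v}\,\dist{u,v}$, which is precisely what the greedy algorithm does in its first step; both algorithms then append the remaining $n-2$ items arbitrarily (or greedily), and the proof of \Cref{thm:appro_bke_nonuniform} only uses the optimality of the length-$\kpar$ prefix, not the way it is extended. Therefore the guarantee of \Cref{thm:appro_bke_nonuniform} applies verbatim to \Cref{alg:greedy_MSD} with $\kpar = 2$.

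Next I would substitute $\kpar = 2$ into the bound $\frac{a^2(1-b)(1-b^{\kpar-1})}{a^2 + (\kpar-1)\,b^{\kpar+1}}$, which gives $\frac{a^2(1-b)(1-b)}{a^2 + b^{3}} = \frac{a^2(1-b)^2}{a^2 + b^3}$. Finally, since $0 < b < 1$ implies $b^3 = b\cdot b^2 \le b^2$, we get $a^2 + b^3 \le a^2 + b^2$, and hence
\[
\frac{a^2(1-b)^2}{a^2 + b^3} \;\ge\; \frac{a^2(1-b)^2}{a^2 + b^2}.
\]
Combining this with the approximation inequality $\alg(\instance) \ge \frac{a^2(1-b)^2}{a^2+b^3}\,\opt(\instance) \ge \frac{a^2(1-b)^2}{a^2+b^2}\,\opt(\instance)$ yields the claimed $\frac{a^2(1-b)^2}{a^2+b^2}$-approximation for $\omshp$.

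The argument is short, so there is no real obstacle; the two points requiring care are (i) confirming that \Cref{alg:greedy_MSD} and \bke with $\kpar = 2$ pick the same top two items and that the downstream extension and any tie-breaking are irrelevant to the ratio, and (ii) the elementary relaxation $b^3 \le b^2$ that passes from the sharper bound $\frac{a^2(1-b)^2}{a^2+b^3}$ to the cleaner stated form. One could instead report the tighter constant $\frac{a^2(1-b)^2}{a^2+b^3}$, but the stated bound is simpler and suffices for the experiments and discussion that follow.
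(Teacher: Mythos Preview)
Your proposal is correct and follows the same approach as the paper, which simply states that the ratio ``is obtained by setting $\kpar = 2$ in \Cref{thm:appro_bke_nonuniform}.'' You additionally make explicit the relaxation $b^3 \le b^2$ needed to pass from the direct substitution $\frac{a^2(1-b)^2}{a^2+b^3}$ to the stated bound $\frac{a^2(1-b)^2}{a^2+b^2}$, a step the paper leaves implicit (or which may reflect a minor typo in the corollary).
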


\iffalse
\sj{Please mention the paragraph below in the experiment section, not here.}
We note that although \Cref{alg:greedy_MSD} shares the same theoretical guarantee as the $\bke (\kpar = 2)$ algorithm, it performs better in practice. This is because, after selecting the best two items, the greedy algorithm extends the remaining items to maximize the incremental gain of the $\omsd$ objective, whereas the $\bke (\kpar = 2)$ algorithm extends the remaining items arbitrarily.
\fi

\begin{algorithm}[t]
  \label[algorithm]{alg:greedy_MSD}
  \caption{Greedy algorithm}
  \Input{Finite set $\unordSet$, probability $\itempr_i$, diversity function $\dist$}
  \Output{An ordered sequence of $\unordSet$ according to $\order$, $\ordSet{n} = \order{\unordSet}$} 
  $\ordSet{2} \gets$ Solution of \Cref{def:bestkedge_nonuniform} by setting $\tau = 2$ \;
  % $(\order(1), \order(2)) = \argmax_{i,j \in U} \itempr_i \itempr_j d(i,j)$; \;
  % $\ordSet = [\order(1), \order(2)].$ \;
%   $U = U \setminus \{\order(1), \order(2)\}$, 
  % k=3.\;
  
\For{$t = 2;\ t < n;\ t = t + 1$}{
		$u \gets \argmax_{v \in U \setminus \ordSet{t}} \osdo(\ordSet{t} || v)- \osdo(\ordSet{t})$\;
		$\ordSet{t+1} \gets \ordSet{t} || u$\;
}
  % \While{$\unordSet \neq \emptyset$}
  % {$\order(k) = \argmax_{i \in  U} \osdo{(A || i)}$. \;
  % $U = U \setminus \order(k)$, $A = A || \order(k)$,  $k = k+1$. \;}
  \Return $\ordSet{n}$.\;
\end{algorithm}

% The $||$ in the pseudocode denotes the concatenation of the sequence and element.

\section{Directly Optimizing the \omsd Problem}
In this section, we optimize \osdo directly. We again consider two settings: when the continuation probability is uniform, and when it is non-uniform. For each setting, we propose a greedy algorithm and establish theoretical guarantees.

\subsection{Uniform Continuation Probabilities}

According to \Cref{eq:eqform_sequential_diversity}, we can represent the \emph{sequential sum diversity} of a sequence $\ordSet$, $\osdo(\ordSet)$, as follow:

\begin{equation}
\label{eq:osdo_exp_formulation}
\begin{aligned}
  % \osdo(\ordSet) & = 
  %   \Exp_{\randomordSet \sqsubseteq \ordSet}\left[\sum\nolimits_{i,j \in \randomordSet} \dist{i,j}\right] 
  %   = \sum_{\kprefix=1}^n \pr(\randomordSet= \ordSet{\kprefix}) \left[\sum\nolimits_{i,j \in \randomordSet} \dist{i,j}\right]  \\
  %   &= \sum_{\kprefix=1}^{n-1} \prod_{t=1}^\kprefix \itempr{\order{t}}(1 - \itempr{\order{\kprefix+1}})  \left[\sum\nolimits_{i,j \in \ordSet{\kprefix}} \dist{i,j}\right]+ \prod_{t=1}^n \itempr{\order{t}} \left[\sum\nolimits_{i,j \in \ordSet} \dist{i,j}\right]
    \osdo(\ordSet) 
    &= \sum_{\kprefix=1}^{n-1} \prod_{t=1}^\kprefix \itempr{\order{t}}(1 - \itempr{\order{\kprefix+1}})  \left[\sum\nolimits_{i,j \in \ordSet{\kprefix}} \dist{i,j}\right]+ \prod_{t=1}^n \itempr{\order{t}} \left[\sum\nolimits_{i,j \in \ordSet} \dist{i,j}\right]
\end{aligned}
\end{equation}

Let $\diversity(\randomordSet) = \sum\nolimits_{i,j \in \randomordSet} \dist{i,j}$ denote the max-sum diversification measure that has been well studied in the literature \cite{ravi1994heuristic, gollapudi2009axiomatic, borodin2012max}. When the item continuation probability $\itempr$ is uniform across all items, \Cref{eq:osdo_exp_formulation} is equivalent as follows.
\begin{equation}
\label{eq:osdo_div_formulation}
\begin{aligned}
  \osdo(\ordSet)& =\sum_{\kprefix=1}^n \pr(\randomordSet= \ordSet{\kprefix}) \diversity(\ordSet{\kprefix}) = \sum_{\kprefix=1}^{n-1} p^k (1-p)  \diversity(\ordSet{\kprefix}) + \itempr^n \diversity(\ordSet)
\end{aligned}
\end{equation}

We observe that: (1) when all items have the same continuation probability $\itempr$, the coefficient $\pr(\randomordSet= \ordSet{\kprefix})$ for the term $\diversity(\ordSet{\kprefix})$ is identical for all possible orderings $\ordSet$, and (2) $\pr(\randomordSet= \ordSet{\kprefix})$ is non-increasing in $\kprefix \in \{1, \ldots, n-1\}$, i.e., top-rank items in the ordering \ordSet contributes larger than low-rank items to the \osdo value.
Thus, to maximize \osdo, we can ignore the effect of $p$ and construct an ordering that greedily optimizes $\diversity(\ordSet{\kprefix})$ for all $\kprefix \in \{2, \cdots, n\}$. We show this method in \Cref{alg:greedy_for_diversity}.

\begin{algorithm}[t]
  \label[algorithm]{alg:greedy_for_diversity}
\caption{Greedy Algorithm for $\mathit{\omsd}$ with uniform $p$ \cite{borodin2012max}}
  \Input{Finite set \unordSet, probability \itempr, diversity function \diversity}
  \Output{An ordered sequence of $\unordSet$} 
  $\divS \gets \emptyset$ \;
  \While{$|\divS| < n$}{
    $u \gets \argmax_{u \in \unordSet \setminus \divS} \diversity(\divS)$ \;
    $\divS \gets \divS + u$ \;
  }
  \Return $\divS$ \;
\end{algorithm}

\begin{restatable}{theorem}{onlydistancegreedy}
\label{corollary:greedy_only_distance}
Assume that $\itempr{i} = \itempr$, for all $i \in \unordSet$. 
\Cref{alg:greedy_for_diversity} provides a $\nicefrac{1}{2}$-approximation guarantee for $\omsd$.
\end{restatable}

\begin{proof}

Let $\divSopt{\kprefix} = \argmax_{S \subseteq \unordSet, |S| \leq \kprefix} \diversity(S)$. According to \citet{borodin2012max}, the classical greedy algorithm achieves a $\nicefrac{1}{2}$-approximation for the problem $\max_{S \subseteq \unordSet, |S|\leq k}\diversity(S)$. Thus, let $\divS$ denote the solution returned by \Cref{alg:greedy_for_diversity}, and let $\divS{\kprefix}$ denote the obtained sequence of length $k$ after the $\kprefix$-th iteration. Then $\diversity(\divS{\kprefix}) \geq \nicefrac{1}{2} \diversity(\divSopt{\kprefix})$ for all $\kprefix \in \{1, \ldots, n\}$.

Recall that $\ordSetstar$ denote the optimal sequence for $\osdo$: $\ordSetstar = \argmax_{\ordSet = \order{\unordSet}} \osdo(\ordSet)$. It holds that
\begin{equation}
\label{}
\begin{aligned}
  \osdo(\divS)& =\sum_{\kprefix=1}^n \pr(\randomordSet= \divS{\kprefix}) \diversity(\divS{\kprefix})  \stackrel{(a)}{\geq} \frac{1}{2} \sum_{\kprefix=1}^n \pr(\randomordSet= \divS{\kprefix}) \diversity(\ordSetstar{\kprefix})\\
  &\stackrel{(b)}{=}\frac{1}{2} \sum_{\kprefix=1}^n \pr(\randomordSet= \ordSetstar{\kprefix}) \diversity(\ordSetstar{\kprefix}) \stackrel{(c)}{=} \frac{1}{2} \osdo(\ordSetstar)
\end{aligned}
\end{equation}
where inequality (a) holds because $\diversity(\divS{\kprefix}) \geq \frac{1}{2} \diversity(\divSopt{\kprefix}) \geq \frac{1}{2}\diversity(\ordSetstar{\kprefix})$ for all $\kprefix \in \{1, \ldots, n\}$. Equality (b) holds because the continuation probabilities are equal for all items, and (c) holds by the definition of $\osdo$.

\end{proof}

\subsection{Non-uniform Continuation Probabilities}

When the continuation probabilities are non-uniform, the greedy algorithm in \Cref{alg:greedy_for_diversity} can perform poorly. Consider a worst-case scenario where the first two items selected by the greedy algorithm have continuation probabilities equal to 0, then the expected sequential diversity \osdo is 0. In this case, we must account for the probabilities when constructing the ranking. To this end, we propose \Sbke, which adapts the \bke algorithm by selecting the best sequence of $\kpar$ items that maximizes the sum of the first $\kpar$ terms of $\osdo$, rather than selecting the best $\kpar$ items that maximize $\ordpathtilde{\ordSet{\kpar}}$. We denote this value as $\osdo(\ordSet{\kpar})$, which is defined as follows:
\begin{align}
    \label{eq:first_k_term_osdo}
    \osdo(\ordSet{\kpar}) = \sum\nolimits_{i=1}^{\kpar-1} \itempr{\ordSet{i+1}} \dist{\order{i+1}, \ordSet{i}}.
\end{align}

\begin{restatable}{theorem}{approbkenonuniform}
\label{thm:appro_bke_nonuniform}
    Let $\itempr{i} \in [a,b]$ for all $i \in U$, where $0 < a \leq  b < 1$.
    Let $\kpar$ be an integer such that $2 \leq \kpar \leq n$.
    The \Sbke algorithm, which essentially runs \bke algorithm by setting $\ordSet{\kpar} = \argmax_{\ordSet{\kpar} \sqsubseteq \order{\unordSet}} \osdo(\ordSet{\kpar})$, provides a $\frac{a^\kpar (1-b)(1-b^{\kpar-1})}{a^\kpar(1-b) + b^{\kpar+1}}$-approximation guarantee
    for $\omsd$.
\end{restatable}

\begin{proof}

Let $\ordSetstar$ denote the optimal ranking for the \omsd problem, and let $\orderstar$ denote the corresponding optimal permutation function.
We partition the $n-1$ terms of $\osdo(\ordSetstar)$ into $\numchunk = \lceil \frac{n}{\kpar-1}\rceil$ chunks, with each of the first $\numchunk-1$ chunks containing $\kpar-1$ terms, and denote the sum of the $j$-th chunk as $\chunk{j}$, which is calculated as follows:
\begin{equation*}
\chunk{j} = \sum_{i = (j-1)(\kpar-1) + 1}^{j(\kpar -1)} \prod_{t=1}^{i+1} \itempr{\orderstar{t}} \dist{\orderstar{i+1}, \ordSetstar{i}}
\end{equation*}
where we set $\dist{\orderstar(i+1), \ordSetstar(i)} = 0$ for $i \geq n$, and it holds that $\osdo(\ordSetstar) = \sum_{j=1}^{\numchunk} \chunk{j}$.

\iffalse  % Do not delete this !!!!
% Do not delete this !!!!
Let $L_{max} = \max_{\ordSet{\kpar-1} \sqsubseteq \order{\unordSet}} \osdo(\ordSet{\kpar})$, then it holds that $\chunk{1} \leq \chunkmax$. For all $j \geq 2$, we can upper bound $\chunk{j}$ with \chunkmax as follows
\begin{align*}
    \chunk{j} & = \prod_{k=1}^{(j-1)(\kpar-1) } \itempr{\orderstar{k}} \left[ \sum_{i = (j-1)(\kpar-1) + 1}^{j(\kpar -1)} \prod_{t=(j-1)(\kpar-1)+1}^{i+1} \itempr{\orderstar{t}} \dist{\orderstar{i+1}, \ordSetstar{i}}  \right] \\
    &\leq b^{(j-1)(\kpar-1)} \left[ \sum_{i = (j-1)(\kpar-1) + 1}^{j(\kpar -1)} \prod_{t=(j-1)(\kpar-1)+1}^{i+1} \itempr{\orderstar{t}} \dist{\orderstar{i+1}, \ordSetstar{i}}  \right]\\
    & = b^{(j-1)(\kpar-1)} \Bigg[ 
    \sum_{i = (j-1)(\kpar-1) + 1}^{j(\kpar -1)} \prod_{t=(j-1)(\kpar-1)+1}^{i+1} \itempr{\orderstar{t}} \dist{\orderstar{i+1}, \Big\{\orderstar{(j-1)(\kpar-1) + 1},\cdots, \orderstar{i}\Big\}} \\
    & \qquad +  \sum_{i = (j-1)(\kpar-1) + 1}^{j(\kpar -1)} \prod_{t=(j-1)(\kpar-1)+1}^{i+1} \itempr{\orderstar{t}} \dist{\orderstar{i+1}, \Big\{\orderstar{1},\cdots, \orderstar{(j-1)(\kpar-1)}\Big\}}
    \Bigg]\\
    & \stackrel{(a)}{\leq} b^{(j-1)(\kpar-1)} \chunkmax 
     +  \sum_{i = (j-1)(\kpar-1) + 1}^{j(\kpar -1)} b^{i+1}\dist{\orderstar{i+1}, \Big\{\orderstar{1},\cdots, \orderstar{(j-1)(\kpar-1)}\Big\}}  \\
    & \stackrel{(b)}{\leq} b^{(j-1)(\kpar-1)} \chunkmax 
     +   \frac{(j-1) b^{(j-1)(\kpar-1)}\chunkmax}{a^\kpar} \sum_{i=2}^{\kpar} b^i \\
\end{align*}
\fi
Let $L_{max} = \max_{\ordSet{\kpar-1} \sqsubseteq \order{\unordSet}} \osdo(\ordSet{\kpar})$, then it holds that $\chunk{1} \leq \chunkmax$. Let $\ell = (j-1)(\tau-1)$, we can upper bound $\chunk{j}$ with \chunkmax as follows
\begin{align*}
    \chunk{j} & = \prod_{k=1}^{\ell } \itempr{\orderstar{k}} \left[ \sum_{i = \ell + 1}^{j(\kpar -1)} \prod_{t=\ell+1}^{i+1} \itempr{\orderstar{t}} \dist{\orderstar{i+1}, \ordSetstar{i}}  \right] \leq b^{\ell} \left[ \sum_{i = \ell + 1}^{j(\kpar -1)} \prod_{t=\ell+1}^{i+1} \itempr{\orderstar{t}} \dist{\orderstar{i+1}, \ordSetstar{i}}  \right]\\
    & = b^{\ell} \Bigg[ 
    \sum_{i = \ell + 1}^{j(\kpar -1)} \prod_{t=\ell+1}^{i+1} \itempr{\orderstar{t}} \dist{\orderstar{i+1}, \Big\{\orderstar{\ell + 1},\cdots, \orderstar{i}\Big\}} \\
    & \qquad +  \sum_{i = \ell + 1}^{j(\kpar -1)} \prod_{t=\ell+1}^{i+1} \itempr{\orderstar{t}} \dist{\orderstar{i+1}, \Big\{\orderstar{1},\cdots, \orderstar{\ell}\Big\}}
    \Bigg]\\
    & \stackrel{(a)}{\leq} b^{\ell} \chunkmax 
     +  \sum_{i = \ell + 1}^{j(\kpar -1)} b^{i+1}\dist{\orderstar{i+1}, \Big\{\orderstar{1},\cdots, \orderstar{\ell}\Big\}}   \stackrel{(b)}{\leq} b^{\ell} \chunkmax 
     +   \frac{(j-1) b^{\ell}\chunkmax}{a^\kpar} \sum_{i=2}^{\kpar} b^i \\
\end{align*}
where inequality (a) holds by the definition of $\chunkmax$, and inequality (b) holds because for any $\kpar$ items $\{x_1,\ldots, x_\kpar\}$, we have $\dist(x_\kpar,\{x_1,\ldots, x_{\kpar-1}\}) \leq \sum_{i \neq j \leq \kpar} \dist(x_i,x_j) = \dist(x_1, x_2) + \dist(x_3, \{x_1, x_2\}) + \cdots + \dist(x_\kpar, \{x_1, \ldots, x_{\kpar-1}\}) \leq \frac{\chunkmax}{a^{\kpar}}$, and thus
\begin{align*}
& \dist{\orderstar{i+1}, \Big\{\orderstar{1},\cdots, \orderstar{(j-1)(\kpar-1)}\Big\}} \\ & = \sum_{t=1}^{j-1} \dist(\orderstar{i+1}, \Big\{\orderstar{(t-1)(\kpar-1)+1},\cdots, \orderstar{t(\kpar-1)}\Big\}) \leq \frac{(j-1)\chunkmax}{a^\kpar} 
\end{align*}

Summing over all chunks, we obtain
\begin{align*}
    \osdo(\ordSetstar) & = \sum_{j=1}^{\numchunk} \chunk{j} \leq \sum_{t=0}^{\numchunk-1} b^{(\kpar-1)t} \chunkmax + (\sum_{t=2}^{\kpar} b^t) (\sum_{t=1}^{T-1} t b^{t(\kpar-1)}) \frac{\chunkmax}{a^\kpar} \\
    & \stackrel{(a)}{=}  \chunkmax \left[ \frac{1 - b^{T(\kpar-1)}}{1 - b^{\kpar-1}} + \frac{ b^2(1 - b^{\kpar-1})}{a^ \kpar (1-b)} \cdot \frac{b^{\kpar-1}\left(1 - T b^{(T-1)(\kpar-1)} + (T-1)b^{T(\kpar-1)}\right)}{(1 - b^{\kpar-1})^2} \right] \\
    & \leq \chunkmax \left[\frac{1}{1 - b^{\kpar-1}} + \frac{ b^2 (1 - b^{\kpar-1})}{a^\kpar(1-b)} \cdot \frac{b^{\kpar-1}}{(1 - b^{\kpar-1})^2}\right] = \chunkmax \frac{a^\kpar(1-b) + b^{\kpar+1}}{a^\kpar (1-b)(1-b^{\kpar-1})}
\end{align*}
where inequality (a) holds because 
\[
\sum_{t=1}^{T-1} t \cdot b^{t} = \frac{b\left(1 - T \cdot b^{(T-1)} + (T-1)b^{T}\right)}{(1 - b)^2}
\]
is a standard arithmetico-geometric series summation, as documented in Equation 2.26 of \citet{graham1994concrete}.

Let $\ordSet$ denote the output of the \bke algorithm with $\ordSet{\kpar} = \argmax_{\ordSet{\kpar} \sqsubseteq \order{\unordSet}} \osdo(\ordSet{\kpar})$. It holds that
\begin{equation*}
    \frac{\osdo(\ordSet)}{\osdo(\ordSetstar)} \geq \frac{\chunkmax}{ \chunkmax \frac{a^\kpar(1-b) + b^{\kpar+1}}{a^\kpar (1-b)(1-b^{\kpar-1})}} \geq  \frac{a^\kpar (1-b)(1-b^{\kpar-1})}{a^\kpar(1-b) + b^{\kpar+1}}
\end{equation*}
\end{proof}

\subsection{\bke and \Sbke Approximation Ratio Analysis}

\edit{Let $A$ denote the approximation ratio of the \Sbke algorithm and $B$ denote the approximation ratio of the \bke algorithm, where 
$A = \frac{a^\kpar (1-b)(1-b^{\kpar-1})}{a^\kpar(1-b) + b^{\kpar+1}}$
and 
$B = \frac{a^3(1 - b)^2(1-b^{\kpar-1})}{2b(1-a)(a^2 + (\kpar-1) b^{\kpar+1})}.$ We observe that both approximation ratios $A$ and $B$ exhibit consistent monotonicity properties with respect to the parameters $a$ and $b$. Specifically, for fixed $b$, both $A$ and $B$ are monotonically increasing in $a$. Conversely, for fixed $a$, both $A$ and $B$ are monotonically decreasing in $b$. When both parameters increase simultaneously along the feasible region defined by $a \leq b$, the decreasing effect with respect to $b$ dominates, leading to an overall decrease in both approximation ratios. This dominance can be attributed to the higher-order dependence on $b$ in the denominators of both expressions. 

To compare the performance of these two algorithms, we numerically compare their approximation ratios by visualizing $A-B$. We show the plots in \Cref{fig:approximation_ratio_compare}. For small \kpar (3, 4, 5), $A > B$ holds in the majority of the valid region (100.0 \%, 99.7\%, and 84.1\%, respectively). As $\kpar$ increases (5, 6, 7), the region where $A < B$ expands, with $A > B$ holding in only 65.5\%, 52.1\%, and 42.7\% of the valid region, respectively. This suggests that \Sbke theoretically outperforms \bke when $\tau$ is small, while \bke becomes more competitive as $\kpar$ increases. Consequently, neither algorithm consistently dominates the other, indicating that both approaches remain theoretically valid depending on the choice of $\kpar$.
}

\begin{figure}[t]
\centering
        \includegraphics[width=0.9 \textwidth]{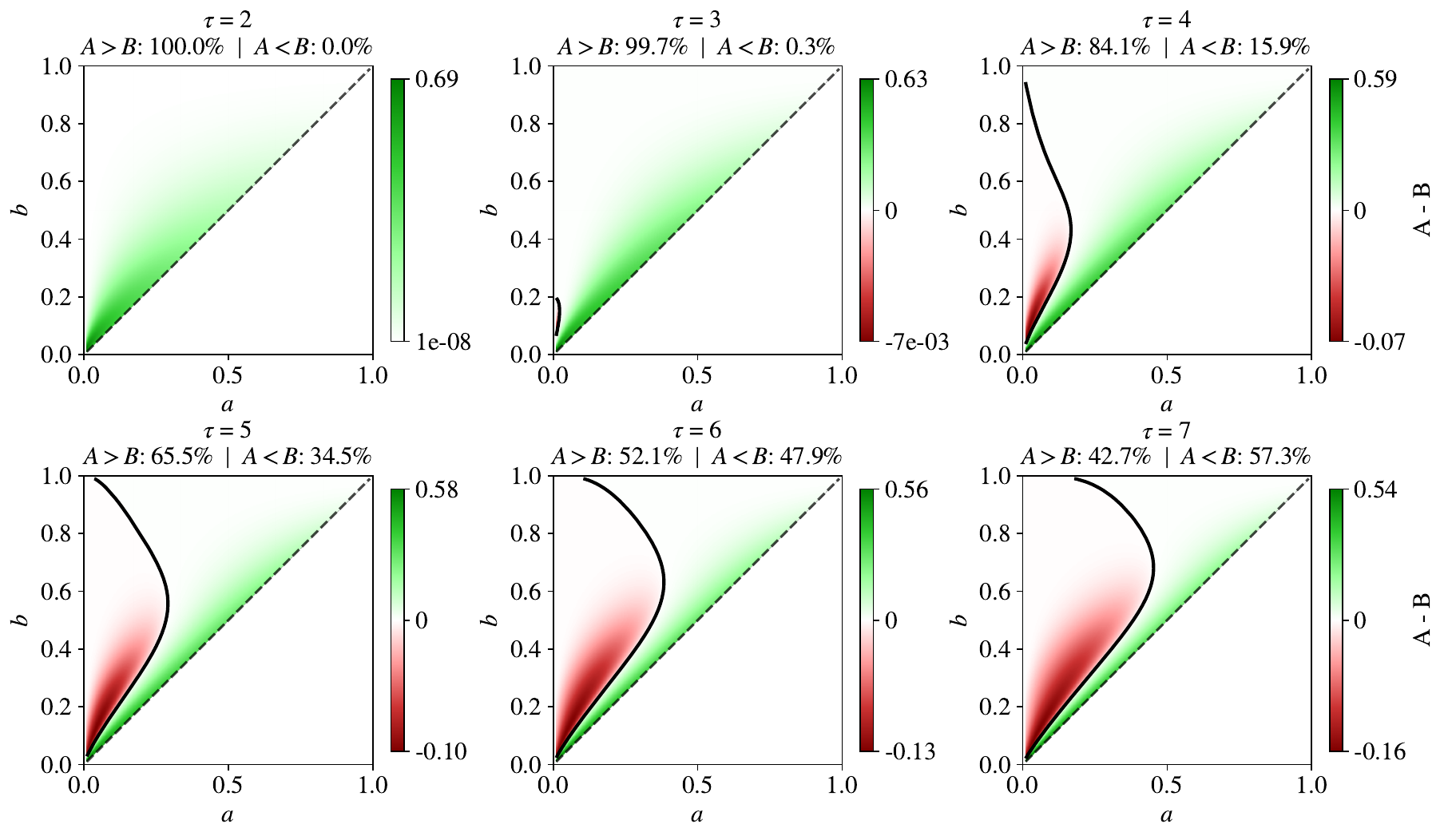}
        \caption{Heatmap of $A - B$ for different values of $\kpar$, where A is the approximation ratio of the \Sbke algorithm, and B the approximation ratio of the \bke algorithm and $0 < a \leq b < 1$. Green regions indicate $A > B$ and red regions indicate $A < B$. The black contour line marks where $A = B$, and the dashed diagonal line represents the boundary $a = b$. The region below the dashed line is invalid as it violates the constraint $a \leq b$}
        \label{fig:approximation_ratio_compare}
\end{figure}

\section{Experimental Evaluation}
\label{sec:exp}

All experiments are conducted on a dual-socket server, 
each socket housing an Intel Xeon Gold 6326 CPU at 2.90GHz, with 128 cores and 256 threads. 
Our implementation and the datasets are publicly available.\footnote{\url{https://github.com/HongLWang/Sequential-diversification-with-provable-guarantees}}

\subsection{Datasets}
\label{sec:exp:datasets}
We evaluate our methods on seven public datasets: five commonly used in recommender systems and two, \letor and \ltrc, from the information retrieval field. 
We refer to these datasets as \emph{recommendation datasets} and \emph{information retrieval datasets}, respectively.
\Cref{table:dataset_information} provides a summary of dataset statistics. Due to space constraints, the dataset URLs are provided in the full paper.

\iffalse
\para{\Coat}:\footnote{\url{https://www.cs.cornell.edu/~schnabts/mnar/}}
\Coat ratings in the range $[1,5]$, accompanied by categorical features, such as gender, jacket type, and color. 
% \cite{schnabel2016recommendations}

% \sj{I do not understand the following description on \KuaiRec.}

\para{\KuaiRec} \cite{gao2022kuairec}:
% The dataset comprises recommendation logs from a video-sharing mobile app, featuring users' watch ratios for videos and corresponding video categories. 
% To get video ratings, we interpolate watch ratios from the interval $[0,2]$ to $[1,5]$. The watch ratio is calculated by dividing the video's watch time by its total duration. A higher watch ratio indicates greater user preference for the video.
Recommendation logs from a video-sharing mobile app, 
containing each video's duration, categories, and user viewing times. 
The \emph{watch ratio}, calculated as the user's viewing time divided by the video's duration, 
and normalized to range [1, 5],
serves as personalized rating. 
% We then interpolate these ratios from [0, 2] to a [1, 5] range to generate user-video ratings.

\para{\Netflix}:\footnote{\url{https://www.kaggle.com/datasets/rishitjavia/netflix-movie-rating-dataset}}$^{,}$\footnote{\url{https://github.com/tommasocarraro/netflix-prize-with-genres}}
Movie ratings in range $[1,5]$ with genre information. 
% is retrieved from a separate dataset.
% From the original dataset, 
We sample $5$\,K movies and 
exclude users with less than $20$~interactions.
% \cite{bennett2007netflix}

\para{\Movielens}:\footnote{\url{https://grouplens.org/datasets/movielens/1m/}}
Movie ratings in range $[1,5]$ with movie genres.
% \cite{harper2015movielens}

\para{\Yahoo}:\footnote{\url{https://webscope.sandbox.yahoo.com/catalog.php?datatype=i&did=67}}
Song ratings in range $[1,5]$ with song genres. 
% From the original dataset, 
We sample $3$\,K songs and exclude users with less than $20$~interactions. 
% We use the R2 dataset.

\para{\letor}:\footnote{\url{https://www.microsoft.com/en-us/research/project/letor-learning-rank-information-retrieval/letor-4-0/}}
Web-search dataset with relevance scores (in $\{0,1,2\}$) on queries-document pairs and document feature vectors. 
% The relevance scores are in $\{0,1,2\}$.
% Web search dataset 
% % with relevance scores and feature vectors. \letor 
% with queries and the corresponding 
% returned documents. Each (query, document) pair has a relevance score from $\{0,1,2\}$. Each document has a real-valued feature vector.
% of dimension $46$.

\para{\ltrc}:\footnote{\url{https://webscope.sandbox.yahoo.com/catalog.php?datatype=c}} 
Web-search dataset with relevance scores (in $\{0,1,2,3,4\}$)
on queries-document pairs and document feature vectors. 
\fi

\para{\Coat}:
\Coat ratings in the range $[1,5]$, accompanied by categorical features, such as gender, jacket type, and color. 
% \cite{schnabel2016recommendations}

% \sj{I do not understand the following description on \KuaiRec.}

\para{\KuaiRec} \cite{gao2022kuairec}:
% The dataset comprises recommendation logs from a video-sharing mobile app, featuring users' watch ratios for videos and corresponding video categories. 
% To get video ratings, we interpolate watch ratios from the interval $[0,2]$ to $[1,5]$. The watch ratio is calculated by dividing the video's watch time by its total duration. A higher watch ratio indicates greater user preference for the video.
Recommendation logs from a video-sharing mobile app, 
containing each video's duration, categories, and user viewing times. 
The \emph{watch ratio}, calculated as the user's viewing time divided by the video's duration, 
and normalized to range [1, 5],
serves as personalized rating. 
% We then interpolate these ratios from [0, 2] to a [1, 5] range to generate user-video ratings.

\para{\Netflix}:
Movie ratings in range $[1,5]$ with genre information. 
% is retrieved from a separate dataset.
% From the original dataset, 
We sample $5$\,K movies and 
exclude users with less than $20$~interactions.
% \cite{bennett2007netflix}

\para{\Movielens}:
Movie ratings in range $[1,5]$ with movie genres.
% \cite{harper2015movielens}

\para{\Yahoo}:
Song ratings in range $[1,5]$ with song genres. 
% From the original dataset, 
We sample $3$\,K songs and exclude users with less than $20$~interactions. 
% We use the R2 dataset.

\para{\letor}:
Web-search dataset with relevance scores (in $\{0,1,2\}$) on queries-document pairs and document feature vectors. 
% The relevance scores are in $\{0,1,2\}$.
% Web search dataset 
% % with relevance scores and feature vectors. \letor 
% with queries and the corresponding 
% returned documents. Each (query, document) pair has a relevance score from $\{0,1,2\}$. Each document has a real-valued feature vector.
% of dimension $46$.

\para{\ltrc}:
Web-search dataset with relevance scores (in $\{0,1,2,3,4\}$)
on queries-document pairs and document feature vectors.

% The relevance scores are from $\{0,1,2,3,4\}$. 
% of dimension 700.

% We use the small version of the dataset.
% \cite{gao2022KuaiRec}

\begin{table}[t]
    \centering
    \caption{Dataset statistics of the $5$ recommendation datasets and $2$ information retrieval datasets. 
    $\mathcal{U}$ and $\mathcal{I}$ represent the set of users (or queries) and items (or documents), respectively. 
    $\# \mathcal{R}$ denotes the number of ratings (or relevance scores), $\mathrm{avg} (\mathcal{R})$ denotes the average number of ratings per user (or documents per query), and $\mathrm{avg}(dist)$ denotes the average item-item (or document-document) distance.
    }
    \label{table:dataset_information}
    \small
    \vspace{-3mm}
        \renewcommand{\arraystretch}{1} % Adjust the spacing
        \setlength{\tabcolsep}{7pt} % Adjust the column separation
        \begin{tabular}{lrrrrr}
            \toprule
            Datasets & $|\mathcal{U}|$ & $|\mathcal{I}|$ & $\#\mathcal{R} $& $\mathrm{avg} (\mathcal{R})  $& $\mathrm{avg}(dist)$ \\
            \midrule
            \Coat          & 290   & 300   & 6\,960   & 24.0 & 0.73 \\
            \KuaiRec   & 1\,411  & 3\,327  & 4\,676\,570 & 3\,314.4 & 0.91 \\
            \Netflix & 4\,999  & 1\,112  & 557\,176  & 1\,121.5 & 0.83 \\
            \Movielens     & 6\,040  & 3\,706  & 1\,000\,208 & 165.6 & 0.83 \\
            \Yahoo      & 21\,181 & 3\,000  & 963\,296  & 45.8 & 0.26 \\ \midrule 
             \letor     & 1\,691 & 65\,316  & 69\,614  & 41.2 & 0.33 \\
             \ltrc     & 5\,154 & 146\,995  & 152\,772  &29.60 & 0.26 \\
            \bottomrule
        \end{tabular} 
\end{table}

\subsection{Experimental Setting} 
\label{subsec_expsetting}

\para{Continuation probability}. 
To better understand how the proposed best-\kpar algorithm performs across various continuation probabilities, we interpolate these probabilities into four different probability regimes: small ($[0.1, 0.3]$), medium ($[0.4, 0.6]$), large ($[0.7,0.9]$) and full ($[0.1, 0.9]$). 
The first three regimes characterize different assumptions about the users' browsing behaviors: users might all be highly likely to accept the system's recommendations or tend to easily quit the system or in the middle.
The last probability regime does not impose any specific assumptions. 

We obtain the continuation probabilities in different ways for the recommendation datasets and the information-retrieval datasets. 
For the former, we first complete the user-item rating matrix using a matrix-factorization approach 
\citep{koren2009matrix}.
The estimated ratings, which range from $[1,5]$, are then interpolated into the four probability regimes.
For the latter datasets, we treat each query and its corresponding documents as a dataset and directly interpolate the relevance scores into probability regimes.
This is meaningful as the documents of different queries have almost no intersection.

%\vspace{1mm}
\para{Distance function}. 
For the recommendation datasets, given the categorical nature of the data, 
we use the \emph{Jaccard distance} as the item-item distance function. 
In particular, if $\mathcal{C}_i$ denotes the set of categories of item $i$, 
the Jaccard distance~$d(i,j)$ is defined as
$d(i,j) = 1- {\abs{\mathcal{C}_i \cap \mathcal{C}_j}}/{\abs{\mathcal{C}_i \cup \mathcal{C}_j}}$.
For the information-retrieval datasets, with the provided document feature vectors, 
we use the \emph{$1-$cosine distance} as the document-document distance function.

%\vspace{1mm}
\para{Additional metrics}.
To evaluate the performance of our algorithms in terms of user engagement and satisfaction, we use two metrics: expected discounted cumulative gain (\expdcg) and expected serendipity (\expseren).
Discounted cumulative gain \cite{jarvelin2002cumulated} measures the accumulated relevance of items in a ranking list, discounted by their position. 
Serendipity \cite{herlocker2004evaluating} quantifies how relevant and unexpected the recommended items are to the user.
We adapt these metrics to account for users' continuation probabilities. 
For any user, let $\itempr_i$ be the continuation probability of item $i$.  

\expdcg is defined as 
$\sum_{j=1}^{|\mathcal{I}|}  \left(\sum_{t=1}^{j} \frac{p_t}{log_2(t+1)} \right) (1-p_{j+1})\prod_{t = 1}^{j} p_t$.
For serendipity, let $H$ be the set of previously rated items, and $\mathcal{C}(H)$ be the set of categories covered by $H$. 
\expseren is defined as 
$\sum_{j=1}^{|\mathcal{I}|}  \left(\sum_{t=1}^{j} p_t \cdot I(t,u) \right) (1-p_{j+1})\prod_{t = 1}^{j} p_t$,
where $I(t,u) = 1$ if $\abs{\mathcal{C}(H \cup \{t\})} > \abs{\mathcal{C}(H)}$ and 0 otherwise. 
Here, $p_t$ assesses the relevance of item $t$, while $I(t,u)$ captures its unexpectedness.

Note that, by definition, the \expseren score for the \letor and \ltrc datasets is $0$ as the sequences for each user comprise only rated items. Similarly, the \KuaiRec dataset also has a \expseren score of $0$, as each user has already covered all categories in their history of ratings. Therefore, we omit the \expseren scores for these datasets.

% We present the average and standard deviation scores for each metric across all users in the medium probability regime in \Cref{table:4metric_medium}. Results for the small, large, and full regimes are provided in the supplementary materials.

\iffalse
\honglian{This paragraph can be deleted later}
There are various approaches to computing relevance scores and item-item distances. However, they are not the focus of this paper. Instead, we focus on how to effectively generate item sequences once these values are provided. As a result, our method can be integrated with any deep learning model that outputs relevance scores and distance functions, allowing our algorithm to then rank all items accordingly.
\fi

\vspace{-2.5mm}
\subsection{Proposed Methods and Baselines}

\subsubsection{Baselines} First, we present the baseline methods used in our experiments.
We reformulate the baselines using our notation to enhance clarity.
We consistently denote by $R$ the set of already-selected items.
The parameter $\lambda$ models the trade-off diversity vs.\ relevance.
We conduct a grid search to find $\lambda \in [0, 1]$ that performs the best according to the $\omsd$ objective 
for baselines \MSD, \MMR, and \DPP.
We report the results using that optimal value of~$\lambda$ for each~method.

%\vspace{1mm}
% \sj{Do we repeat the procedure?}
% \para{\explore}. \explore adopt the Clayton copula function \citep{clayton1978model} to compute for each item a score that combines item relevance and diversity. \explore then generates an ordering of size $k$ by selecting the top-$k$ item with the highest score.

% \honglian{Added baseline \explore.}

\para{\Random}. Randomly shuffle all items into a sequence.

\para{\explore \cite{coppolillo2024relevance}}. Details on the \explore algorithm and its adaptation to our setting are provided in the full paper.

%\vspace{1mm}
\para{Diversity-weighted utility maximization (\DUM) \cite{ashkan2015optimal}}. 
\DUM seeks to find a permutation $\pi$ 
that maximizes a diversity-weighted relevance objective:
$\sum\nolimits_{i=1}^{|\unordSet|} \left( \abs{\mathcal{C}(\ordSet{i})}-\abs{\mathcal{C}(\ordSet{i-1})}\right) \itempr{\order(i)}$,
% where $f(\cdot)$ is a coverage-diversity function and $w(\cdot)$ is the item relevance. 
% In our experiment, 
% We set $w(\order(k)) = \itempr{\order(k)}$, and
where 
$\mathcal{C}(\ordSet{i})$ denotes the set of categories that $\ordSet{i}$ covers.

%\vspace{1mm}
\para{Maximal marginal relevance (\MMR) \cite{carbonell1998use}.} 
\MMR iteratively selects the item $i \in \unordSet$ that maximizes
$\lambda\, \itempr{i} - (1-\lambda) \max\nolimits_{j \in R} (1 - d(i,j))$.

%\vspace{1mm}
\para{Max-sum diversification (\MSD) \cite{borodin2012max}.} 
\MSD iteratively selects the item $i \in \unordSet$ that maximizes  
$ p_i + \lambda  \sum_{j \in R} d(i,j)$.
% We set $\kprefix=|U|$ and run the greedy algorithm from \citet{borodin2012max} to obtain a ranking.

%\vspace{1mm}
\para{Determinantal point process (\DPP) \cite{chen2018fast}.} 
\DPP iteratively selects the item $i$ that maximizes 
$\lambda \itempr_i + (1-\lambda) (\text{log det}(\mathbf{S}_{R\cup\{i\}})-\text{log det}(\mathbf{S}_{R}))$,
% where $R$ is the already selected item set, and 
where $\mathbf{S}$ is a similarity matrix, with $\mathbf{S}_{ij} = 1- d(i,j)$.

\subsubsection{Our Methods} We provide more details of our methods $\bke$, $\bkeh$, \Sbke and \Sbkeh. 

%\vspace{1mm}
\para{$\bke$.} 
We refer to the $\bke$ algorithm as \bketwo, \bkethree, and \bkefour when $\kpar$ is set to be $2$, $3$, and $4$. 
Once the first $\kpar$-sequence is chosen,
we greedily extend the remaining items that maximize the incremental gain of $\omsd$. 
Notice that \bketwo in our implementation is the same as the greedy algorithm we propose in \Cref{alg:greedy_MSD}.
% shares the same theoretical guarantee as the $\bke (\kpar = 2)$ algorithm, it performs better in practice. This is because, after selecting the best two items, the greedy algorithm extends the remaining items to maximize the incremental gain of the $\omsd$ objective, whereas the $\bke (\kpar = 2)$ algorithm extends the remaining items arbitrarily.

%\vspace{1mm}
\para{$\bkeh$.} 
Since \bke can be computationally expensive when \kpar is large, we adopt a greedy approach by selecting the top 100 items that maximize the incremental gain of \omsd, thereby forming a candidate set.
The best \kpar items are exclusively selected within this candidate set. 
We extend the best \kpar items in the same way as in \bke.

\edit{\para{$\Sbke$.} 
This is the \bke algorithm with $\ordSet{\kpar} = \argmax_{\ordSet{\kpar} \sqsubseteq \order{\unordSet}} \osdo(\ordSet{\kpar})$. We refer to the \Sbke algorithm as \Sbkethree and \Sbkefour when $\kpar$ is set to $3$ and $4$, respectively. Note that when $\kpar$ is set to $2$, \Sbketwo is identical to \bketwo. 
%\vspace{1mm}

\para{$\Sbkeh$.} 
Analogous to \bkeh, this is the heuristic version of \Sbke, where we select the best $\kpar$ items exclusively from the same candidate set as described for \bke.

Throughout the experiments, we evaluate \bketwo, \bkethree, \bkefour, \Sbkethree and \Sbkefour on dataset \letor and \ltrc. For the recommendation datasets, we evaluate \bketwo and heuristics \bkethreeH, \bkefourH,\SbkethreeH and \SbkefourH on these datasets. 
}

\subsection{Results and Discussion}

\subsubsection{Performance on Sequential Sum Diversity.} 

\begin{table*}[t]
	\centering
	\caption{Sequential sum diversity $\osdo(O)$ with item continuation probability mapped to $[0.1, 0.3]$. The results marked with $^{*}$ are obtained using the \bkeh and \Sbkeh heuristic}
	\label{table:ssd_small}
	\vspace{-3mm}
	\resizebox{\linewidth}{!}{%
		\renewcommand{\arraystretch}{1} % Adjust the spacing
		\setlength{\tabcolsep}{6pt} % Adjust the column separation
		\begin{tabular}{lccccccc}
		\toprule
		& \Coat & \KuaiRec & \Netflix & \Movielens & \Yahoo & \letor & \ltrc \\
		\midrule
		\Random & 0.039 $\pm$ 0.020 & 0.022 $\pm$ 0.006 & 0.063 $\pm$ 0.026 & 0.059 $\pm$ 0.028 & 0.016 $\pm$ 0.027 & 0.008 $\pm$ 0.008 & 0.019 $\pm$ 0.011 \\
		\explore & 0.103 $\pm$ 0.041 & 0.050 $\pm$ 0.000 & 0.152 $\pm$ 0.026 & \underline{0.173 $\pm$ 0.013} & 0.133 $\pm$ 0.044 & 0.026 $\pm$ 0.021 & 0.041 $\pm$ 0.018 \\
		\DUM & 0.102 $\pm$ 0.038 & 0.033 $\pm$ 0.008 & 0.145 $\pm$ 0.026 & 0.169 $\pm$ 0.018 & \underline{0.148 $\pm$ 0.044} & \underline{0.043 $\pm$ 0.028} & \underline{0.054 $\pm$ 0.020} \\
		\MSD & 0.095 $\pm$ 0.039 & 0.042 $\pm$ 0.006 & 0.161 $\pm$ 0.024 & \textbf{0.183 $\pm$ 0.003} & \underline{0.148 $\pm$ 0.044} & 0.039 $\pm$ 0.026 & 0.051 $\pm$ 0.020 \\
		\MMR & 0.105 $\pm$ 0.041 & \underline{0.051 $\pm$ 0.010} & {0.162 $\pm$ 0.024} & \textbf{0.183 $\pm$ 0.003} & \underline{0.148 $\pm$ 0.044} & 0.040 $\pm$ 0.026 & 0.051 $\pm$ 0.019 \\
		\DPP & 0.105 $\pm$ 0.041 & 0.044 $\pm$ 0.006 & {0.162 $\pm$ 0.024} & \textbf{0.183 $\pm$ 0.003} & \underline{0.148 $\pm$ 0.044} & 0.040 $\pm$ 0.026 & 0.051 $\pm$ 0.019 \\
		\bketwo & \textbf{0.109 $\pm$ 0.042} & \textbf{0.058 $\pm$ 0.010} & \underline{0.163 $\pm$ 0.024} & \textbf{0.183 $\pm$ 0.003} & \underline{0.148 $\pm$ 0.044} & \underline{0.043 $\pm$ 0.028} & \textbf{0.055 $\pm$ 0.020} \\
		\bkethree & \underline{0.106 $\pm$ 0.038}* & \textbf{0.058 $\pm$ 0.009}* & 0.160 $\pm$ 0.025* & \textbf{0.183 $\pm$ 0.003}* & {0.140 $\pm$ 0.045}* & {0.042 $\pm$ 0.027} & \underline{0.054 $\pm$ 0.020} \\
		\bkefour & 0.104 $\pm$ 0.037* & \textbf{0.058 $\pm$ 0.009}* & 0.159* $\pm$ 0.025* & \textbf{0.183 $\pm$ 0.003}* & 0.139 $\pm$ 0.045* & {0.042 $\pm$ 0.027} & \underline{0.054 $\pm$ 0.020} \\
        \edit{\Sbkethree}   & 0.083 \jj 0.046* & \textbf{0.058 $\pm$ 0.010}* & \textbf{0.164 \jj 0.023}* & \textbf{0.183 $\pm$ 0.003}* & \textbf{0.152 \jj 0.041}* & \textbf{0.044 \jj 0.028} & \underline{0.054 \jj 0.198}\\
        \edit{\Sbkefour}   & 0.084 \jj 0.046* & \textbf{0.058 $\pm$ 0.010}* & \textbf{0.164 \jj 0.023}* & \textbf{0.183 $\pm$ 0.003}*  & \textbf{0.152 \jj 0.041}* & \textbf{0.044 \jj 0.028} & \underline{0.054 \jj 0.198} \\
		\bottomrule
		\end{tabular}
	}
\end{table*}

\begin{table*}[t]
  \centering
    \caption{Sequential sum diversity $\osdo(O)$ with item continuation probability mapped to $[0.4, 0.6]$. The results marked with $^{*}$ are obtained using the \bkeh and \Sbkeh heuristic.}
  \label{table:medium}
  \label{table:ssd_medium}
  \vspace{-3mm}
  \resizebox{\linewidth}{!}{%
    \renewcommand{\arraystretch}{1} % Adjust the spacing
    \setlength{\tabcolsep}{4pt} % Adjust the column separation
    \begin{tabular}{lcccccccc}
      \toprule
{\small Methods} & \Coat & \KuaiRec & \Netflix & \Movielens & \Yahoo & \letor & \ltrc \\
\midrule
\Random & 0.646 $\pm$ 0.177 & 0.383 $\pm$ 0.066 & 0.921 $\pm$ 0.220 & 0.878 $\pm$ 0.246 & 0.231 $\pm$ 0.248 & 0.208 $\pm$ 0.106 & 0.359 $\pm$ 0.116 \\
\explore & 1.276 $\pm$ 0.374 & 0.681 $\pm$ 0.082 & 1.861 $\pm$ 0.263 & 2.142 $\pm$ 0.114 & 1.684 $\pm$ 0.465 & 0.440 $\pm$ 0.190 & 0.597 $\pm$ 0.151 \\
\DUM & 1.214 $\pm$ 0.348 & 0.411 $\pm$ 0.086 & 1.779 $\pm$ 0.241 & 2.080 $\pm$ 0.113 & {1.793 $\pm$ 0.466} & 0.564 $\pm$ 0.184 & 0.640 $\pm$ 0.134 \\
\MSD & 1.211 $\pm$ 0.349 & 0.704 $\pm$ 0.057 & 1.888 $\pm$ 0.254 & 2.228 $\pm$ 0.046 & {1.793 $\pm$ 0.466} & 0.577 $\pm$ 0.191 & 0.662 $\pm$ 0.146 \\
\MMR & 1.282 $\pm$ 0.377 & 0.725 $\pm$ 0.080 & 1.937 $\pm$ 0.258 & 2.232 $\pm$ 0.044 & {1.793 $\pm$ 0.466} & 0.570 $\pm$ 0.191 & 0.644 $\pm$ 0.145 \\
\DPP & \underline{1.284 $\pm$ 0.377} & 0.711 $\pm$ 0.073 & 1.933 $\pm$ 0.258 & 2.230 $\pm$ 0.046 & {1.793 $\pm$ 0.466} & 0.557 $\pm$ 0.181 & 0.642 $\pm$ 0.138 \\
\bketwo & \textbf{1.289 $\pm$ 0.376} & {0.778 $\pm$ 0.075} & {1.946 $\pm$ 0.264} & \textbf{2.235 $\pm$ 0.043} & \underline{1.795 $\pm$ 0.466} & {0.592 $\pm$ 0.195} & {0.670 $\pm$ 0.144} \\
\bkethree & 1.184 $\pm$ 0.368* & 0.776 $\pm$ 0.072* & 1.919 $\pm$ 0.270* & \underline{2.234 $\pm$ 0.046}* & 1.729 $\pm$ 0.495* & 0.582 $\pm$ 0.197 & 0.667 $\pm$ 0.145 \\
\bkefour & 1.184 $\pm$ 0.367* & \underline{0.782 $\pm$ 0.073}* & 1.910 $\pm$ 0.272* & 2.233 $\pm$ 0.050* & 1.715 $\pm$ 0.496* & {0.588 $\pm$ 0.197} & {0.668 $\pm$ 0.145} \\
\edit{\Sbkethree} &1.076 $\pm$ 0.420* & \underline{0.782 $\pm$ 0.078}* & \underline{1.948 $\pm$ 0.247}* & 2.228 $\pm$ 0.055* & \textbf{1.834 $\pm$ 0.448}* & \underline{0.597 $\pm$ 0.196} & \underline{0.673 $\pm$ 0.143}\\ 
\edit{\Sbkefour} &1.076 $\pm$ 0.420* & \textbf{0.786 $\pm$ 0.077}* & \textbf{1.951 $\pm$ 0.249}* & 2.227 $\pm$ 0.054* & \textbf{1.834 $\pm$ 0.448}* & \textbf{0.600 $\pm$ 0.199} & \textbf{0.677 $\pm$ 0.144} \\
\bottomrule
    \end{tabular}
  }
\end{table*}

\edit{
We conduct experiments on all datasets evaluating our methods and the baselines. 
For each dataset and method, we obtain a sequence $\ordSet$ of all items 
for each user and calculate the sequential sum diversity $\osdo(\ordSet)$.
We report the average $\osdo(\ordSet)$ values and standard deviations across all users for our proposed algorithms (\bke, \bkeh, \Sbke, \Sbkeh for $k = \{2,3,4\}$) and other baselines as shown in \Cref{table:ssd_small} to \Cref{table:ssd_full}. 

% when continuation probabilities are mapped to the medium regime, 

\paragraph{Low Engagement Settings}
We begin by analyzing performance under low engagement conditions, where continuation probabilities are mapped to the interval $[0.1, 0.3]$ (Table~\ref{table:ssd_small}). In this regime, users are more likely to terminate interaction early, making diversity among the top-ranked items particularly important. Our proposed algorithms consistently outperform all baseline, with the proposed \bketwo (which is equivalent to \Sbketwo) achieves the best performance on four of the seven datasets (\Coat, \KuaiRec, \Movielens, and \ltrc). On the remaining datasets, \Sbkethree and \Sbkefour attain the highest \osdo scores on \Netflix and \Yahoo, respectively.

\paragraph{Medium Engagement Settings}
Under medium engagement levels, with continuation probabilities in the range $[0.4, 0.6]$ (Table~\ref{table:ssd_medium}), the proposed methods continue to perform favorably, achieving the best or second best results on all datasets. \Sbkefour achieves the highest scores on five datasets (\KuaiRec, \Netflix, \Yahoo, \letor, and \ltrc), while \bketwo attains the best performance on \Coat and \Movielens. Overall, the \Sbke family exhibits stronger performance than the \bke family in this setting, with \Sbkethree and \Sbkefour consistently matching or outperforming \bketwo across most datasets.

\paragraph{High Engagement Settings}
We next consider high engagement settings, where continuation probabilities fall within $[0.7, 0.9]$ (Table~\ref{table:ssd_large}). In this regime, users are more likely to examine longer ranking prefixes, and diversity accumulated over extended sequences becomes increasingly important. Baseline methods such as \DPP and \MMR exhibit better performance on several datasets. Nonetheless, the proposed methods remain competitive: \bkefour achieves the best result on \Yahoo, and the \Sbke variants obtain the second-best performance on the same dataset. On \Movielens, \bkefour achieves a score of $71.317$, which is comparable to the best-performing \DPP result ($71.362$).

\paragraph{Performance Across the Full Probability Range}
We further evaluate performance when continuation probabilities span the full range $[0.1, 0.9]$ (Table~\ref{table:ssd_full}), capturing heterogeneous user engagement behaviors. In this setting, our proposed algorithms achieve competitive results when optimization focuses only on the top $3$ or $4$ ranked items. \bkefour achieves the highest performance on \Movielens, while \Sbkefour obtains the best diversity scores on \Yahoo and \letor. Across datasets, the proposed methods consistently rank among the top-performing approaches. While \MMR and \DPP demonstrate strong performance on certain datasets (\Coat, \KuaiRec, and \Netflix), reflecting their effectiveness in accumulating diversity over longer sequences. 

\paragraph{Comparison of \bke and \Sbke Variants}
We next examine the effect of the parameter $\tau$, which controls the number of top-ranked items selected during ranking construction. The \bke variants perform better under low engagement conditions, which matches our privous analysis that the approximation ratio decreases as $a$ and $b$ both increase. Increasing $\tau$ to $3$ or $4$ does not lead to consistent improvements in this regime. The possible reason for this is twofold. First, for $\tau=3$ and $4$, the \bke family optimizes an intermediate \omshp objective, which may introduce approximation errors relative to the target objective \osdo. Second, under low engagement conditions, users quit after checking two items with high probability, so optimizing more items does not improve the user experience but sacrifices diversity in the first two items.
In contrast, under high engagement and the full probability range, \bkethree and \bkefour exhibit modest improvements over \bketwo on several datasets, suggesting that selecting a larger set of top-ranked items can be beneficial when users explore deeper into the ranking.
The \Sbke family, on the contrary, shows consistent improvements when $\tau$ increases in all user engagement settings.

\paragraph{Efficiency and Overall Performance}
A notable property of the proposed methods is their stable performance across all engagement regimes. In contrast to baselines such as \DPP and \MMR, which tend to perform best under high engagement conditions, the proposed algorithms maintain competitive performance across low, medium, and high continuation probabilities.
Finally, results marked with an asterisk ($^{*}$) correspond to heuristic variants (\bkeh and \Sbkeh) of \bkethree, \bkefour, \Sbkethree, and \Sbkefour, which are introduced to improve computational efficiency. Despite relying on approximations, these heuristic methods achieve performance comparable to, and in some cases exceeding, that of the exact \bketwo. This indicates that the proposed heuristics scale effectively while preserving solution quality, making them suitable for large-scale recommendation and ranking applications.
}

\begin{table*}[t]
	\caption{Sequential sum diversity $\osdo(O)$ with item continuation probability mapped to $[0.7, 0.9]$. The results marked with $^{*}$ are obtained using the \bkeh and \Sbkeh heuristic}
	\label{table:ssd_large}
	\vspace{-3mm}
	\resizebox{\linewidth}{!}{%
		\renewcommand{\arraystretch}{1} % Adjust the spacing
		\setlength{\tabcolsep}{6pt} % Adjust the column separation
		\begin{tabular}{lccccccc}
			\toprule
			& \Coat & \KuaiRec & \Netflix & \Movielens & \Yahoo & \letor & \ltrc \\
			\midrule
			\Random & 9.703 $\pm$ 3.704 & 5.737 $\pm$ 1.094 & 16.901 $\pm$ 5.588 & 14.677 $\pm$ 4.644 & 4.014 $\pm$ 3.259 & 2.784 $\pm$ 1.638 & 4.757 $\pm$ 1.339 \\
			\explore & 23.044 $\pm$ 12.390 & 10.668 $\pm$ 1.874 & 47.316 $\pm$ 14.231 & 65.410 $\pm$ 6.715 & 40.505 $\pm$ 19.773 & 5.825 $\pm$ 3.450 & 7.026 $\pm$ 2.161 \\
			\DUM & 22.251 $\pm$ 11.328 & 11.599 $\pm$ 3.289 & 39.696 $\pm$ 9.856 & 48.412 $\pm$ 4.246 & 38.251 $\pm$ 18.866 & 5.317 $\pm$ 2.755 & 6.332 $\pm$ 1.691 \\
			\MSD & 18.586 $\pm$ 8.481 & 9.119 $\pm$ 1.090 & 35.729 $\pm$ 10.010 & 63.242 $\pm$ 8.458 & 38.520 $\pm$ 19.221 & 6.140 $\pm$ 2.825 & 7.088 $\pm$ 1.958 \\
			\MMR & \underline{24.151 $\pm$ 13.421} & \textbf{14.654 $\pm$ 3.874} & \underline{48.595 $\pm$ 14.424} & 69.341 $\pm$ 7.604 & 39.134 $\pm$ 19.191 & \textbf{6.523 $\pm$ 3.803} & \textbf{7.385 $\pm$ 2.352} \\
			\DPP & \textbf{24.213 $\pm$ 13.748} & \underline{12.565 $\pm$ 4.258} & \textbf{50.530 $\pm$ 15.407} & \textbf{71.362 $\pm$ 7.006} & 38.377 $\pm$ 19.021 & 6.025 $\pm$ 3.637 & \underline{7.204 $\pm$ 2.164} \\
			\bketwo & 23.030 $\pm$ 12.385 & 10.920 $\pm$ 1.728 & 47.224 $\pm$ 14.648 & 71.303 $\pm$ 7.748 & 41.677 $\pm$ 21.090 & 6.034 $\pm$ 3.006 & 6.972 $\pm$ 1.971 \\
			\bkethree & 22.839 $\pm$ 12.428* & 10.906 $\pm$ 1.714* & 47.053 $\pm$ 14.294* & 71.305 $\pm$ 7.738* & {41.924 $\pm$ 20.996}* & 6.086 $\pm$ 3.033 & 7.005 $\pm$ 1.990 \\
			\bkefour & 22.826 $\pm$ 12.415* & 10.968 $\pm$ 1.726* & 47.172 $\pm$ 14.266* & \underline{71.317 $\pm$ 7.703}* & \underline{41.969 $\pm$ 21.010}* & \underline{6.165 $\pm$ 3.048} & 7.058 $\pm$ 2.012 \\
            \edit{\Sbkethree} & 16.809 \jj 9.992* & 10.923 \jj 1.805* & 47.180 \jj 13.870* &  70.646 \jj 8.490* & \textbf{42.925 \jj 21.147}* & 6.072 \jj 3.010 & 7.007 \jj 1.980 \\
            \edit{\Sbkefour}  & 16.821 \jj 10.024* & 10.970 \jj 1.822* &  47.208 \jj 13.890* & 70.650 \jj 8.482* & \textbf{42.925 \jj 21.147}* & 6.123 \jj 3.024 & 7.060 \jj 1.990 \\
			\bottomrule
		\end{tabular}
	}
\end{table*}

\begin{table*}[t]
\centering
\caption{Sequential sum diversity $\osdo(O)$ with item continuation probabilities mapped to $[0.1, 0.9]$. The results marked with $^{*}$ are obtained using the \bkeh and \Sbkeh heuristic}
\label{table:ssd_full}
	\vspace{-3mm}
\resizebox{\linewidth}{!}{%
	\renewcommand{\arraystretch}{1} % Adjust the spacing
	\setlength{\tabcolsep}{6pt} % Adjust the column separation
	\begin{tabular}{lccccccc}
		\toprule
		& \Coat & \KuaiRec & \Netflix & \Movielens & \Yahoo & \letor & \ltrc \\
		\midrule
		\Random & 0.555 $\pm$ 0.722 & 0.255 $\pm$ 0.191 & 1.704 $\pm$ 1.822 & 1.157 $\pm$ 1.283 & 0.394 $\pm$ 0.927 & 0.127 $\pm$ 0.137 & 0.189 $\pm$ 0.216 \\
		\explore & 7.665 $\pm$ 10.628 & 1.140 $\pm$ 0.554 & 27.252 $\pm$ 19.761 & 61.736 $\pm$ 9.103 & 26.104 $\pm$ 24.181 & 0.436 $\pm$ 0.419 & 0.702 $\pm$ 0.640 \\
		\DUM & 8.409 $\pm$ 10.708 & 4.808 $\pm$ 3.282 & 22.197 $\pm$ 12.660 & 37.570 $\pm$ 5.342 & 21.018 $\pm$ 20.291 & 0.583 $\pm$ 0.478 & 1.050 $\pm$ 0.865 \\
		\MSD & \underline{9.660 $\pm$ 13.374} & 6.388 $\pm$ 4.981 & 32.270 $\pm$ 22.224 & 59.667 $\pm$ 13.080 & 22.055 $\pm$ 20.980 & 0.595 $\pm$ 0.491 & 1.096 $\pm$ 0.914 \\
		\MMR & \textbf{9.661 $\pm$ 13.608} & \textbf{6.768 $\pm$ 5.221} & \underline{32.703 $\pm$ 22.895} & 68.657 $\pm$ 9.789 & 26.303 $\pm$ 23.451 & {0.612 $\pm$ 0.521} & \textbf{1.136 $\pm$ 0.996} \\
		\DPP & 9.559 $\pm$ 13.534 & \underline{6.739 $\pm$ 5.140} & \textbf{33.157 $\pm$ 23.552} & 66.995 $\pm$ 13.290 & 21.039 $\pm$ 20.324 & 0.592 $\pm$ 0.510 & \textbf{1.136 $\pm$ 1.000} \\
		\bketwo & 9.075 $\pm$ 12.593 & 3.726 $\pm$ 2.575 & 30.790 $\pm$ 21.842 & 69.476 $\pm$ 11.786 & 27.567 $\pm$ 26.163 & 0.610 $\pm$ 0.515 & 1.091 $\pm$ 0.899 \\
		\bkethree & 9.119 $\pm$ 12.677* & 3.745 $\pm$ 2.570* & 30.848 $\pm$ 21.822* & \underline{69.478 $\pm$ 11.780}* & {27.709 $\pm$ 26.100}* & 0.608 $\pm$ 0.516 & 1.099 $\pm$ 0.911 \\
		\bkefour & 9.100 $\pm$ 12.686* & 3.834 $\pm$ 2.617* & 30.915 $\pm$ 21.813* & \textbf{69.484 $\pm$ 11.759}* & {27.746 $\pm$ 26.105}* & {0.615 $\pm$ 0.520} & \underline{1.112 $\pm$ 0.928} \\
        \edit{\Sbkethree}  &  5.621 \jj 9.873* & 3.714 \jj 2.644*  & 30.559 \jj 21.201 * & 68.445 \jj 13.339* & \underline{29.296 \jj 26.436}*  & \underline{0.618 \jj 0.516}  & 1.095 \jj 0.898 \\
        \edit{\Sbkefour}  & 5.634 \jj 8.737* & 3.796 \jj 2.668* & 30.586 \jj 21.202 *  & 68.448 \jj 13.333* & \textbf{29.312 \jj 26.421}*   & \textbf{0.625 \jj 0.523}  & 1.106 \jj 0.908 \\
		\bottomrule
	\end{tabular}
}
\end{table*}

\subsubsection{Performance on \expdcg and \expseren}

\begin{table*}[t]
	\centering
		\caption{\expnum, \expdcg and \expseren values with item continuation probabilities mapped to $[0.1, 0.3]$. The results with $^{*}$ are obtained using the \bkeh and \Sbkeh heuristic}
	\label{table:4metric_small}
	\vspace{-3mm}
  \resizebox{\textwidth}{!}{%
    \renewcommand{\arraystretch}{1} % Adjust the spacing
    \setlength{\tabcolsep}{2pt} % Adjust the column separation
    \begin{tabular}{lllccccccc}
    \toprule
      {\small Metrics} & {\small Methods} & \Coat & \KuaiRec & \Netflix & \Movielens & \Yahoo & \letor & \ltrc \\
    \midrule  
            \multirow{9}{*}{\expdcg}
			& \Random & 0.040 $\pm$ 0.020 & 0.040 $\pm$ 0.012 & 0.054 $\pm$ 0.020 & 0.054 $\pm$ 0.025 & 0.044 $\pm$ 0.026 & 0.022 $\pm$ 0.022 & 0.033 $\pm$ 0.019 \\
			& \explore & \underline{0.077 $\pm$ 0.027} & 0.080 $\pm$ 0.017 & 0.095 $\pm$ 0.017 & \underline{0.107 $\pm$ 0.010} & 0.090 $\pm$ 0.027 & 0.030 $\pm$ 0.028 & 0.042 $\pm$ 0.024 \\
			& \DUM & \textbf{0.082 $\pm$ 0.026} & \textbf{0.101 $\pm$ 0.013} & \textbf{0.105 $\pm$ 0.012} & \textbf{0.112 $\pm$ 0.001} & \textbf{0.099 $\pm$ 0.022} & 0.050 $\pm$ 0.038 & 0.056 $\pm$ 0.024 \\
			& \MSD & 0.078 $\pm$ 0.025 & 0.085 $\pm$ 0.010 & \textbf{0.105 $\pm$ 0.012} & \textbf{0.112 $\pm$ 0.001} & \textbf{0.099 $\pm$ 0.022} & \underline{0.065 $\pm$ 0.036} & \textbf{0.070 $\pm$ 0.026} \\
			& \MMR & \textbf{0.082 $\pm$ 0.026} & \underline{0.091 $\pm$ 0.014} & \textbf{0.105 $\pm$ 0.012} & \textbf{0.112 $\pm$ 0.001} & \textbf{0.099 $\pm$ 0.022} & \underline{0.065 $\pm$ 0.036} & \textbf{0.070 $\pm$ 0.026} \\
			& \DPP & \textbf{0.082 $\pm$ 0.025} & 0.085 $\pm$ 0.010 & \textbf{0.105 $\pm$ 0.012} & \textbf{0.112 $\pm$ 0.001} & \textbf{0.099 $\pm$ 0.022} & \textbf{0.066 $\pm$ 0.036} & \underline{0.069 $\pm$ 0.025} \\
			& \bketwo & 0.075 $\pm$ 0.026 & 0.070 $\pm$ 0.023 & 0.103 $\pm$ 0.012 & \textbf{0.112 $\pm$ 0.001} & 0.095 $\pm$ 0.025 & 0.050 $\pm$ 0.038 & 0.056 $\pm$ 0.024 \\
			& \bkethree & {0.076 $\pm$ 0.026}* & 0.090 $\pm$ 0.011* & \underline{0.104 $\pm$ 0.012}* & \textbf{0.112 $\pm$ 0.001}* & \textbf{0.099 $\pm$ 0.022}* & 0.063 $\pm$ 0.035 & 0.064 $\pm$ 0.025 \\
            & \bkefour & {0.076 \jj 0.036}* & \underline{0.091 \jj 0.011}* & \underline{0.104 \jj 0.012}* & \textbf{0.112 \jj 0.001}* & \textbf{0.099 \jj 0.022}* & 0.063 \jj 0.035 & 0.064 \jj 0.025 \\
			& \Sbkethree & 0.062 \jj 0.031* & 0.076 \jj 0.022* & \underline{0.104 \jj 0.012}* & \textbf{0.112 \jj 0.002}* & \underline{0.098 \jj 0.023}* & 0.052 \jj 0.038 & 0.058 \jj 0.025 \\
            & \Sbkefour & 0.062 \jj 0.031* & 0.077 \jj 0.021* & \underline{0.104 \jj 0.012}* & \textbf{0.112 \jj 0.002}* & \underline{0.098 \jj 0.023}* & 0.053 \jj 0.038 & 0.058 \jj 0.025 \\

        \midrule
            \multirow{9}{*}{\expseren}
			& \Random & 0.013 $\pm$ 0.020 & -- & 0.005 $\pm$ 0.015 & 0.006 $\pm$ 0.018 & 0.003 $\pm$ 0.012 & -- & -- \\
			& \explore & 0.029 $\pm$ 0.035 & -- & 0.014 $\pm$ 0.027 & \underline{0.018 $\pm$ 0.029} & 0.020 $\pm$ 0.021 & -- & -- \\
			& \DUM & 0.022 $\pm$ 0.029 & -- & \textbf{0.019 $\pm$ 0.033} & \textbf{0.019 $\pm$ 0.032} & \underline{0.021 $\pm$ 0.017} & -- & -- \\
			& \MSD & 0.017 $\pm$ 0.025 & -- & \textbf{0.019 $\pm$ 0.032} & 0.013 $\pm$ 0.026 & 0.019 $\pm$ 0.015 & -- & -- \\
			& \MMR & 0.019 $\pm$ 0.026 & -- & \textbf{0.019 $\pm$ 0.032} & 0.013 $\pm$ 0.026 & 0.019 $\pm$ 0.015 & -- & -- \\
			& \DPP & 0.019 $\pm$ 0.026 & -- & \textbf{0.019 $\pm$ 0.032} & 0.013 $\pm$ 0.026 & 0.019 $\pm$ 0.015 & -- & -- \\
			& \bketwo & \underline{0.037 $\pm$ 0.038} & -- & \underline{0.018 $\pm$ 0.031} & 0.013 $\pm$ 0.026 & \textbf{0.027 $\pm$ 0.023} & -- & -- \\
			& \bkethree & \textbf{0.041 $\pm$ 0.041}* & -- & 0.017 $\pm$ 0.030* & 0.013 $\pm$ 0.026* & 0.017 $\pm$ 0.014* & -- & -- \\
			& \bkefour & \textbf{0.041 $\pm$ 0.041}* & -- & 0.017 $\pm$ 0.030* & 0.013 $\pm$ 0.026* & 0.017 $\pm$ 0.014* & -- & -- \\
            & \edit{\Sbkethree} & 0.025 \jj 0.035* & -- & \underline{0.018 \jj 0.033}* & 0.013 \jj 0.027* & \textbf{0.027 \jj 0.022}* & -- & -- \\
            & \edit{\Sbkefour} & 0.025 \jj 0.035* & -- & \underline{0.018 \jj 0.033}* & 0.013 \jj 0.027* & \textbf{0.027 \jj 0.022}* & -- & -- \\
		\bottomrule
		\end{tabular}
	}
\end{table*}

\begin{table*}[t]
  \centering
    \caption{\expdcg and \expseren values with item continuation probabilities mapped to $[0.4, 0.6]$. The results marked with $^{*}$ are obtained using the \bkeh and \Sbkeh heuristic}
  \label{table:4metric_medium}
  \vspace{-3mm}
  \resizebox{\textwidth}{!}{%
    \renewcommand{\arraystretch}{1} % Adjust the spacing
    \setlength{\tabcolsep}{2pt} % Adjust the column separation
    \begin{tabular}{llccccccccccc}
    \toprule
      {\small Metrics} & {\small Methods} & \Coat & \KuaiRec & \Netflix & \Movielens & \Yahoo & \letor & \ltrc \\
    \midrule  
    \multirow{9}{*}{\expdcg}
    & \Random & 0.354 ± 0.070 & 0.347 ± 0.049 & 0.413 ± 0.065 & 0.394 ± 0.080 & 0.360 ± 0.094 & 0.274 ± 0.082 & 0.323 ± 0.068 \\
    & \explore & 0.496 ± 0.100 & 0.483 ± 0.048 & 0.575 ± 0.055 & \underline{0.620 ± 0.026} & 0.550 ± 0.100 & 0.330 ± 0.104 & 0.373 ± 0.082 \\
    & \DUM & \textbf{0.513 ± 0.096} & \textbf{0.590 ± 0.048} & \textbf{0.604 ± 0.044} & \textbf{0.635 ± 0.005} & {0.572 ± 0.086} & 0.327 ± 0.104 & 0.378 ± 0.078 \\
    & \MSD & 0.489 ± 0.091 & 0.459 ± 0.028 & 0.596 ± 0.044 & \textbf{0.635 ± 0.005} & {0.572 ± 0.086} & \textbf{0.416 ± 0.123} & \underline{0.443 ± 0.084} \\
    & \MMR & \underline{0.511 ± 0.095} & 0.465 ± 0.030 & \underline{0.600 ± 0.044} & \textbf{0.635 ± 0.005} & {0.572 ± 0.086} & 0.400 ± 0.115 & \textbf{0.447 ± 0.088} \\
    & \DPP & 0.510 ± 0.095 & 0.472 ± 0.032 & \underline{0.600 ± 0.044} & \textbf{0.635 ± 0.005} & {0.572 ± 0.086} & \underline{0.406 ± 0.117} & 0.440 ± 0.081 \\
    & \bketwo & 0.494 ± 0.096 & 0.409 ± 0.072 & 0.597 ± 0.046 & \textbf{0.635 ± 0.005} & 0.565 ± 0.092 & 0.326 ± 0.104 & 0.378 ± 0.078 \\
    & \bkethree & 0.499 ± 0.096* & 0.477 ± 0.031* & \underline{0.600 ± 0.044}* & \textbf{0.635 ± 0.005}* & \textbf{0.577 ± 0.085}* & 0.390 ± 0.108 & 0.402 ± 0.080 \\
    & \bkefour & 0.499 ± 0.096* & \underline{0.480 ± 0.032}* & \underline{0.600 ± 0.044}* & \textbf{0.635 ± 0.005}* & \textbf{0.577 ± 0.085}* & 0.393 ± 0.110 & 0.407 ± 0.080 \\
    & \edit{\Sbkethree} & 0.438 $\pm$ 0.120* & 0.395 $\pm$ 0.070* & \underline{0.600 $\pm$ 0.043}* &  \textbf{0.635 $\pm$ 0.006}*  & \underline{0.575 $\pm$ 0.086}* & 0.346 $\pm$ 0.117 & 0.393 $\pm$ 0.079 \\
    & \edit{\Sbkefour} & 0.438 $\pm$ 0.119* & 0.397 $\pm$ 0.071* & \underline{0.600 $\pm$ 0.043}* &  \textbf{0.635 $\pm$ 0.006}*  & \underline{0.575 $\pm$ 0.086}* & 0.352 $\pm$ 0.120 & 0.401 $\pm$ 0.081 \\
    
    \midrule
    \multirow{9}{*}{\expseren}
    & \Random & 0.141 ± 0.138 & -- & 0.028 ± 0.068 & 0.043 ± 0.091 & 0.040 ± 0.079 & -- & -- \\
    & \explore & 0.247 ± 0.186 & -- & 0.132 ± 0.151 & \textbf{0.141 ± 0.153} & 0.275 ± 0.145 & -- & -- \\
    & \DUM & 0.204 ± 0.159 & -- & 0.133 ± 0.171 & \underline{0.136 ± 0.159} & {0.309 ± 0.140} & -- & -- \\
    & \MSD & 0.225 ± 0.147 & -- & \textbf{0.152 ± 0.164} & 0.115 ± 0.144 & 0.303 ± 0.136 & -- & -- \\
    & \MMR & 0.214 ± 0.157 & -- & 0.148 ± 0.164 & 0.115 ± 0.144 & 0.303 ± 0.136 & -- & -- \\
    & \DPP & 0.218 ± 0.160 & -- & \underline{0.149 ± 0.163} & 0.115 ± 0.144 & 0.303 ± 0.136 & -- & -- \\
    & \bketwo & 0.288 ± 0.202 & -- & 0.143 ± 0.161 & 0.115 ± 0.144 & \underline{0.325 ± 0.143} & -- & -- \\
    & \bkethree & \textbf{0.354 ± 0.238}* & -- & 0.139 ± 0.162* & 0.115 ± 0.144* & 0.264 ± 0.143* & -- & -- \\
    & \bkefour & \underline{0.351 ± 0.240}* & -- & 0.139 ± 0.165* & 0.115 ± 0.144* & 0.259 ± 0.145* & -- & -- \\
    & \edit{\Sbkethree} & 0.200 $\pm$ 0.198* & -- & 0.139 $\pm$ 0.169* & 0.114 $\pm$ 0.144* & \textbf{0.333 $\pm$ 0.135}*  &-- & --\\
    & \edit{\Sbkefour} & 0.200 $\pm$ 0.198* & -- & 0.136 $\pm$ 0.166* & 0.115 $\pm$ 0.144* & \textbf{0.333 $\pm$ 0.135}* &-- &  --\\
    \bottomrule
    \end{tabular}
  }
\end{table*}

\edit{
We further evaluate all methods using \expdcg and \expseren and report the results in Tables~\ref{table:4metric_small} to \ref{table:4metric_full}.

\paragraph{Expected DCG Performance}
Baselines such as \DUM, \MSD, \MMR, and \DPP achieve strong \expdcg scores, as they explicitly incorporate relevance scores, particularly \DUM, which prioritizes selecting the most relevant items when constructing rankings. As expected, our proposed algorithms do not achieve the highest scores, since they do not explicitly optimize for relevance. Nevertheless, they achieve comparable performance to these baselines. \bkethree and \bkefour match the best-performing methods on \Movielens across all settings and achieve top results on \Yahoo under medium and high engagement regimes. The \Sbke variants similarly perform well on \Netflix and \Yahoo, demonstrating that optimizing for sequential sum diversity does not substantially compromise relevance.

\paragraph{Expected Serendipity Performance}
Our algorithms demonstrate particular strength in serendipity. The \bke and \Sbke families achieve the best or second-best results across all regimes and datasets, with only three exceptions: \Movielens in the small and medium probability regimes, and \Netflix in the medium regime. This strong serendipity performance stems from our sequential sum diversity objective, which naturally surfaces unexpected yet relevant items. These results suggest that our algorithms have greater potential for generating exploratory recommendations.
}

\begin{table*}[t]
	\centering
		\caption{\expnum, \expdcg and \expseren values with item continuation probabilities mapped to $[0.7, 0.9]$. The results with $^{*}$ are obtained using the \bkeh and \Sbkeh heuristic}
	\label{table:4metric_large}
	\vspace{-3mm}
  \resizebox{\textwidth}{!}{%
    \renewcommand{\arraystretch}{1} % Adjust the spacing
    \setlength{\tabcolsep}{2pt} % Adjust the column separation
    \begin{tabular}{lllccccccc}
    \toprule
      {\small Metrics} & {\small Methods} & \Coat & \KuaiRec & \Netflix & \Movielens & \Yahoo & \letor & \ltrc \\
    \midrule  
            \multirow{9}{*}{\expdcg}
			& \Random & 1.564 $\pm$ 0.275 & 1.565 $\pm$ 0.162 & 1.889 $\pm$ 0.280 & 1.782 $\pm$ 0.287 & 1.633 $\pm$ 0.405 & 1.228 $\pm$ 0.281 & 1.438 $\pm$ 0.235 \\
			& \explore & 2.267 $\pm$ 0.534 & 2.034 $\pm$ 0.174 & 2.881 $\pm$ 0.365 & 3.280 $\pm$ 0.101 & \textbf{2.704 $\pm$ 0.605} & 1.578 $\pm$ 0.443 & 1.659 $\pm$ 0.278 \\
			& \DUM & \underline{2.324 $\pm$ 0.532} & \underline{2.789 $\pm$ 0.308} & 2.880 $\pm$ 0.308 & 3.141 $\pm$ 0.069 & 2.636 $\pm$ 0.556 & 1.402 $\pm$ 0.355 & 1.562 $\pm$ 0.253 \\
			& \MSD & 2.101 $\pm$ 0.437 & 1.772 $\pm$ 0.104 & 2.681 $\pm$ 0.290 & 3.213 $\pm$ 0.109 & 2.637 $\pm$ 0.556 & 1.617 $\pm$ 0.411 & 1.744 $\pm$ 0.281 \\
			& \MMR & 2.308 $\pm$ 0.540 & 2.416 $\pm$ 0.443 & \textbf{3.003 $\pm$ 0.350} & 3.282 $\pm$ 0.097 & 2.644 $\pm$ 0.557 & \underline{1.748 $\pm$ 0.564} & \textbf{1.833 $\pm$ 0.341} \\
			& \DPP & \textbf{2.346 $\pm$ 0.555} & \textbf{2.905 $\pm$ 0.350} & \underline{2.993 $\pm$ 0.349} & \textbf{3.327 $\pm$ 0.075} & 2.637 $\pm$ 0.556 & \textbf{1.834 $\pm$ 0.575} & \underline{1.814 $\pm$ 0.310} \\
			& \bketwo & 2.219 $\pm$ 0.508 & 1.772 $\pm$ 0.145 & 2.841 $\pm$ 0.338 & 3.286 $\pm$ 0.097 & 2.653 $\pm$ 0.578 & 1.400 $\pm$ 0.341 & 1.567 $\pm$ 0.253 \\
			& \bkethree & 2.233 $\pm$ 0.512* & 1.887 $\pm$ 0.130* & 2.855 $\pm$ 0.332* & 3.287 $\pm$ 0.096* & 2.698 $\pm$ 0.553* & 1.523 $\pm$ 0.365 & 1.615 $\pm$ 0.259 \\
			& \bkefour & 2.232 $\pm$ 0.511* & 1.891 $\pm$ 0.131* & 2.865 $\pm$ 0.330* & \underline{3.288 $\pm$ 0.093}* & \underline{2.703 $\pm$ 0.553}* & 1.545 $\pm$ 0.373 & 1.637 $\pm$ 0.263 \\
            & \Sbkethree & 1.946 \jj 0.561* & 1.879 \jj 0.189* & 2.852 \jj 0.324* & 3.275 \jj 0.113* & 2.700 \jj 0.559* & 1.463 \jj 0.382* & 1.618 \jj 0.261* \\
            & \Sbkefour & 1.946 \jj 0.562* & 1.869 \jj 0.218* & 2.854 \jj 0.325* & 3.275 \jj 0.113* & 2.700 \jj 0.559* & 1.509 \jj 0.405* & 1.659 \jj 0.266* \\
        \midrule
            \multirow{9}{*}{\expseren}
			& \Random & 0.924 $\pm$ 0.570 & -- & 0.280 $\pm$ 0.386 & 0.289 $\pm$ 0.422 & 0.229 $\pm$ 0.300 & -- & -- \\
			& \explore & 1.766 $\pm$ 1.091 & -- & 0.965 $\pm$ 0.892 & \textbf{1.253 $\pm$ 1.078} & 2.754 $\pm$ 1.380 & -- & -- \\
			& \DUM & 1.583 $\pm$ 0.895 & -- & 1.187 $\pm$ 0.900 & 1.073 $\pm$ 0.984 & {3.858 $\pm$ 1.547} & -- & -- \\
			& \MSD & 1.862 $\pm$ 0.843 & -- & \textbf{1.336 $\pm$ 0.852} & 1.152 $\pm$ 1.053 & 3.857 $\pm$ 1.546 & -- & -- \\
			& \MMR & 1.585 $\pm$ 0.895 & -- & 0.830 $\pm$ 0.848 & 1.201 $\pm$ 1.151 & 3.833 $\pm$ 1.537 & -- & -- \\
			& \DPP & 1.508 $\pm$ 0.898 & -- & 1.042 $\pm$ 0.952 & 1.183 $\pm$ 1.146 & 3.852 $\pm$ 1.540 & -- & -- \\
			& \bketwo & 1.879 $\pm$ 0.993 & -- & \underline{1.290 $\pm$ 0.936} & 1.209 $\pm$ 1.097 & \underline{3.859 $\pm$ 1.563} & -- & -- \\
			& \bkethree & \textbf{2.054 $\pm$ 1.122}* & -- & 1.287 $\pm$ 0.942* & 1.209 $\pm$ 1.097* & 3.675 $\pm$ 1.659* & -- & -- \\
			& \bkefour & \underline{2.040 $\pm$ 1.117}* & -- & 1.285 $\pm$ 0.949* & \underline{1.210 $\pm$ 1.097}* & 3.643 $\pm$ 1.668* & -- & -- \\
		    & \Sbkethree & 1.398 \jj 0.917* & -- & 1.289 \jj 0.954* & 1.208 \jj 1.092* & \textbf{3.926 \jj 1.549}* & -- & -- \\
            & \Sbkefour & 1.399 \jj 0.915* & -- & 1.275 \jj 0.941* & 1.208 \jj 1.092* & \textbf{3.926 \jj 1.549}* & -- & -- \\
        \bottomrule
		\end{tabular}
	}
\end{table*}

\begin{table*}[t]
	\centering
		\caption{\expnum, \expdcg and \expseren values with item continuation probabilities mapped to $[0.1, 0.9]$. The results with $^{*}$ are obtained using the \bkeh and \Sbkeh heuristic}
	\label{table:4metric_full}
	\vspace{-3mm}
  \resizebox{\textwidth}{!}{%
    \renewcommand{\arraystretch}{1} % Adjust the spacing
    \setlength{\tabcolsep}{2pt} % Adjust the column separation
    \begin{tabular}{lllccccccc}
    \toprule
      {\small Metrics} & {\small Methods} & \Coat & \KuaiRec & \Netflix & \Movielens & \Yahoo & \letor & \ltrc \\
    \midrule  
            \multirow{9}{*}{\expdcg}
			& \Random & 0.293 $\pm$ 0.255 & 0.299 $\pm$ 0.181 & 0.581 $\pm$ 0.341 & 0.525 $\pm$ 0.369 & 0.407 $\pm$ 0.423 & 0.199 $\pm$ 0.150 & 0.233 $\pm$ 0.220 \\
			& \explore & 1.269 $\pm$ 0.832 & 0.925 $\pm$ 0.229 & 2.221 $\pm$ 0.772 & 3.251 $\pm$ 0.191 & 2.074 $\pm$ 1.064 & 0.304 $\pm$ 0.219 & 0.433 $\pm$ 0.359 \\
			& \DUM & 1.370 $\pm$ 0.896 & 2.100 $\pm$ 0.658 & 2.335 $\pm$ 0.674 & 3.005 $\pm$ 0.129 & 1.991 $\pm$ 0.947 & 0.377 $\pm$ 0.258 & 0.667 $\pm$ 0.407 \\
			& \MSD & \textbf{1.393 $\pm$ 0.932} & \textbf{2.259 $\pm$ 0.760} & \textbf{2.492 $\pm$ 0.777} & \textbf{3.347 $\pm$ 0.083} & 2.020 $\pm$ 0.944 & \underline{0.514 $\pm$ 0.282} & 0.835 $\pm$ 0.502 \\
			& \MMR & 1.381 $\pm$ 0.919 & \underline{2.201 $\pm$ 0.743} & 2.447 $\pm$ 0.765 & \underline{3.294 $\pm$ 0.119} & \textbf{2.278 $\pm$ 0.995} & 0.513 $\pm$ 0.300 & \underline{0.841 $\pm$ 0.518} \\
			& \DPP & \underline{1.387 $\pm$ 0.924} & 2.161 $\pm$ 0.717 & \underline{2.481 $\pm$ 0.773} & 3.281 $\pm$ 0.163 & 1.991 $\pm$ 0.947 & \textbf{0.523 $\pm$ 0.291} & \textbf{0.855 $\pm$ 0.525} \\
			& \bketwo & 1.264 $\pm$ 0.881 & 1.271 $\pm$ 0.436 & 2.347 $\pm$ 0.749 & 3.274 $\pm$ 0.157 & 2.051 $\pm$ 1.031 & 0.376 $\pm$ 0.260 & 0.666 $\pm$ 0.408 \\
			& \bkethree & 1.269 $\pm$ 0.890* & 1.399 $\pm$ 0.415* & 2.368 $\pm$ 0.739* & 3.274 $\pm$ 0.156* & 2.131 $\pm$ 0.984* & 0.465 $\pm$ 0.253 & 0.737 $\pm$ 0.433 \\
			& \bkefour & 1.317 $\pm$ 0.899* & 1.453 $\pm$ 0.430* & 2.378 $\pm$ 0.738* & 3.275 $\pm$ 0.152* & {2.137 $\pm$ 0.986}* & 0.472 $\pm$ 0.260 & 0.751 $\pm$ 0.443 \\
            & \Sbkethree & 0.960 \jj 0.856* & 1.480 \jj 0.426* & 2.363 \jj 0.718* & 3.254 \jj 0.198* & {2.155 \jj 0.991}* & 0.432 \jj 0.275 & 0.721 \jj 0.443 \\
            & \Sbkefour & 0.969 \jj 0.863* & 1.561 \jj 0.451* & 2.371 \jj 0.714* & 3.254 \jj 0.196* & \underline{2.163 \jj 0.983}* & 0.444 \jj 0.288 & 0.744 \jj 0.465 \\
        \midrule
            \multirow{9}{*}{\expseren}
			& \Random & 0.124 $\pm$ 0.179 & -- & 0.068 $\pm$ 0.164 & 0.055 $\pm$ 0.156 & 0.051 $\pm$ 0.143 & -- & -- \\
			& \explore & 0.843 $\pm$ 0.912 & -- & 0.715 $\pm$ 0.790 & \textbf{1.265 $\pm$ 1.125} & 1.842 $\pm$ 1.588 & -- & -- \\
			& \DUM & 0.778 $\pm$ 0.869 & -- & 0.836 $\pm$ 0.801 & 1.030 $\pm$ 0.931 & {2.461 $\pm$ 1.974} & -- & -- \\
			& \MSD & 0.815 $\pm$ 0.933 & -- & 0.680 $\pm$ 0.774 & 0.744 $\pm$ 0.998 & 2.452 $\pm$ 1.975 & -- & -- \\
			& \MMR & 0.800 $\pm$ 0.890 & -- & 0.869 $\pm$ 0.893 & 1.179 $\pm$ 1.134 & 2.143 $\pm$ 1.987 & -- & -- \\
			& \DPP & 0.774 $\pm$ 0.897 & -- & 0.816 $\pm$ 0.889 & 1.164 $\pm$ 1.139 & 2.448 $\pm$ 1.954 & -- & -- \\
			& \bketwo & 0.863 $\pm$ 0.909 & -- & 0.871 $\pm$ 0.876 & \underline{1.223 $\pm$ 1.099} & \underline{2.505 $\pm$ 2.043} & -- & -- \\
			& \bkethree & \underline{0.903 $\pm$ 0.990}* & -- & \underline{0.873 $\pm$ 0.881}* & \underline{1.223 $\pm$ 1.099}* & 2.439 $\pm$ 2.076* & -- & -- \\
			& \bkefour & \textbf{0.931 $\pm$ 1.007}* & -- & \textbf{0.876 $\pm$ 0.886}* & \underline{1.223 $\pm$ 1.099}* & 2.426 $\pm$ 2.078* & -- & -- \\
		    & \Sbkethree & 0.574 \jj 0.802* & -- & 0.839 \jj 0.889* & 1.218 \jj 1.094* & \textbf{2.639 \jj 2.035}* & -- & -- \\
            & \Sbkefour & 0.574 \jj 0.801* & -- & 0.834 \jj 0.882* & 1.218 \jj 1.094* & \textbf{2.636 \jj 2.038}* & -- & -- \\
        \bottomrule
		\end{tabular}
	}
\end{table*}

%\vspace{1mm}

% \vspace{-1mm}
\section{Conclusion}

In this paper, we study the novel concept of \emph{sequential diversity}, 
aiming to instill diversity into rankings
while considering the relevance of items and modeling user behavior. 
The framework is designed to find rankings that consist 
of items that are both diverse and relevant.
The formulation gives rise to a novel computational problem, 
for which we establish a connection with the ordered Hamiltonian-path problem
and design approximation algorithms with provable guarantees. 
Our algorithms offer trade-offs of efficiency vs.\ approximation~quality. 

Our paper opens many exciting directions for future work. 
First, it will be interesting to devise more efficient combinatorial
algorithms without losing their approximation guarantees. 
Additionally, it will be valuable to incorporate user models of higher complexity into the framework, 
in order to capture more nuanced user behavior. 
Last, it will be valuable to study the proposed framework with a user study on a real-world system.

\bibliographystyle{ACM-Reference-Format}
\bibliography{reference}

%% If your work has an appendix, this is the place to put it.
\appendix
\clearpage

\section{Omitted Proofs from Section~\ref{section:reduction}}

\bigskip
For the proof of \Cref{obs:ocdo_ordered_submodular} and \Cref{obs:omsd_ordered_submodular}, 
we use two lemma from \citet{kleinberg2022ordered}. 

\begin{lemma}[\citet{kleinberg2022ordered}]
\label{lemma:klein_lemma2}
    If $f$ and $g$ are ordered-submodular, then $\alpha f + \beta g$ is also ordered-submodular for any $\alpha, \beta \geq 0$.
\end{lemma}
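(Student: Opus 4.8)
The plan is to prove the claim directly from Definition~\ref{def:ordered_submodular}, using only the linearity of the map $(f,g)\mapsto \alpha f+\beta g$ together with the hypothesis $\alpha,\beta\geq 0$. Write $h=\alpha f+\beta g$. By definition, it suffices to fix arbitrary sequences $X$ and $Y$ and arbitrary elements $s$ and $\bar s$ and verify
\[
h(X || s) - h(X) \;\geq\; h(X || s || Y) - h(X || \bar{s} || Y).
\]

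First I would expand both sides via $h=\alpha f+\beta g$, rewriting the left-hand side as $\alpha\bigl(f(X||s)-f(X)\bigr)+\beta\bigl(g(X||s)-g(X)\bigr)$ and the right-hand side as $\alpha\bigl(f(X||s||Y)-f(X||\bar s||Y)\bigr)+\beta\bigl(g(X||s||Y)-g(X||\bar s||Y)\bigr)$. Next, since $f$ is ordered-submodular, the inequality $f(X||s)-f(X)\geq f(X||s||Y)-f(X||\bar s||Y)$ holds for this same choice of $X,Y,s,\bar s$; multiplying it by $\alpha\geq 0$ preserves the direction. The identical argument applied to $g$ with the factor $\beta\geq 0$ gives the analogous inequality. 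Adding the two scaled inequalities term by term yields precisely the desired inequality for $h$. Since $X,Y,s,\bar s$ were arbitrary, $h$ is ordered-submodular.

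There is essentially no obstacle here: this is the sequence-function analogue of the standard fact that a nonnegative combination of submodular set functions is submodular. The only point deserving a moment's care is that the ordered-submodularity inequalities for $f$ and for $g$ must be instantiated at the \emph{same} arguments so that they can be combined additively, which is guaranteed because Definition~\ref{def:ordered_submodular} quantifies the inequality over all sequences and elements; and that the scalars are nonnegative, so neither multiplication reverses an inequality.
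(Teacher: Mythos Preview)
Your proof is correct: expanding $h=\alpha f+\beta g$, instantiating the ordered-submodularity inequality for $f$ and for $g$ at the same $X,Y,s,\bar s$, scaling by the nonnegative $\alpha,\beta$, and adding gives exactly the required inequality for $h$.

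Note, however, that the paper does not actually supply a proof of this lemma. It is quoted as a result of \citet{kleinberg2022ordered} and used as a black box in the proof of Observation~\ref{obs:ocdo_ordered_submodular}. So there is nothing to compare against beyond the citation; your direct verification from Definition~\ref{def:ordered_submodular} is the natural argument and is the same one given in the cited source.
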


\begin{lemma}[\citet{kleinberg2022ordered}]
\label{lemma:klein_lemma3}
    Suppose $h$ is a monotone submodular set function, Then the function $f$ constructed by evaluating $h$ on the set of the first $t$ elements of $S$, that is,
\begin{equation*}
    f(S) = \left\{\begin{matrix}
h(S) & \text{if } |S| \leq t\\ 
h(S_t) & \text{if } |S| >  t
\end{matrix}\right.
\end{equation*}
is ordered-submodular.
\end{lemma}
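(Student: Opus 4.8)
The plan is to verify directly the defining inequality of ordered submodularity (\Cref{def:ordered_submodular}), exploiting the fact that $f$ depends on its argument only through the set of its first $t$ elements. Fix sequences $X$ and $Y$ and elements $s$ and $\bar s$, and write $\ell$ for the length of $X$; as is standard in this setting, we treat every sequence as consisting of distinct elements, so that ``$h$ evaluated on a sequence'' unambiguously means $h$ applied to its underlying set. If $\ell \ge t$, then the first $t$ elements of each of $X$, $X||s$, $X||s||Y$, and $X||\bar s||Y$ coincide with the first $t$ elements of $X$, so $f$ takes the same value on all four arguments; both sides of the ordered-submodularity inequality equal $0$, and it holds.

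Now suppose $\ell \le t-1$. Then the sequence $X||s$ has length $\ell+1 \le t$, so $f(X) = h(X)$ and $f(X||s) = h(X\cup\{s\})$, and the left-hand side of the inequality equals the marginal value $h(X\cup\{s\}) - h(X)$. Let $m = \min\{|Y|,\, t-1-\ell\}$ and let $Y'$ denote the set of the first $m$ elements of $Y$. The key observation is that prepending a single element to $Y$ shifts the position-$t$ cut-off identically in $X||s||Y$ and in $X||\bar s||Y$, so that the set of the first $t$ elements of $X||s||Y$ is $X\cup\{s\}\cup Y'$ and that of $X||\bar s||Y$ is $X\cup\{\bar s\}\cup Y'$, with the \emph{same} set $Y'$ in both. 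Hence the right-hand side of the inequality equals $h(X\cup\{s\}\cup Y') - h(X\cup\{\bar s\}\cup Y')$. Applying monotonicity of $h$ (to bound $h(X\cup\{\bar s\}\cup Y')$ from below by $h(X\cup Y')$) and then submodularity of $h$ (applied to $X \subseteq X\cup Y'$ with the fresh element $s \notin X\cup Y'$) gives
\[
  h(X\cup\{s\}\cup Y') - h(X\cup\{\bar s\}\cup Y')
  \;\le\; h(X\cup\{s\}\cup Y') - h(X\cup Y')
  \;\le\; h(X\cup\{s\}) - h(X),
\]
which is exactly the ordered-submodularity inequality. When $\ell = t-1$ we simply have $Y' = \emptyset$ and only monotonicity is needed, so the argument degenerates gracefully.

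The only delicate point --- and the step I would double-check most carefully --- is the bookkeeping in the second case: identifying precisely which prefix $Y'$ of $Y$ is ``seen'' by $f$ on $X||s||Y$ and on $X||\bar s||Y$, and verifying that it is literally the same set, so that the two surviving $h$-terms on the right differ only in $s$ versus $\bar s$. Once that is pinned down, everything collapses to one use of monotonicity and one use of submodularity of $h$; no further estimates are required, and the case split above is self-contained.
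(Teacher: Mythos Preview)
Your proof is correct: the case split on whether $|X|\ge t$ is the natural one, and in the nontrivial case the identification of the common prefix $Y'$ followed by one application of monotonicity and one of submodularity is exactly what is needed. Note, however, that the paper does not supply its own proof of this lemma --- it is quoted verbatim from \citet{kleinberg2022ordered} and used as a black box --- so there is no ``paper's proof'' to compare against; your argument is the standard direct verification one would expect for this statement.
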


\bigskip
\ocdoOrderedSubmodular*
\begin{proof}
By the definition of the \ocdo objective we have:
\begin{equation}
    \ocdo(\ordSet) = \Exp_{\randomordSet \sqsubseteq \ordSet}[\divf(\randomordSet)] = \sum_{\kprefix=1}^{n} \pr(\randomordSet = \ordSet{\kprefix})\divf(\ordSet{\kprefix}),
    \end{equation}
where $\divf(\ordSet{\kprefix}) $ is a monotone submodular function. Let $f_{\kprefix}(\ordSet) = \divf(\ordSet{\kprefix})$ so we can rewrite $\ocdo(\ordSet)$ as $\ocdo(\ordSet) = \sum_{\kprefix=1}^{n}\itempr{\ordSet{\kprefix}}f_{\kprefix}(\ordSet)$. Since $f_{\kprefix}(\ordSet)$ is a monotone submodular function evaluated on the first $\kprefix$ items, by \Cref{lemma:klein_lemma3}, $f_{\kprefix}(\ordSet)$ is an ordered submodular function. Since $\pr(\ordSet{\kprefix}) \geq 0$ for all $\kprefix$, by \Cref{lemma:klein_lemma2}, $\ocdo(\cdot) $ is ordered-submodular.
\end{proof}

\bigskip
\omsdOrderedSubmodular*
\begin{proof}
    \label{example:not-ordered-submodular}
To show $\osdo(\cdot)$ is not ordered submodular, we only need one counter example.    
Let $X$ and $Y$ be two sequences
% obtained from the ground set $U$ 
that satisfy $\itempr{X} > 0$ and $\itempr{Y} > 0$. Let $s$ be an item such that $p_s > 0$ and $d(s, Y) > 0$. Also, let $\bar{s}$ be an item such that $p_{\bar{s}} = 0$. It follows that 
    \begin{align*}
    \osdo( X|| s || Y) - \osdo( X|| \bar{s} || Y) 
        & = \osdo( X|| s || Y) -\osdo( X ) \\
        &  > \osdo( X|| s) -\osdo( X ).
    \end{align*}

    This example violates the definition of ordered submodularity, 
    which proves that $\osdo(\cdot)$ is not an ordered submodular function.
\end{proof}

\bigskip
\omsdNPHard*
\begin{proof}
% \sj{When we make a reduction, we should write formally: Given an instance of \text{CLIQUE}, 
% we constructed an instance of \omsd in the following way: the details of the construction.}

We provide a reduction from the decision version of the maximum clique problem to the decision version of the $\omsd$ problem. Define $\clique = (G,V,\kprefix)$, with $|V| = n$ to describe instances of the maximum clique problem, where we want to decide whether there exists a clique of size $\kprefix$ in $G$. 
Also define an instance of the \omsd problem by $(U, d, p, \theta)$, where $U$ is the item set, $p$ is a uniform continuation probability, 
and $d(\cdot,\cdot)$ is a metric distance function, and we want to decide whether there exist an ordering $\order$ of $U$ such that $\osdo(\order(U)) > \theta$. 

Given an instance of \clique, we construct an instance of \omsd in the following way: 
we set $U = V$, and we assign $d(u,v) = 2$ if there is an edge between nodes $u$ and $v$ in $G$,
otherwise we set $d(u,v) = 1+ \epsilon$, where $\epsilon$ is any constant smaller than 1. 
We set $p = \frac{1-\epsilon}{2n^2}$ and 
$\theta = 2 \sum_{i=1}^{\kprefix-1} i \times \left(\frac{1-\epsilon}{2n^2} \right)^{i+1} $. 
One can easily verify that this transformation can be done in polynomial time, and that $d(\cdot,\cdot)$ is a metric.

If there is a clique of size $\kprefix \leq n$ in $G$, then for the corresponding item set $U=V$, we can construct an ordered sequence $\ordSet = (\order{i})_{i=1}^{n}$ by assigning $\ordSet{\kprefix}$ to be the $\kprefix$ nodes in $U$ that corresponds to the clique in $G$, and setting in an arbitrary way the rest of the sequence order.

Let $\mathcal{L}(\ordSet)$ be a lower bound of $\osdo(A) $. 
One can verify it is also a lower bound for the optimal objective value of the $\omsd$ problem.
We can define $\mathcal{L}(\ordSet)$ as follows: 
\begin{equation*}
    \begin{aligned}
        \mathcal{L}(\ordSet) & = \sum_{i=1}^{\kprefix-1} \itempr{\ordSet{i+1}} \dist(\order{i+1}, \ordSet{i})\\
        & = 2(\itempr^2 + 2\itempr^3 + 3\itempr^4 + \cdots + (\kprefix-2)\itempr^{\kprefix-1} + (\kprefix-1)\itempr^\kprefix)\\
        & = 2 \sum_{i=1}^{\kprefix-1} i \times p^{i+1} .
    \end{aligned}
\end{equation*}
Furthermore, it holds that $\mathcal{L}(\ordSet)  = \theta$.

On the other hand, if there does not exist a size $\kprefix$ clique in $G$, 
then the densest possible graph structure of $G$ contains $T = \left\lfloor \frac{n}{\kprefix-1} \right\rfloor$ 
size-$(\kprefix-1)$ cliques, and one clique of size $L = n - (\kprefix-1)T$. 
A node in a size-$(\kprefix-1)$ clique can connect to at most $\kprefix-2$ nodes in another size-$(\kprefix-1)$ clique, 
thus the maximum number of edges between any two size-$(\kprefix-1)$ cliques is $(\kprefix-1)(\kprefix-2)$. 
Similarly, the number of edges between a size-$L$ clique and a size-$(\kprefix-1)$ clique is $(\kprefix-2)L$. 
Given this graph structure, we can construct an ordered sequence $\tilde{\ordSet} = (\tilde{\order}(i))_{i=1}^{n}$, 
where sequence $[\tilde{\order}((j-1)(\kprefix-1) + 1), \cdots, \tilde{\order}((\kprefix-1) \times j)]$ 
corresponds to the $j$-th size-$(\kprefix-1)$ clique of $G$, with $j \in [1, T]$, 
and the remaining sequence corresponds to the size-$L$ clique of $G$.

Let $\mathcal{U}(\tilde{\ordSet}) = \osdo(\tilde{\ordSet})$.
One can verify  that $\mathcal{U}(\tilde{\ordSet})$ is an upper bound of the optimal objective value of the corresponding $\omsd$ problem. 
It holds that
\begin{equation}
    \begin{aligned}
        \mathcal{U}(\tilde{\ordSet}) &= \sum_{i=1}^{n-1} \itempr_{\tilde{\ordSet}_{i+1}} \dist(\tilde{\order}(i+1), \tilde{\ordSet}_i )\\
         &\stackrel{(a)}{=} 2(\itempr^2 + 2\itempr^3 + \cdots + (\kprefix-2)\itempr^{\kprefix-1}) \\
        & + \sum_{i=1}^{T-1} \sum_{j=0}^{\kprefix-2} \itempr^{j + 1 + i(\kprefix-1)} (2[i(\kprefix-2)+j]+i(1+\epsilon)) \\
        & + \sum_{j=0}^{L-1} \itempr^{T(\kprefix-1) + j + 1} (2[T(\kprefix-2) + j]+ T(1+\epsilon)) \\
        &\stackrel{(b)}{<} \mathcal{L}(\ordSet) + (\epsilon-1)\itempr^\kprefix + 2n(\itempr^{\kprefix+1} + \itempr^{\kprefix+2} + \cdots + \itempr^n),
    \end{aligned}
\end{equation}
where $(a)$ holds because of \Cref{lem:eqform} and $(b)$ holds because 
% the first $\kprefix$ nodes in this permutation contribute $2(\itempr^2 + 2\itempr^3 + \cdots + (\kprefix-2)\itempr^{\kprefix-1}) + [2(\kprefix-2)+(1+\epsilon)]\itempr^\kprefix = d_{\kprefix}^l + (1-\epsilon) \itempr^\kprefix$ to $d_{\kprefix}^u$, and , 
for $j \in [\kprefix+1, n]$, the $j$-th node contributes at most $2n\itempr^j$ to $\mathcal{U}(\tilde{\ordSet})$.

Furthermore, since 
\begin{align*}
    \mathcal{L}(\ordSet) - \mathcal{U}(\tilde{\ordSet})&   = \theta - \mathcal{U}(\tilde{\ordSet})\\
    & > (1-\epsilon)\itempr^\kprefix - 2n(\itempr^{\kprefix+1} + \itempr^{\kprefix+2} + \cdots + \itempr^n) \\
    & = (1-\epsilon)\itempr^\kprefix - 2np(\itempr^{\kprefix} + \itempr^{\kprefix+1} + \cdots + \itempr^{n-1}) \\
    & > (1-\epsilon) \itempr^\kprefix - 2n \itempr (n-\kprefix) \itempr^\kprefix \\
    & = \itempr^\kprefix (1 - \epsilon -2n \itempr (n-\kprefix)) = \frac{p^\kprefix (1-\epsilon) \kprefix}{n} > 0
\end{align*}
holds for all $\kprefix$, 
we conclude that a \texttt{yes} instance of the \clique problem indicates a \texttt{yes} instance of the \omsd problem, 
and a \texttt{no} instance indicates a \texttt{no} instance of the \omsd problem.

This finishes our reduction from the \clique problem to the \omsd problem and 
we conclude that the \omsd problem is \NP-hard. 
\end{proof}

\oshpreform*
\begin{proof}
We simply re-arrange the formulas. 
\begin{equation}
\begin{aligned}
            \ordpath(\ordSet) & = \sum_{i = 1}^{n-1} w_i \dist{\order{i}, \order{i+1}}\\
            & = \sum_{i = 1}^{n-1} \sum_{\kprefix=i+1}^{n} \itempr{\ordSet{\kprefix}} \dist{\order{i}, \order{i+1}} \\
            &= \itempr{\ordSet{2}} \dist{\order{1}, \order{2}} + \itempr{\ordSet{3}} (\dist{\order{1}, \order{2}} + \dist{\order{2},\order{3}}) \\
            & + \cdots  +\itempr(\ordSet) \sum_{t = 1}^{n-1} \dist{\order{t},\order{t+1}} \\
            &= \itempr{\ordSet{2}} \distpath{\ordSet{2}} + \itempr{\ordSet{3}} \distpath{\ordSet{3}} + \cdots +\itempr(\ordSet)\distpath{\ordSet} \\
            &= \sum_{i=1}^{n-1}\itempr{\ordSet{i+1}} \distpath{\ordSet{i+1}}.
\end{aligned}
\end{equation}
\end{proof}

\clearpage
\anyinequality*
\begin{proof}
% We will prove the lemma by using the triangle inequality of the distances in the metric space. 
First, we show that if $2\dist{\order{i+1}, \ordSet{i}} \geq \distpath{\ordSet{i+1}}$ holds for any $i \in [n-1]$, then our lemma is proven: 
\begin{align*}
    2\osdo(\ordSet) &= 2\sum_{i=1}^{n-1} \itempr{\ordSet{i+1}} \dist{\order{i+1}, \ordSet{i}} \\
    &= \sum_{i=1}^{n-1} \itempr{\ordSet{i+1}} 2\dist{\order{i+1}, \ordSet{i}} \\
    &\geq \sum_{i=1}^{n-1}   \itempr{\ordSet{i+1}} \distpath{\ordSet{i+1}} \\
    &=\ordpath{\ordSet}.
\end{align*}

Next, we show that $2\dist{\order{i+1}, \ordSet{i}} \geq \distpath{\ordSet{i+1}}$ holds for any $1\leq i \leq n-1$:
\begin{align*}
    d_L(\ordSet{i+1}) &= \sum_{j=1}^{i} \dist{\order{j}, \order{j+1}} \\
    &\stackrel{(a)}{\leq}  \sum_{j=1}^{i} \left [ d(\order{i+1},\order{j}) + d(\order{i+1}, \order{j+1}) \right ] \\
    &= \sum_{j=1}^{i} d(\order{i+1},\order{j}) + \sum_{j=1}^{i} d(\order{i+1}, \order{j+1}) \\
    &= d(\order{i+1}, \ordSet{i}) +  d(\order{i+1}, \ordSet{i+1}) - d(\order{i+1}, \order{1})\\
    &\stackrel{(b)}{=} d(\order{i+1}, \ordSet{i}) +  d(\order{i+1}, \ordSet{i}) - d(\order{i+1}, \order{1})\\ 
    &<  2\dist{\order{i+1}, \ordSet{i}},
\end{align*}
where $(a)$ holds by triangle inequality, and $(b)$ holds since\\
$\dist{\order{i+1}, \order{i+1}} = 0$.
\end{proof}

\localoptimalnonuniformp*
% \sj{I did not read this proof in detail. I will read it again when the notations are changed. }
\begin{proof}
\label{appendix:proofs_of_local_optimality_for_nonuniformp}

Let $\ordSet^* = (\order{i})_{i=1}^{n}$ be an optimal sequence, and  
let us denote $\ordSetswap= (\ldots, \order{i+1}, \order{i}, \ldots)$ 
to be the sequence after swapping node pair $\{\order{i}, \order{i+1}\}$. 
Then, 
\begin{equation*}
    \begin{aligned}
        \osdo (\ordSetswap) &= \sum_{j=1}^{i-2} \itempr{\ordSet^*_{j+1}} \dist(\order{j+1}, \ordSet^*_{j}) 
        +  \itempr{\order{i+1}} \itempr{\ordSet^*_{i-1}} \dist (\order{i+1}, \ordSet^*_{i-1}) \\
        & + \itempr{\order{i}} \itempr{\order{i+1}}  \itempr{\ordSet^*_{i-1}} \dist (\order{i}, \ordSet^*_{i-1} \cup \order{i+1}) \\
        & + \sum_{j=i+1}^{n-1} \itempr{\ordSet^*_{j+1}} \dist(\order{j+1}, \ordSet^*_{j}),
    \end{aligned}
\end{equation*}
and
\begin{equation*}
    \begin{aligned}
        \osdo (\ordSet^*) &= \sum_{j=1}^{i-2} \itempr{\ordSet^*_{j+1}} \dist(\order{j+1}, \ordSet^*_{j}) 
        + \itempr{\order{i}} \itempr{\ordSet^*_{i-1}}  \dist (\order{i}, \ordSet^*_{i-1}) \\
        & +  \itempr{\order{i}} \itempr{\order{i+1}}  \itempr{\ordSet^*_{i-1}} \dist (\order{i+1}, \ordSet^*_{i-1} \cup \order{i}) \\
        & + \sum_{j=i+1}^{n-1} \itempr{\ordSet^*_{j+1}} \dist(\order{j+1}, \ordSet^*_{j}).
    \end{aligned}
\end{equation*}

Since $\ordSet^* = \argmax \osdo{(\ordSet)}$ is an optimal sequence, 
it must hold that $\osdo(\ordSet^*)-\osdo(\ordSetswap) \geq 0$, that is,
\begin{equation*}
\label{eq:nonuniform_p_local_swap}
    \begin{aligned}
\osdo(\ordSet^*)-\osdo(\ordSetswap)
& = \itempr{\order{i}} \itempr{\ordSet^*_{i-1}}  \dist (\order{i}, \ordSet^*_{i-1})\\
& - \itempr{\order{i+1}} \itempr{\ordSet^*_{i-1}} \dist (\order{i+1}, \ordSet^*_{i-1}) \\ 
& + \itempr{\order{i}} \itempr{\order{i+1}}  \itempr{\ordSet^*_{i-1}} \dist (\order{i+1}, \ordSet^*_{i-1} )\\
& - \itempr{\order{i}} \itempr{\order{i+1}}  \itempr{\ordSet^*_{i-1}} \dist (\order{i}, \ordSet^*_{i-1}) \geq 0.
    \end{aligned}
\end{equation*}
This is equivalent to
\begin{equation}
\label{eq:lemma_telescope_sum_nonuniform}
\begin{aligned}
    & \itempr{\order{i}} \itempr{\ordSet^*_{i-1}} (1-\itempr{\order{i+1}}) \dist(\order{i}, \ordSet^*_{i-1}) \\
& \geq   \itempr{\order{i+1}} \itempr{\ordSet^*_{i-1}} (1 - \itempr{\order{i}}) \dist(\order{i+1}, \ordSet^*_{i-1}).
\end{aligned}
\end{equation}
Since $\itempr{\ordSet^*_{i-1}} > 0$, 
we can cancel that term from both sides of \Cref{eq:lemma_telescope_sum_nonuniform}. 
We then divide both sides by $(1 - \itempr{\order{i}}) \times (1 - \itempr{\order{i+1}})$ and obtain
\begin{equation}
\label{eq:before_tele_sum}
\frac{\itempr{\order{i}}}{1-\itempr{\order{i}}} \dist(\order{i}, \ordSet^*_{i-1}) \geq \frac{\itempr{\order{i+1}}}{1-\itempr{\order{i+1}}} \dist(\order{i+1}, \ordSet^*_{i-1}).
\end{equation}
By taking the telescope sum of \Cref{eq:before_tele_sum} over $i$ from $i=2$ to $i=j$, we get 
\begin{equation}
\label{eq:after_tele_sum}
    \sum_{i=2}^{j} \frac{\itempr{\order{i}}}{1-\itempr{\order{i}}} \dist(\order{i}, \order{i-1}) \geq \frac{\itempr{\order{j+1}}}{1-\itempr{\order{j+1}}} \dist(\order{j+1}, \ordSet^*_{j-1}).
\end{equation}
By adding $\frac{\itempr{\order{j+1}}}{1-\itempr{\order{j+1}}} \dist(\order{j+1}, \order{j})$ on both sides of \Cref{eq:after_tele_sum}, we get 
\begin{equation}
\label{eq:last_step}
    \sum_{i=2}^{j+1} \frac{\itempr{\order{j}}}{1-\itempr{\order{i}}} \dist(\order{i}, \order{i-1}) \geq \frac{\itempr{\order{j+1}}}{1-\itempr{\order{j+1}}} \dist(\order{j+1}, \ordSet^*_{j}).
\end{equation}
After moving $\frac{\itempr{\order{j+1}}}{1-\itempr{\order{j+1}}}$ to the left hand side of \Cref{eq:last_step} 
and exchanging the indexes of $i$ and $j$, we get
\begin{equation*}
    \dist(\order{i+1}, \ordSet^*_{i}) \leq \frac{1-\itempr{\order{i+1}}} {\itempr{\order{i+1}}} \sum_{j=1}^{i} \frac{\itempr{\order{j+1}}}{1-\itempr{\order{j+1}}} \dist(\order{j}, \order{j+1}),
\end{equation*}
which completes the proof.
\end{proof}

\bigskip
\optimalinequalitynonuniformp*
\begin{proof}
Let $\ordSet^* = (\order{i})_{i=1}^{n}$ denote an optimal sequence. 
    To proof this corollary it suffices to show that 
    $\dist{\order{i+1}, \ordSet^*_{i}} \leq \frac{b(1-a)}{a(1-b)}d_L(\ordSet^*_{i+1})$, for all $i \in [n-1]$. 

    Since we assume that $\itempr_i \in [a,b]$, for all $i \in [n]$, we get
   \begin{align}
   \label{eq:collory_ab}
       \frac{1-\itempr{\order{i+1}}}{\itempr{\order{i+1}}} & \sum_{j=1}^{i} \frac{\itempr{\order{j+1}}}{1-\itempr{\order{j+1}}} d(\order{j},\order{j+1}) \nonumber \\
       & \leq \frac{b(1-a)}{a(1-b)}d_L(\ordSet^*_{i+1}).
   \end{align}
    Substituting \Cref{eq:collory_ab} into \Cref{lemma:local_optimal_nonuniform_p} leads to
    \begin{align*}
        \dist{\order{i+1}, \ordSet{i}^{*}} & \leq \frac{1-\itempr_{\order{i+1}}}{\itempr_{\order{i+1}}} \sum_{j=1}^{i} \frac{\itempr_{\order{j+1}}}{1-\itempr_{\order{j+1}}} \dist{\order{j},\order{j+1}} \\
        & \leq \frac{b(1-a)}{a(1-b)}d_L(\ordSet^*_{i+1}),
    \end{align*}
which completes the proof.   
\end{proof}

% \clearpage 

% \vfill\null

\section{Additional Preliminaries}
In this section, we list a few useful lemmas that we are going to use repeatedly in the following sections. 
% We are going to repeatively use the sum of geometric series.
% Notice that it has many forms, we list the ones that are most relavent to our proofs. 
Since they are simple algebraic statements, we state them directly, without proof. 

\medskip
\begin{lemma}
    \label{lem:geo-sum}
    Let $s_n$ be the sum of the the first $n$ terms of a geometric series, 
    namely $s_n = \sum_{i=0}^{n-1} a r^{i}$, where $a$ is a constant, 
    and $r$ is any number such $0< r < 1$.
    Then, $s_n = a\frac{1-r^n}{1 - r}$. 
\end{lemma}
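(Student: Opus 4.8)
The plan is to use the classical ``multiply-and-subtract'' telescoping trick, which avoids induction and directly yields the closed form. First I would write out the partial sum explicitly as
\begin{equation*}
    s_n = a + ar + ar^2 + \cdots + ar^{n-1},
\end{equation*}
and then multiply both sides by $r$ to obtain
\begin{equation*}
    r\, s_n = ar + ar^2 + \cdots + ar^{n-1} + ar^{n}.
\end{equation*}

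Subtracting the second equation from the first, every intermediate term cancels, leaving $s_n - r\, s_n = a - ar^{n}$, i.e. $(1-r)\, s_n = a\,(1 - r^n)$. Since $0 < r < 1$ we have $1 - r \neq 0$, so we may divide both sides by $1-r$ to conclude $s_n = a\frac{1-r^n}{1-r}$, as claimed. (An equivalent route would be a one-line induction on $n$: the base case $n=1$ gives $s_1 = a = a\frac{1-r}{1-r}$, and the inductive step adds $ar^{n}$ to $a\frac{1-r^n}{1-r}$ and simplifies; I would prefer the algebraic derivation above as it is self-contained.)

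There is essentially no obstacle here: the statement is a standard algebraic identity, and the only point that needs a word of care is that the division by $1-r$ in the last step is legitimate, which is exactly why the hypothesis $0 < r < 1$ (in particular $r \neq 1$) is assumed. Note also that the same computation in fact holds for any $r \neq 1$; the restriction $0 < r < 1$ is stated only because that is the regime in which the lemma is later applied (e.g., with $r = \itempr$ in the uniform-probability analysis).
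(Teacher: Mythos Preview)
Your argument is correct; the multiply-and-subtract telescoping is the standard derivation, and your remark that only $r \neq 1$ is needed for the division step is also accurate. Note, however, that the paper does not actually prove this lemma: it is placed in an ``Additional preliminaries'' section with the explicit comment that ``since they are simple algebraic statements, we state them directly, without proof.'' So there is no paper-side proof to compare against; your write-up simply supplies a proof where the paper elected to omit one.
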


\medskip
\begin{lemma}
    \label{lem:geo-sum-constant-simple}
    When $r$ is a constant, $\lim_{n \rightarrow \infty} s_n = \frac{a}{1-r}$.
    In other words, when $n \rightarrow \infty$, it is $ s_n = \frac{a}{1-r} - \Theta(r^n)$.   
\end{lemma}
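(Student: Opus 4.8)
The plan is to derive the statement directly from the closed form already recorded in \Cref{lem:geo-sum}, so essentially no new idea is required. That lemma gives $s_n = a\,\frac{1-r^n}{1-r}$ whenever $0<r<1$. First I would split this expression into an $n$-independent part and a remainder:
\begin{equation*}
  s_n \;=\; \frac{a}{1-r} \;-\; \frac{a\,r^n}{1-r}.
\end{equation*}
This rewriting makes both assertions of the lemma transparent, so the remaining work is just to read them off.

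Next I would take the limit. Because $r$ is a constant with $0<r<1$, we have $r^n \to 0$ as $n\to\infty$; since $a$ and $r$ are constants, the coefficient $\frac{a}{1-r}$ is a fixed number, and therefore the remainder $\frac{a\,r^n}{1-r}$ tends to $0$. Hence $\lim_{n\to\infty} s_n = \frac{a}{1-r}$, which is the first claim. For the second claim I would observe that the remainder equals $c\,r^n$ with $c=\frac{a}{1-r}$ a positive constant independent of $n$, so it is $\Theta(r^n)$ by definition; combining this with the splitting above gives $s_n = \frac{a}{1-r}-\Theta(r^n)$, as stated.

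There is no genuine obstacle here: the lemma is a one-line consequence of \Cref{lem:geo-sum}, which is exactly why the paper records it without proof. The only point that deserves a moment's care is the use of the hypothesis that $a$ and $r$ are \emph{constants} rather than functions of $n$ — this is precisely what licenses absorbing the factor $\frac{a}{1-r}$ into the hidden constant of the $\Theta(\cdot)$ notation. The companion \Cref{lem:geo-sum}, by contrast, is stated for an arbitrary $r\in(0,1)$ that may depend on $n$, which is why that version retains the exact term $a\,\frac{1-r^n}{1-r}$ instead of an asymptotic estimate.
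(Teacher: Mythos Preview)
Your argument is correct and is exactly the intended one-line derivation from \Cref{lem:geo-sum}; the paper itself states this lemma without proof, so there is nothing further to compare.
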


\medskip
\begin{lemma}[Theorem 4.3, Chebyshev's inequality~\cite{cvetkovski2012inequalities}]
    \label{lem:chebishev}
    Let $a_1  \leq \ldots \leq a_n$ and 
    $b_1 \leq \ldots b_n$ be real numbers.
    Then, it holds 
    \[
    \left(\sum_{i=1}^n a_i\right) \left(\sum_{i=1}^n b_i\right) \leq n \sum_{i=1}^n a_i b_i,
    \]
    and equality holds when $a_1 = \ldots = a_n$ or $b_1 = \ldots = b_n$. 
\end{lemma}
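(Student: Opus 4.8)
The plan is to prove the inequality via the classical observation that a sum of products of matched differences is nonnegative. Since $a_1 \le \ldots \le a_n$ and $b_1 \le \ldots \le b_n$ are sorted in the same order, for every pair of indices $i,j$ the two factors $a_i - a_j$ and $b_i - b_j$ have the same sign: both are $\ge 0$ when $i \ge j$ and both are $\le 0$ when $i \le j$. Hence $(a_i - a_j)(b_i - b_j) \ge 0$ for all $i,j$, and summing over all ordered pairs gives
\[
0 \;\le\; \sum_{i=1}^n \sum_{j=1}^n (a_i - a_j)(b_i - b_j).
\]

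Next I would expand the right-hand side into four double sums and simplify each: $\sum_{i,j} a_i b_i = n \sum_i a_i b_i$ and likewise $\sum_{i,j} a_j b_j = n \sum_j a_j b_j$ (the free index contributes a factor $n$), while $\sum_{i,j} a_i b_j = \sum_{i,j} a_j b_i = \bigl(\sum_i a_i\bigr)\bigl(\sum_i b_i\bigr)$. The double sum therefore collapses to $2n \sum_i a_i b_i - 2\bigl(\sum_i a_i\bigr)\bigl(\sum_i b_i\bigr)$, and dividing by $2$ yields exactly $n \sum_i a_i b_i \ge \bigl(\sum_i a_i\bigr)\bigl(\sum_i b_i\bigr)$, which is the claimed inequality.

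For the equality statement I would just verify the two stated sufficient conditions directly. If $a_1 = \ldots = a_n = a$, then $n \sum_i a_i b_i = n a \sum_i b_i = \bigl(\sum_i a_i\bigr)\bigl(\sum_i b_i\bigr)$, and the case where all $b_i$ coincide is symmetric. Equivalently, equality in the derivation forces every term $(a_i - a_j)(b_i - b_j)$ to vanish, which both of these cases clearly achieve.

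Honestly, there is no real obstacle here: this is a standard algebraic identity, which is presumably why the paper states it without proof. The only subtlety worth flagging is that the displayed ``equality holds when \ldots'' is a \emph{sufficient} condition rather than a characterization, so the equality discussion only needs to check those two cases and need not enumerate all configurations that attain equality.
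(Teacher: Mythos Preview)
Your proof is correct and is the standard argument for Chebyshev's sum inequality. The paper does not give a proof of this lemma at all: it is listed in the ``Additional preliminaries'' section with the remark that the statements there are simple algebraic facts stated without proof, and is attributed to a reference.
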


\medskip
\begin{lemma}
    \label{lem:laurent}
    $\left(1- \frac{1}{n}\right)^n = \frac{1}{e} - \Theta\!\left(\frac{1}{n}\right)$.
\end{lemma}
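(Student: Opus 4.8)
The plan is to take logarithms, expand $\ln(1-x)$ as a power series to isolate the $\Theta(1/n)$ correction, and then exponentiate, keeping careful track of the remainder terms so that the bound is genuinely two-sided.

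First I would write $\left(1-\frac1n\right)^n = \exp\!\left(n\ln\!\left(1-\frac1n\right)\right)$ and use, for $n\ge 2$, the convergent expansion $\ln\!\left(1-\frac1n\right) = -\sum_{k\ge 1}\frac{1}{k\,n^k}$. Multiplying by $n$ gives $n\ln\!\left(1-\frac1n\right) = -1 - \frac{1}{2n} - r_n$, where $r_n = \sum_{k\ge 2}\frac{1}{(k+1)\,n^k}$. Since each coefficient $\frac{1}{k+1}\le 1$, we have $0 \le r_n \le \sum_{k\ge 2}\frac{1}{n^k} = \frac{1}{n(n-1)}$, so $r_n = O(1/n^2)$, and hence $n\ln\!\left(1-\frac1n\right) = -1 - \frac{1}{2n} + O(1/n^2)$.

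Next I would exponentiate: $\left(1-\frac1n\right)^n = e^{-1}\cdot\exp(y_n)$ with $y_n = -\frac{1}{2n} + O(1/n^2)\to 0$. Using $e^{y} = 1 + y + O(y^2)$ uniformly for $|y|$ in a bounded neighbourhood of $0$, this yields $\exp(y_n) = 1 - \frac{1}{2n} + O(1/n^2)$, and therefore $\left(1-\frac1n\right)^n = \frac1e - \frac{1}{2en} + O(1/n^2)$. Consequently $\frac1e - \left(1-\frac1n\right)^n = \frac{1}{2en} - O(1/n^2)$, and since the leading coefficient $\frac{1}{2e}$ is a strictly positive constant, for all sufficiently large $n$ this difference lies between, say, $\frac{1}{4en}$ and $\frac{1}{en}$, which is exactly the claim $\left(1-\frac1n\right)^n = \frac1e - \Theta\!\left(\frac1n\right)$.

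The only delicate point — the "main obstacle", modest as it is — is making the error terms rigorous on both sides: one must bound the tail $r_n$ of the logarithm series and the quadratic remainder in $e^{y_n}$ explicitly, so that the result is a genuine $\Theta$ (two-sided) estimate rather than a mere $O$. It is precisely the nonvanishing of the constant $\frac{1}{2e}$ that upgrades the upper bound $O(1/n)$ to the matching lower bound, giving $\Theta(1/n)$.
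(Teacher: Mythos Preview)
Your argument is correct and is the standard way to establish this asymptotic: expand $n\ln(1-1/n)$ via the logarithm series to get $-1-\frac{1}{2n}+O(1/n^2)$, then exponentiate to obtain $(1-\tfrac{1}{n})^n = \tfrac{1}{e} - \tfrac{1}{2en} + O(1/n^2)$, from which the two-sided $\Theta(1/n)$ bound follows because the leading coefficient $\tfrac{1}{2e}$ is a positive constant. The paper itself does not supply a proof of this lemma at all; it is listed among ``simple algebraic statements'' that are ``state[d] directly, without proof,'' so there is no argument to compare against --- your proposal simply fills in what the paper chose to omit.
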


\section{Omitted Proofs from Section~\ref{section:uniform}}

\subsection{Proof of Theorem~\ref{thm:p-nonconstant-small-two}}
To prove \Cref{thm:p-nonconstant-small-two}, we first analyze the property of the greedy matching $M$ obtained from \Cref{{alg:bkm}}.
\begin{lemma}
\label{lemma:greedy_half_approx}
    Given a graph $G = (V, E)$ with $|V|=n$, let $w(S)$ denote the sum of edge weights in an edge subset $S \subseteq E$. Let $M$ be the greedy matching of $G$, $M_{k}$ denote the first $k$ edges of $M$, and $M_{k}^{*}$ be the maximum weighted size $k$ matching of $G$. Then it holds that $w(M_{k}) \geq \frac{1}{2} w(M_{k}^{*})$ for all $k \leq \lfloor \frac{n}{2} \rfloor$.
\end{lemma}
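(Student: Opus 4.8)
The plan is to prove $w(M_k) \geq \tfrac{1}{2} w(M_k^*)$ by a charging argument that exploits the greedy order in which edges are added to $M$. Index the edges of the greedy matching as $e_1, e_2, \ldots$ in the order they are selected, so that $w(e_1) \geq w(e_2) \geq \cdots$ (the greedy algorithm in \Cref{alg:bkm} processes all pairs in decreasing weight order, so the edges it keeps come out sorted). Write $M_k = \{e_1, \ldots, e_k\}$, and let $M_k^* = \{f_1, \ldots, f_k\}$ be a maximum-weight matching of size exactly $k$, with its edges also sorted so that $w(f_1) \geq \cdots \geq w(f_k)$. The claim I want as the engine of the proof is the pointwise domination $w(e_j) \geq \tfrac{1}{2} w(f_j)$ for every $j \leq k$; summing over $j$ then yields $w(M_k) \geq \tfrac12 w(M_k^*)$ directly.

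To establish $w(e_j) \geq \tfrac12 w(f_j)$, I would argue as follows. Consider the $j$ edges $f_1, \ldots, f_j$ of the optimal matching. Since $M_k^*$ is a matching, these $j$ edges are vertex-disjoint and together touch $2j$ distinct vertices. Now look at the greedy matching at the moment just after $e_{j-1}$ has been added, i.e., when $M_{j-1} = \{e_1, \ldots, e_{j-1}\}$ has been committed; these edges touch at most $2(j-1) = 2j-2$ vertices. Hence at least one of the edges $f_1, \ldots, f_j$ — say $f_\ell$ — has \emph{both} endpoints outside the vertex set of $M_{j-1}$, because removing $2j-2$ vertices can kill at most $2j-2$ of the $2j$ endpoint slots, and each $f_i$ that survives must be fully surviving only if neither endpoint was removed; a counting check shows at least one $f_\ell$ is entirely untouched. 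That edge $f_\ell$ was available to greedy when it picked $e_j$ (it formed a valid matching with $M_{j-1}$), so by the greedy choice $w(e_j) \geq w(f_\ell) \geq w(f_j)$, where the last inequality uses that $f_j$ is the lightest of $f_1, \ldots, f_j$. This actually gives the stronger bound $w(e_j) \geq w(f_j)$ and hence $w(M_k) \geq w(M_k^*)$ — wait, that can't be right in general, so the counting step needs care: a standard greedy matching is only a $\tfrac12$-approximation, so the correct argument must produce the factor $\tfrac12$.

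The fix, and the standard correct charging, is this: each optimal edge $f_i$, when greedy considered it (or some edge conflicting with it), was blocked by at most two already-chosen greedy edges sharing an endpoint with $f_i$, and each such blocking greedy edge $e$ has $w(e) \geq w(f_i)$ (greedy processes heavier edges first). Charge $f_i$ to the greedy edges blocking it, splitting its weight; since $f_i$ touches $2$ vertices and each vertex lies in at most one greedy edge, $f_i$ is charged to at most $2$ greedy edges, each of weight $\geq w(f_i) \geq w(f_i)/1$, so each receives charge $\leq w(e)$ overall... Concretely, every greedy edge $e_j$ receives total charge at most $2 w(e_j)$ (at most one optimal edge per endpoint, each of weight at most $w(e_j)$). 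For the size-restricted version, one must be careful that $M_k^*$ may use edges not blocked by $M_k$ but by later greedy edges; to handle this I would instead prove the pointwise inequality $\sum_{i\le j} w(e_i) \ge \tfrac12 \sum_{i \le j} w(f_i)$ by induction on $j$, or cite the well-known fact (matroid-intersection / greedy-matching folklore) that the greedy matching prefix $M_k$ is a $\tfrac12$-approximation to the best size-$k$ matching, adapting the classical proof to truncate at $k$.

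The main obstacle is exactly this truncation subtlety: the textbook $\tfrac12$-approximation for greedy matching compares the \emph{full} greedy matching to the \emph{full} maximum matching, whereas here we need the comparison to hold for \emph{every prefix} $M_k$ versus the maximum size-$k$ matching $M_k^*$. I expect the cleanest route is the pointwise domination $w(e_j) \geq \tfrac12 w(f_j)$ proven by the vertex-counting argument sketched above but done correctly: among $f_1, \ldots, f_{2j-1}$ (the $2j-1$ heaviest optimal edges, touching $4j-2$ vertices) and the $\le 2(j-1)$ vertices covered by $M_{j-1}$, some $f_\ell$ with $\ell \le 2j-1$ is available to greedy at step $j$, giving $w(e_j) \ge w(f_\ell) \ge w(f_{2j-1})$; pairing up indices then loses the factor $2$. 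I would then sum $w(M_k) = \sum_{j=1}^k w(e_j) \ge \sum_{j=1}^k w(f_{2j-1}) \ge \tfrac12 \sum_{i=1}^{2k} w(f_i) \cdot \mathbf{1}[\cdot] \ge \tfrac12 \sum_{i=1}^{k} w(f_i) = \tfrac12 w(M_k^*)$, using monotonicity of the sorted optimal weights. Verifying the index bookkeeping so that all invoked $f_i$ genuinely exist and are vertex-disjoint from $M_{j-1}$ is the delicate part; everything else is routine.
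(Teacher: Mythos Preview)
The paper's proof is exactly your second sketch: for each $e\in M_k^*\setminus M_k$ it asserts there is an adjacent $e'\in M_k$ with $w(e')>w(e)$ (``otherwise greedy would have chosen $e$''), and since each such $e'$ conflicts with at most two edges of the matching $M_k^*$, one gets $2\sum_{e'\in M_k\setminus M_k^*}w(e')\ge\sum_{e\in M_k^*\setminus M_k}w(e)$, hence $2\,w(M_k)\ge w(M_k^*)$. The paper does not engage with the truncation worry you raise---an edge $e\in M_k^*$ that greedy accepted but only at position $>k$ is vertex-disjoint from all of $M_k$, so the adjacency claim can fail as stated---and simply moves on in two sentences. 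Your instinct that something is being glossed over is correct, though the lemma survives because any such $e$ has $w(e)\le w(e_k)$.

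Your third sketch is the route that actually closes this gap, and the vertex-counting step is right: $M_{j-1}$ has $2j{-}2$ vertices, each meeting at most one edge of the matching $M_k^*$, so among $f_1,\dots,f_{2j-1}$ at least one is disjoint from $M_{j-1}$, yielding $w(e_j)\ge w(f_{2j-1})$ whenever $2j{-}1\le k$. The genuine gap is your final summation, which does not parse: you invoke $f_{2j-1}$ for all $j\le k$ (undefined once $2j{-}1>k$), write a dangling $\mathbf 1[\cdot]$, and sum to $2k$ although $M_k^*$ has only $k$ edges. The fix is one line:
\[
w(M_k)\;\ge\;\sum_{j:\,2j-1\le k}w(e_j)\;\ge\;\sum_{j:\,2j-1\le k}w(f_{2j-1})\;=\;\sum_{\substack{i\le k\\ i\text{ odd}}}w(f_i)\;\ge\;\tfrac12\sum_{i=1}^k w(f_i)\;=\;\tfrac12\,w(M_k^*),
\]
the last inequality because $w(f_{2j-1})\ge w(f_{2j})$ by the sorting, so the odd-indexed half already carries at least half the total weight.
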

\begin{proof}
    Assume there exists an edge $e \in M_{k}^{*} \setminus M_{k}$, then there must exist an edge $e' \in M_{k}$ such that $e'$ share one end node with $e$ and $w(e) < w(e')$, because otherwise the greedy matching algorithm will choose $e$ instead of $e'$. Since each such $e'$ can conflict at most two edges from $M_{k}^{*}$, we have that
\begin{equation*}
    2 \sum_{e' \in M_{k} \setminus M_{k}^{*}} w(e') > \sum_{e \in M_{k}^{*} \setminus M_{k}} w(e),
\end{equation*}
and the lemma thus follows.
\end{proof}

\pnonconstantsmalltwo*
\begin{proof}
    We let $\order$ be the ordering of $\unordSet$ 
    obtained from $\bkm$ algorithm, and 
$\ordSet$ be the corresponding sequence. 
Recall that 
\begin{equation}
    \label{eq:ordset-matching}
    \begin{aligned}
    \ordpath(\ordSet) &= \sum_{i=1}^{n-1} \wi \dist{\order{i}, \order{i+1}}. \\
    \end{aligned}
\end{equation}

Since $\itempr^i$ decreases as $i$ increases,
we have $\itempr^i \geq \itempr^{t_n}$, for all $i \leq t_n$. 
Hence, we obtain a simple lower bound on $\wi$ for any $i \leq t_n$. 
\begin{equation}
    \label{eq:wi-lower}
    \begin{aligned}
    \wi &= \sum_{j=i+1}^n \itempr^j 
    \geq \sum_{j=i+1}^{t_n} \itempr^j 
    \geq (t_n -i) \itempr^{t_n}. 
    \end{aligned}
\end{equation}

Recall that, the $\order$ we choose satisfies the following two properties, 
which we are soon going to use to obtain a lower bound of 
$\ordpath(\ordSet)$.
\iffalse
\begin{description}
    \item[(P1)] $d((\order{2i-1},\order{2i}))\geq d((\order{2i+1},\order{2i+2}))$ for $i < \kpar$, and
    \item[(P2)] $d((\order{2i},\order{2i+1})) \geq \frac{1}{2}d((\order{2i-1},\order{2i}))$ for $i \leq \kpar$.
\end{description}
\fi
\begin{description}
    \item[(P1)] $d((\order{2i-1},\order{2i}))\geq d((\order{2i+1},\order{2i+2}))$ for $i < \lfloor \frac{n}{2}\rfloor$, and
    \item[(P2)] $d((\order{2i},\order{2i+1})) \geq \frac{1}{2}d((\order{2i-1},\order{2i}))$ for $i \leq \lfloor \frac{n}{2}\rfloor$.
\end{description}

Let us simplify $\ordpath(\ordSet)$ by applying the above observation.  
In the following analysis, we define $\halfn \coloneq \lfloor(t_n-1)/2\rfloor$.

\begin{equation}
\label{eq:half_k_matching}
\begin{split}
    \ordpath(\ordSet) 
    & \stackrel{(a)}{\geq} \sum_{i=1}^{\halfn} \left(\wiodd \dist{\order{2i-1}, \order{2i}} + \wieven \dist{\order{2i}, \order{2i+1}}\right)\\
    & \stackrel{(b)}{\geq} \sum_{i=1}^{\halfn} \left(\wiodd + \frac{1}{2}\wieven\right) \dist{\order{2i-1}, \order{2i}}\\ 
    & \stackrel{(c)}{\geq} \frac{1}{\halfn} \sum_{i=1}^{\halfn} \left(\wiodd + \frac{1}{2}\wieven\right) \sum_{i=1}^{\halfn} \dist{\order{2i-1}, \order{2i}}\\ 
    & \stackrel{(d)}{\geq} \frac{\itempr^{t_n}}{\halfn} \sum_{i=1}^{\halfn} \left((t_n-2i+1)+ \frac{1}{2}(t_n-2i)\right) \sum_{i=1}^{\halfn} \dist{\order{2i-1}, \order{2i}}\\ 
    & = \frac{\itempr^{t_n}}{\halfn} \sum_{i=1}^{\halfn} \left(-3i+1+ \frac{3}{2}t_n\right) \sum_{i=1}^{\halfn} \dist{\order{2i-1}, \order{2i}}\\ 
    & =  \left(\itempr^{t_n}\right)\left(\frac{-3 -3 \halfn}{2} + 1 + \frac{3}{2}t_n\right) \sum_{i=1}^{\halfn} \dist{\order{2i-1}, \order{2i}}.
\end{split}
\end{equation}

Inequality $(a)$ holds as we decompose $\ordpath(\ordSet)$ into 
two parts, the series of $\wiodd \dist{\order{2i-1}, \order{2i}}$ 
are related with the top-$\halfn$ matching, and the series of $\wieven \dist{\order{2i}, \order{2i+1}}$ are related with 
the edges that connect this matching. 

Inequality $(b)$ holds according to the property \textbf{(P2)}.

Inequality $(c)$ holds as we apply \Cref{lem:chebishev} by property \textbf{(P1)}.

Inequality $(d)$ holds as we reformulate $\wiodd$ and $\wieven$ according to \Cref{eq:wi-lower}.

Let $\ell_{t_n}$ denote the largest weights of 
a path that consists of $t_n$ nodes chosen from $\unordSet$, 
i.e.,
\begin{equation}
\label{eq:l_tn_max}
    \ell_{t_n} = \max_{\pi} \sum_{i=1}^{t_n-1} \dist(\order{i}, \order{i+1}).
\end{equation}

% Notice that as $\{\order{2i-1}, \order{2i}\}_{i=1}^{\halfn}$ is 
% a greedy matching of the first $\halfn$ edges,
% hence, it consists of at least $\frac{1}{4}$ of the largest weights of the
% path that consists of $2\halfn$ edges.  
Notice that as $\{\order{2i-1}, \order{2i}\}_{i=1}^{\halfn}$ is 
a greedy matching of the first $\halfn$ edges, by \Cref{lemma:greedy_half_approx}, it contains at least $\frac{1}{2}$ weights of the maximum weighted size $\halfn$ matching. Hence, it consists of at least $\frac{1}{4}$ of the largest weights of the path that consists of $2\halfn$ edges.  
We can get 
\begin{equation*}
\sum_{i=1}^{\halfn} \dist{\order{2i-1}, \order{2i}} \geq \frac{1}{4} \frac{t_n - 2}{t_n - 1} \ell_{t_n},
\end{equation*}
where the factor $\frac{t_n - 2}{t_n -1}$ is added as 
$2\halfn$ can either be $t_n -1$ or $t_n -2$.

We substitute the above result into \Cref{eq:half_k_matching},
and we get  
\begin{equation}
\label{eq:half_k_matching_last}
\begin{split}
    \ordpath(\ordSet) 
    & \geq  \left(\itempr^{t_n}\right)\left(\frac{-3 -3 \halfn}{2} + 1 + \frac{3}{2}t_n\right) \frac{1}{4} \frac{t_n-2}{t_n-1} \ell_{t_n}\\ 
    &\stackrel{(a)}{>} \left(\itempr^{t_n}\right) \left(\frac{3}{4}t_n -\frac{1}{2}\right) \frac{1}{4} \frac{t_n-2}{t_n-1} \ell_{t_n} \\
    &= \left(\itempr^{t_n}\right)\left(\frac{3}{16} t_n - \frac{1}{8}\right) \frac{t_n-2}{t_n-1} \ell_{t_n},
\end{split}
\end{equation}
where inequality $(a)$ holds by substituting an upper bound $\frac{t_n}{2}$ of $\halfn$ into $\halfn$,

We let $\optorder$ be the optimal ordering for the $\omshp$, 
and $\optordSet$ be the corresponding optimal sequence. 
Hence, 
$$\ordpath(\optordSet) = \sum_{i=1}^{n-1} \woi \dist{\optorder{i}, \optorder{i+1}}.$$

We can obtain a simple upper bound on $\woi$ by
\begin{equation}
    \label{eq:woi-k-upper}
    \begin{split}
    \woi &= \sum_{j=i+1}^{n}\itempr^j \\
    &= \frac{\itempr^{i+1}}{1-\itempr} - \frac{\itempr^{n+1}}{1-\itempr} \\
    &\leq \frac{\itempr^{i+1}}{1 - \itempr} \\
    &\stackrel{(a)}{=} \itempr^{i+1} t_n \\
    &\leq \itempr^{\lfloor \frac{i+1}{t_n}\rfloor \cdot t_n} t_n \:,
    \end{split}
\end{equation}
where equality $(a)$ holds as we can use $\itempr = 1 - \frac{1}{t_n}$. 

By letting $T = \left\lceil \frac{n}{t_n - 1} \right\rceil$ we can derive an upper bound 
of $\ordpath(\optordSet)$ as follows
\begin{equation}
\label{eq_k_bk_opt}
\begin{aligned}
 \ordpath(\optordSet) & \leq \sum_{t=0}^{T-1} \sum_{i = t \cdot (t_n-1)+1}^{(t+1) \cdot (t_n -1)} W_{A^o_i} d(\order^o(i), \order^o(i+1)) \\
 &\stackrel{(a)}{\leq}\sum_{t=0}^{T-1} \sum_{i = t \cdot (t_n-1)+1}^{(t+1)\cdot (t_n -1)} p^{t \cdot t_n} t_n d(\pi^o(i), \pi^o(i+1))\\
    &\stackrel{(b)}{\leq} t_n \cdot \ell_{t_n} \sum_{t=0}^{T-1}  \itempr^{t\cdot t_n} \\
    & \leq \frac{t_n \cdot \ell_{t_n}}{1 - \itempr^{t_n}}.
\end{aligned}
\end{equation}
where inequality $(a)$ holds because of \Cref{eq:woi-k-upper} and inequality $(b)$ holds because of \Cref{eq:l_tn_max}. We set $d(\order^o(i), \order^o(i+1)) = 0$ for $i \geq n$ for the first inequality to hold.

Combining Equations~(\ref{eq:half_k_matching_last}) and (\ref{eq_k_bk_opt}) leads to 
% \begin{align*}
%         \frac{\ordpath(\optordSet)}{\ordpath(\ordSet)} & \leq \frac{\frac{t_n \cdot \ell_{t_n}}{1 - \itempr^{t_n}}}{(\itempr^{t_n})\cdot (\frac{3}{16} t_n - \frac{1}{8}) \frac{t_n-2}{t_n-1} \ell_{t_n}} \\
%         &\stackrel{(a)}{=} \frac{1}{(1 - \frac{1}{e})\frac{1}{e} \frac{3}{16} - \Theta(\frac{1}{t_n})} \\
%         &= \frac{16 e^2}{3(e-1)} + \mathcal{O}(\frac{1}{t_n}),
% \end{align*}
\begin{align*}
    & \frac{\ordpath(\ordSet)}{\ordpath(\optordSet)} \geq \frac{(\itempr^{t_n}) \cdot \left(\frac{3}{16} t_n - \frac{1}{8}\right) \frac{t_n-2}{t_n-1} \ell_{t_n}}{\frac{t_n \cdot \ell_{t_n}}{1 - \itempr^{t_n}}} \\
    & = \itempr^{t_n} (1 - \itempr^{t_n}) \cdot \left(\frac{3}{16}  - \frac{1}{8t_n}\right) \frac{t_n-2}{t_n-1} \\
    & = \itempr^{t_n} (1 - \itempr^{t_n}) \cdot \left(\frac{3}{16}  - \Theta\left(\frac{1}{t_n}\right)\right) \left(1 - \Theta\left(\frac{1}{t_n}\right)\right) \\
    &\stackrel{(a)}{=} \left(\frac{e-1}{e} + \Theta\left(\frac{1}{t_n}\right)\right)\left(\frac{1}{e} - \Theta\left(\frac{1}{t_n}\right)\right) \left(\frac{3}{16}  - \Theta\left(\frac{1}{t_n}\right)\right) \left(1 - \Theta\left(\frac{1}{t_n}\right)\right) \\
    & = \frac{3(e-1)}{16 e^2} - \Theta\left(\frac{1}{t_n}\right),
\end{align*}
where $(a)$ holds as we apply \Cref{lem:laurent}, and 
use 
\[
\itempr^{t_n} = (1 - \frac{1}{t_n})^{t_n}= \frac{1}{e} - \Theta\left(\frac{1}{t_n}\right).
\]
Thus we have proven that $\bkm$ is a 
$\left(\frac{3(e-1)}{16 e^2} - \Theta \left(\frac{1}{t_n}\right)\right)$-approximation algorithm.

\end{proof}

%
%\matchingtime*
%\begin{proof}
%An optimal matching of any graph $G$ can be constructed in $O(|V|^3)$ time by Edmonds' algorithm \cite{lovasz2009matching}. We show in the follows the optimal $\kpar$ matching of $G$ can be obtained by applying Edmonds' algorithm on an extended graph of $G$.
%
%Given $G = (V, E)$, We can construct a new graph $G'$ by adding $n-2\kpar$ new vertices to $G$; we connect each of these $n-2\kpar$ vertices with all vertices in $G$ and assign the same constant weight $\sum_{e\in E } w(e)$ to these new edges. 
%Since an optimal matching in $G'$ will have $n-2\kpar$ of the new edges and at most $\kpar$ edges from the original edges of $G$. It is not hard to see that these $\kpar$ edges form the optimal $\kpar$ matching of $G$. Thus we proved the optimal $\kpar$ matching of $G$ can be constructed in $\mathcal{O}((|2V| - \kpar )^3) = \mathcal{O}(|V|^3)$ time. 
%\end{proof}

% \bigskip

\subsection{The Case of an Arbitrary Ranking}
\label{section:uniform:large:1}

Let us consider a very special case where the user examines all the items.
In particular, we assume that $\lim_{n \rightarrow \infty}\itempr = 1$ and 
$\lim_{n \rightarrow \infty} \itempr^n = c$, where $c$ is a non-zero constant. 
For this special case, 
we notice that the order of items no longer matters --- 
even an arbitrary order will provide a constant-factor approximation to our problem. 
Intuitively, this is because for such a large value of \itempr
the user will examine all the items of \unordSet. 
Thus, we can apply \Cref{lem:eqform}, 
and obtain a constant-factor approximation algorithm 
for problem \omsd directly, 
without considering \omshp.

\begin{restatable}{theorem}{pnonconstantlarge}
\label{thm:p-nonconstant-large}
Assume that $\itempr{i} = \itempr$, for all $i \in \unordSet$.
Moreover, assume that $\lim_{n\rightarrow \infty} \itempr = 1$ 
and $\itempr^n = c - \frac{1}{\Theta(n)}$, where $c$ is a constant. 
Then any ordering of the items of \unordSet yields a 
$\left(c - \frac{1}{\Theta(n)}\right)$-approximation 
for the \omsd problem.
\end{restatable}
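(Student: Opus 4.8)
The plan is to exploit that in this regime the user examines the entire item set $\unordSet$ with probability $\itempr^n$, which is bounded away from $0$; hence every ordering already realizes essentially all the diversity that is available, and there is nothing substantial left to optimize. Write $D \coloneqq \divfsum(\unordSet) = \sum_{i,j\in\unordSet}\dist{i,j}$ for the total pairwise diversity of all items. If $D=0$ the objective is identically zero and every ordering is trivially optimal, so I would treat $D>0$ from here on.

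The argument then rests on two simple sandwiching inequalities. For an \emph{upper} bound on the optimum, I would use that sum diversity is monotone: $\divfsum(S) = \sum_{i,j\in S}\dist{i,j} \le \divfsum(\unordSet) = D$ for every $S\subseteq\unordSet$, since the distances are non-negative. As $\osdo(\ordSet) = \Exp_{\randomordSet\sqsubseteq\ordSet}[\divfsum(\randomordSet)]$ is an average of such values, this gives $\osdo(\ordSet)\le D$ for every ordering, in particular for the optimal $\ordSet^{*}$. For a \emph{lower} bound on an arbitrary ordering $\ordSet$, I would expand the expectation over prefixes, $\osdo(\ordSet) = \sum_{k=0}^n \pr(\randomordSet = \ordSet{k})\,\divfsum(\ordSet{k})$; all summands are non-negative, so $\osdo(\ordSet)$ is at least the $k=n$ term, which by \Cref{obs:prob-of-A} equals $\bigl(\prod_{i=1}^n \itempr{\order{i}}\bigr)\divfsum(\ordSet{n}) = \itempr^n D$, using that $\ordSet{n}$ contains every item so $\divfsum(\ordSet{n}) = D$.

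Putting the two bounds together with the hypothesis $\itempr^n = c - \tfrac{1}{\Theta(n)}$, for every ordering $\ordSet$ we obtain
\[
\frac{\osdo(\ordSet)}{\osdo(\ordSet^{*})} \;\ge\; \frac{\itempr^n D}{D} \;=\; \itempr^n \;=\; c - \frac{1}{\Theta(n)},
\]
which is exactly the claimed approximation guarantee. I do not anticipate a real obstacle: the whole proof is these two one-line estimates. The only points needing care are invoking \Cref{obs:prob-of-A} correctly for the event that the user accepts all items, and setting aside the degenerate case $D=0$ so that the final ratio is well defined. It is also worth remarking that this is precisely the regime in which the coefficients $\itempr{\ordSet{i}}$ appearing in $\osdo$ do not decay (cf.\ \Cref{lem:geo-sum-constant-simple}), which explains at a glance why the ordering becomes irrelevant here.
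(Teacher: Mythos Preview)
Your proposal is correct and follows essentially the same sandwich argument as the paper: both bound $\osdo(\ordSet^{*})$ above by the total pairwise diversity $D$ and bound $\osdo(\ordSet)$ below by $\itempr^{n}D$, yielding the ratio $\itempr^{n}=c-\tfrac{1}{\Theta(n)}$. The only cosmetic difference is that the paper reaches the lower bound via the reformulation in \Cref{lem:eqform} (replacing each coefficient $\itempr^{i+1}$ by $\itempr^{n}$), whereas you obtain it more directly by keeping only the $k=n$ term of the expectation and invoking \Cref{obs:prob-of-A}; the two routes are equivalent.
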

\begin{proof}
    We let $\order$ be any ordering of $\unordSet$, and 
$\ordSet$ be the corresponding sequence. 
    We let $\orderstar$ be the optimal order for the problem $\omsd$, 
    and let $\ordSetstar$ be the corresponding sequence.
    We have 
    \begin{equation*}
        \begin{aligned}
            \osdo(\ordSet) &= \sum_{i=1}^{n-1} \itempr{\ordSet{i+1}} \dist{\order{i+1}, \ordSet{i}} \\
            &\geq \sum_{i=1}^{n-1} \left(c - \Theta\left(\frac{1}{n}\right)\right) \dist{\order{i+1}, \ordSet{i}} \\
            &= \left(c - \Theta\left(\frac{1}{n}\right)\right) \sum_{i=1}^{n-1}  \dist{\order{i+1}, \ordSet{i}} \\
            &\stackrel{(a)}{\geq} \left(c - \Theta\left(\frac{1}{n}\right)\right)  \osdo(\ordSetstar),
        \end{aligned}
    \end{equation*}
    where inequality $(a)$ holds because $\osdo(\ordSetstar)$ is not larger than the sum of all pairs of diversities. 

    Hence, 
    \begin{equation*}
        \begin{aligned}
            \frac{\osdo(\ordSet)}{\osdo(\ordSetstar)} \geq c - \frac{1}{\Theta(n)}. 
        \end{aligned}
    \end{equation*}
\end{proof}

\section{Omitted Proofs from Section \ref{section:nonuniformcase}}

\approbkenonuniform*

\begin{proof}
\label{appendix:proofs_of_thm:appro_bke_nonuniform}
Let $\order$ be the ordering of $\unordSet$ obtained from our algorithm. 
Let $\ordSet_{\order} = (\order{i})_{i=1}^n$ denote the ordered sequence according to $\order$ and write $\ordSet = \ordSet_{\order}$.
Recall that $$\ordpath(\ordSet) = \sum_{i=1}^{n-1} \wi \dist{\order{i}, \order{i+1}}.$$

Let $\optorder$ be the optimal ordering for the $\omshp$. Let $\ordSet_{\optorder} = (\optorder{i})_{i=1}^n$ and write $\optordSet = \ordSet_{\optorder}$.
For the optimal ordering $\optordSet$, let  $\ordpath(\optordSet) = \sum_{i=1}^{n-1} \woi \dist{\optorder{i}, \optorder{i+1}}$. 

Recall that 
\[
\wi := \sum_{j =i+1}^n \itempr{\ordSet{j}}.
\]

\iffalse 
For the subsequent analysis, we partition the ordering $\ordSet$ into segments, each containing $\kpar-1$ consecutive items. Specifically, the first segment comprises $\order{1}, \ldots, \order{\kpar-1}$, the second segment comprises $\order{\kpar}, \ldots, \order{2{\kpar}-1}$, and so forth. In general, the ordering $\ordSet$ is divided into $\numseg = \lceil \frac{n}{\kpar-1}\rceil$ segments, where the $t$-th segment consists of items $\order{(t-1)(\kpar-1)+1}$ to $\order{\min(t(\kpar-1),n-1)}$, for $t \in [1, T]$. The last segment of $\ordSet$ may contain fewer than $\kpar-1$ items. The same segmentation is applied to $\optordSet$.

For clarity in the subsequent proof, we denote the starting index of segment $t$ as $s_t$, replacing $(t-1)(\kpar-1)+1$, and the ending index as $e_t$, replacing $\min(t(\kpar-1), n-1)$, for $t \in [1, \numseg]$.

For each $\wi$, we can decompose $\wi$ into two parts with the equation that $\wi = \xxi + \yi$. 
In detail, notice that for any fixed $i$, it falls into one and only one segment of $\ordSet$ we define above. 
Let us assume that $s_t \leq i < e_t$, namely, $i$ is in the $t$-th segment of $\ordSet$. 
We then define that $\xxi = \sum_{j = i+1}^{e_t} \itempr{\ordSet{j}}$ and $\yi = \sum_{j = e_t +1}^{n} \itempr{\ordSet{j}}$.

Similarly, we can decompose $\woi$ into $\xoi$ and $\yoi$ with $\woi = \xoi + \yoi$. 
\fi 

Hence,
\begin{align*}
    \ordpath(\ordSet) & = \sum_{i=1}^{n-1} \wi \dist{\order{i}, \order{i+1}} \\
    &\geq \sum_{i=1}^{\kpar-1} \wi \dist{\order{i}, \order{i+1}}\\
    &\geq \sum_{i=1}^{\kpar-1} \sum_{j=i+1}^{\kpar} \itempr{\ordSet{j}} \dist{\order{i}, \order{i+1}}.
    % & \geq \sum_{i=1}^{\kpar-1} \xxi \dist{\order{i}, \order{i+1}} 
\end{align*}

We define 
\[
\bestk = \sum_{i=1}^{\kpar-1} \sum_{j=i+1}^{\kpar} \itempr{\ordSet{j}} \dist{\order{i}, \order{i+1}}.
\]
It is not hard to see that $\bestk$
is the function value maximized in \Cref{def:bestkedge_nonuniform}. 
By setting $T = \left\lceil \frac{n-1}{\kpar-1} \right\rceil$, we can rewrite $\ordpath(\optordSet)$ as follows
% , 
% and we reformulate the summation $\sum_{i=1}^{n-1}$ into $\sum_{t=1}^T \sum_{i = (t-1)(\kpar-1) +1}^{t(\kpar-1)}$. 
% divide the $n-1$ edges into $T$ segements. 
\begin{equation*}
    \begin{aligned}
    \ordpath&(\optordSet) = \sum_{i=1}^{n-1} \woi \dist{\optorder{i}, \optorder{i+1}}\\
    &= \sum_{i=1}^{n-1} \sum_{j=i+1}^{n} \itempr{\optordSet{j}} \dist{\optorder{i}, \optorder{i+1}} \\
    &= \sum_{t=1}^T \sum_{i = (t-1)(\kpar-1)+1}^{t(\kpar-1)} \sum_{j= i+1}^n \itempr{\optordSet{j}} \dist{\optorder{i}, \optorder{i+1}} \\
    &= \sum_{t=1}^T \sum_{i = (t-1)(\kpar-1)+1}^{t(\kpar-1)} 
            \left(\sum_{j= i+1}^{t\cdot \kpar} \itempr{\optordSet{j}} + \sum_{j= t\cdot \kpar+1}^n \itempr{\optordSet{j}}\right) 
                \dist{\optorder{i}, \optorder{i+1}} \\
    \end{aligned}
\end{equation*}

Notice that for each fixed $t$, $\ordpath(\optordSet)$ is the sum of 
two parts: 
\begin{equation}
\label{equation:term_part_one}
\sum_{i = (t-1)(\kpar-1)+1}^{t(\kpar-1)} \sum_{j= i+1}^{t \cdot \kpar} \itempr{\optordSet{j}} \dist{\optorder{i}, \optorder{i+1}}
\end{equation}
\begin{equation}
\label{equation:term_part_two}
\sum_{i = (t-1)(\kpar-1)+1}^{t(\kpar-1)} \sum_{j= t \cdot \kpar + 1}^{n} \itempr{\optordSet{j}} \dist{\optorder{i}, \optorder{i+1}}
\end{equation}

Let us first discuss the part (\ref{equation:term_part_one})
\begin{equation}
    \label{eq:first}
    \begin{aligned}
        \sum_{i = (t-1)(\kpar-1)+1}^{t (\kpar-1)} & 
            \sum_{j= i+1}^{t \cdot \kpar} \itempr{\optordSet{j}} \dist{\optorder{i}, \optorder{i+1}} \\
        &\stackrel{(a)}{\leq} \bestk \itempr{\optordSet{(t-1)(\kpar-1)}}  \\
        &\leq \bestk b^{(t-1)(\kpar-1)}.
    \end{aligned}
\end{equation}
% \honglian{More details will be helpful}

Similarly, for part (\ref{equation:term_part_two})
\begin{equation}
    \label{eq:second}
    \begin{aligned}
        \sum_{i = (t-1)(\kpar-1)+1}^{t(\kpar-1)} & 
            \sum_{j= t \cdot \kpar+1}^{n} \itempr{\optordSet{j}} \dist{\optorder{i}, \optorder{i+1}} \\
        &\stackrel{(b)}{\leq} \sum_{i = (t-1)(\kpar-1)+1}^{t(\kpar-1)} \sum_{j= t \cdot \kpar+1}^{n} b^j \dist{\optorder{i}, \optorder{i+1}}  \\
        &\leq \sum_{i = (t-1)(\kpar-1)+1}^{t(\kpar-1)} \frac{b^{t \cdot \kpar+1}}{1 - b} \dist{\optorder{i}, \optorder{i+1}} \\
        &\leq (\kpar-1) d_{\max} \frac{b^{t \cdot \kpar + 1}}{1 - b},
    \end{aligned}
\end{equation}
where $d_{\max} \coloneqq \max_{u,v \in \unordSet} \dist{u,v}$.
Inequality $(a)$ holds by the definition of $\bestk$ and our algorithm, 
% $L \geq \sum_{i = s_t}^{e_t} \frac{\xoi}{\itempr{\optordSet{s_t -1}}} \dij$. 
and inequality $(b)$ holds because $\itempr{\optordSet{j}} \leq b^{j}$.
% and $\yoi \leq \sum_{j=e_t + 1}^{n} b^j$.

We combine Equations~(\ref{eq:first}) and (\ref{eq:second}) 
and we continue to bound $\ordpath(\optordSet)$:
\begin{equation}
    \label{eq:upper-ordpath-nonuniform}
    \begin{aligned}
        \ordpath(\optordSet) &\leq \sum_{t = 1}^T \left(\bestk b^{(t-1)(\kpar-1)} + (\kpar-1) d_{\max} \frac{b^{t \cdot \kpar + 1}}{1 - b}\right) \\
        &\leq \frac{\ell_{\kpar}}{1 - b^{\kpar-1}} + (\kpar-1) d_{\max} \frac{b^{\kpar+1}}{(1-b)(1-b^{\kpar})},
    \end{aligned}
\end{equation}

% Hence, 
% \begin{equation}
% \label{eq:to_bound_error}
%     \frac{\ordpath(\optordSet)}{\ordpath(\ordSet)} \leq \frac{1}{1-b^{\kpar-1}} + \frac{(\kpar-1) d_{\max}}{\bestk}\cdot \frac{b^{\kpar+1}}{(1-b)(1-b^{\kpar})}.
% \end{equation}

Next, we derive an upper bound of the second term of \Cref{eq:upper-ordpath-nonuniform}
% Let $d_{min} \coloneqq \min_{u,v \in \unordSet} \dist{u,v}$, then it holds that
in terms of $\bestk$. 
Let $\orderhat$ be any order, which places the edge with edge weight $d_{\max}$ 
to the first two positions.
Let $\hat{\ordSet}$ be the corresponding sequence. 
We have:
\begin{equation}
    \begin{aligned}
        \bestk &\stackrel{(a)}{\geq} \ell_2
        \stackrel{(b)}{\geq} \itempr{\hat{\ordSet{2}}} \dist{\orderhat{1}, \orderhat{2}} 
        \geq a^2 d_{\max}.
    \end{aligned}
\end{equation}

Notice that inequality $(a)$ holds as $\bestk$ should consist of at least one edge, 
and inequality $(b)$ holds because of the optimality of $\bestk$.
Thus,

% \begin{equation*}
%     \begin{aligned}
%         \frac{\ordpath(\optordSet)}{\ordpath(\ordSet)}  &\leq \frac{1}{1-b^{\kpar-1}} + \frac{(\kpar-1)}{a^2} \frac{b^{\kpar+1}}{(1-b)(1-b^{\kpar})} \\
%         & \leq \frac{a^2 + (\kpar-1) b^{\kpar+1} }{a^2(1 - b)(1-b^{\kpar-1}) }.\\
%         % &\leq \frac{a^2 + (\kpar-1) b^{\kpar+1} }{a^2(1 - b - b^{\kpar-1} - b^\kpar)} \\
%         % &\leq \frac{a^2 + \kpar b^\kpar}{a^2 (1-b - 2b^{\kpar-1})}.
%     \end{aligned}
% \end{equation*}

\begin{equation*}
    \begin{aligned}
        \frac{\ordpath(\ordSet)}{\ordpath(\optordSet)}  &\geq \frac{1}{\frac{1}{1-b^{\kpar-1}} + \frac{\kpar-1}{a^2} \cdot \frac{b^{\kpar+1}}{(1-b)(1-b^{\kpar})}}\\
        & = \frac{a^2 (1-b^{\kpar-1}) (1-b) (1-b^{\kpar})}{a^2(1-b)(1-b^{\kpar})+ (\kpar-1) b^{\kpar+1} (1-b^{\kpar-1})}\\
        & >  \frac{a^2 (1-b^{\kpar-1}) (1-b) (1-b^{\kpar})}{a^2(1-b)(1-b^{\kpar})+ (\kpar-1) b^{\kpar+1} (1-b^{\kpar})} \\
        & =  \frac{a^2 (1-b^{\kpar-1}) (1-b) }{a^2(1-b)+ (\kpar-1) b^{\kpar+1} }\\
        & > \frac{a^2(1 - b)(1-b^{\kpar-1}) }{a^2 + (\kpar-1) b^{\kpar+1} }
    \end{aligned}
\end{equation*}
This completes the proof.
\end{proof}

\section{Ommited Experimental Settings and Results}

\subsection{Omitted Baseline}
\para{\explore \cite{coppolillo2024relevance}}. The \explore algorithm iteratively generates a size-$k$ recommendation list through a greedy selection process, followed by a simulation to collect the items accepted by the user. In each iteration, \explore selects the top $k$ items from $\unordSet \setminus R$ that maximizes the following expression:
\begin{equation*}
    [\itempr{i}^{-\alpha} + d(i, R)^{-\alpha} -1] ^ {-1/ \alpha}.
\end{equation*}
After presenting the recommendation list to the user, they either accept one of the items and proceed to the next iteration, or they quit with certain probability. If the user does not quit, \explore adds the accepted item to $R$ and continues the process in the next iteration.

\explore studies a setting distinct from ours, as outlined previously. In \explore, multiple recommendation lists are generated, and users can select one item from each list. In contrast, our setting involves a single ordering, where users examine items sequentially in the predefined order. To adapt \explore to our setting, we propose two methods for generating the ordering.
\begin{itemize} 
\item The first method treats the set $R$ as the ordered sequence, preserving the order in which items are included, and appends the remaining items from $\unordSet \setminus R$ randomly to the end of the sequence. 
\item The second method concatenates the lists generated during each iteration to form the sequence. If the user exits, the remaining items are appended randomly to the sequence. \end{itemize}

For parameter tuning of the \explore algorithm, we set $\alpha = 0.5$ as recommended in \cite{coppolillo2024relevance}, and vary the length of the recommendation list, $k$, over the set $[1, 5, 10, 20]$. Additionally, we vary the number of expected exploration steps over the set $[5, 10, 20]$. We evaluate the sequential diversity value and report the best performance for both adaptation methods along with the corresponding parameter settings.

\subsection{More Experimental Results}

We report the running time of our proposed algorithms and the baselines in \Cref{table:runtime}.

\begin{table*}[t]
\setlength{\tabcolsep}{2.2pt}
\caption{Run times of proposed algorithms and baselines (sec). The results marked with $^{*}$ are obtained using the \bkeh heuristic}
\label{table:runtime}
	\vspace{-3mm}
\begin{small}
\begin{tabular}{lccccccccccc}
\toprule
   & \Random & \explore & \DUM    & \MSD    & \MMR    & \DPP    & \bketwo & \bkethree & \bkefour & \Sbkethree & \Sbkefour \\
\midrule
\Coat        & 0.83    & 2.81     & 0.97     & 1.08     & 1.24     & 0.96     & 1.41     & 4.70*     & 53.03*  & 6.13*&  101.83* \\
\KuaiRec   & 21.00   & 134.82    & 25.40    & 43.92    & 58.37    & 106.47   & 198.61   & 55.27*    & 283.62*   & 79.03*& 294.47* \\
\Netflix    & 10.76  & 180.60   & 14.51    & 14.71    & 14.51    & 50.67    & 181.57   & 72.79*    & 866.86*  & 72.61*& 1322.30* \\
\Movielens   & 40.12  & 787.52    & 68.91    & 135.89   & 87.50    & 561.19   & 929.23   & 154.21*   & 1121.35*  & 172.44*  & 1248.21* \\
\Yahoo     & 53.78   & 1922.62    & 132.07   & 252.97   & 125.09   & 1316.29  & 2089.38  & 388.58*   & 3807.37*  & 399.16* & 1448.03*  \\
\letor  & 0.17 & 9.90 & 2.71 & 1.94 & 1.74 & 2.12 & 0.93 & 10.05 & 873.70 & 10.07 & 772.58 \\
\ltrc  & 0.45 & 28.30 & 6.52 & 4.98 & 4.97 & 5.95 & 2.24 & 42.11 & 3584.67 & 36.15 & 3888.68  \\
\bottomrule
\end{tabular}
\end{small}
\end{table*}

\end{document}